\Crefname{corollary}{Corollary}{Corollaries}
\Crefname{theorem}{Theorem}{Theorems}
\Crefname{lemma}{Lemma}{Lemmas}
\Crefname{corollaryAuxCoq}{Corollary}{Corollaries}
\Crefname{theoremAuxCoq}{Theorem}{Theorems}
\Crefname{lemmaAuxCoq}{Lemma}{Lemmas}
\Crefname{corollaryAux}{Corollary}{Corollaries}
\Crefname{theoremAux}{Theorem}{Theorems}
\Crefname{lemmaAux}{Lemma}{Lemmas}
\begin{document}

\title{Oracle Computability and Turing Reducibility\\in the Calculus of Inductive Constructions\thanks{Yannick Forster received funding from the European Union’s Horizon 2020 research and innovation programme under the Marie Skłodowska-Curie grant agreement No.\ 101024493. Dominik Kirst is supported by a Minerva Fellowship of the Minerva Stiftung Gesellschaft fuer die Forschung mbH.}}

\titlerunning{Oracle Computability and Turing Reducibility in CIC}

\author{}
\institute{}
\author{Yannick Forster\inst{1}\orcidID{0000-0002-8676-9819} \and\\
  Dominik Kirst\inst{2,3}\orcidID{0000-0003-4126-6975} \and\\
  Niklas Mück\inst{3}\orcidID{0009-0006-9622-0762}}
\authorrunning{Y.\ Forster, D.\ Kirst, N.\ Mück}
\institute{Inria, LS2N, Université Nantes, France\\
  \email{yannick.forster@inria.fr}
  \and
  Ben-Gurion University of the Negev, Beer-Sheva, Israel\\
  \email{kirst@cs.bgu.ac.il}
  \and
  Saarland University and MPI-SWS, Saarland Informatics Campus, Saarbrücken, Germany\\
  \email{s8nimuec@stud.uni-saarland.de}
}

\maketitle              %
\setcounter{footnote}{0}
\begin{abstract}
  We develop synthetic notions of oracle computability and Turing reducibility in the Calculus of Inductive Constructions (CIC), the constructive type theory underlying the Coq proof assistant.
  As usual in synthetic approaches,
  we employ a definition of oracle computations based on
  meta-level functions rather than object-level models of computation,
  relying on the fact that in constructive systems such as CIC all definable functions are computable by construction.
  Such an approach lends itself well to machine-checked proofs, which we carry out in Coq.

  There is a tension in finding a good synthetic rendering of the higher-order notion of oracle computability.
  On the one hand, it has to be informative enough to prove central results, ensuring that all notions are faithfully captured.
  On the other hand, it has to be restricted enough to benefit from axioms for synthetic computability, which usually concern first-order objects.
  Drawing inspiration from a definition by Andrej Bauer based on continuous functions in the effective topos, we use a notion of sequential continuity to characterise valid oracle computations.

  As main technical results, we show 
  that Turing reducibility forms an upper semilattice, transports decidability, and is strictly more expressive than truth-table reducibility,
  and prove that whenever both a predicate $p$ and its complement are semi-decidable relative to an oracle $q$, then $p$ Turing-reduces to $q$.
\keywords{Type theory \and Logical foundations \and Synthetic computability theory \and Coq proof assistant}
\end{abstract}

\section{Introduction}

In recent years, synthetic computability theory~\cite{richman1983church,bridges1987varieties,BauerSyntCT,bauer2017fixed} has gained increasing attention in the fields of constructive mathematics and interactive theorem proving~\cite{forster2019synthetic,forster2020churchs,forster2022parametric,kirst2023godel,swan2019church,swanoracle}.
In contrast to the usual analytic approach based on describing the functions considered computable by means of a model like Turing machines, $\mu$-recursive functions, or the $\lambda$-calculus, the synthetic approach exploits that in a constructive setting no non-computable functions can be defined in the first place, making a later description of the computable fragment obsolete.
This idea enables much more compact definitions and proofs, for instance decidability of sets over $\nat$ can be expressed by equivalence to functions $f\of{\nat\to\bool}$ without any further computability requirement regarding $f$, simplifying a formal mathematical development and being the only approach enabling a feasible mechanisation using a proof assistant.
Concerning the logical foundations of programming, in constructive type theories such as the Calculus of Inductive Constructions (CIC)~\cite{coquand1986calculus,paulin1993inductive,paulin2015introduction} underlying the Coq proof assistant~\cite{Coq}, synthetic computability is especially natural: as CIC embodies a dependently-typed functional programming language, every definable function conveys its own executable implementation.

Despite the fruitful use of the synthetic approach to describe basic concepts in computability theory, the characterisation of oracle computations in general (i.e.\ algorithms relative to some potentially non-computable subroutine) and Turing reductions in particular (i.e.\ decision procedures relative to some oracle giving answer to a potentially non-decidable problem) proves more complicated.
First, a Turing reduction cannot naively be described by a transformation of computable decision procedures $\nat\to\bool$ as this would rule out the intended application to oracles for problems that can be proved undecidable using usual axioms of synthetic computability such as Church's thesis (CT).
Secondly, when instead characterising Turing reductions by transformations of possibly non-computable decision procedures represented as binary relations $\nat\to\bool\to\Prop$, one has to ensure that computability is preserved in the sense that computable oracles induce computable reductions in order to enable intended properties like the transport of (un-)decidability.
Thirdly, to rule out exotic reductions whose behaviour on non-computable oracles differs substantially from their action on computable oracles, one needs to impose a form of continuity.

The possible formulations of continuity of functionals on partial spaces such as $\nat\to\bool\to\Prop$ are numerous: Bauer~\cite{bauer2020Wisc}, who gave the first synthetic definition of oracle computability we draw our inspiration from, employs the order-theoretic variant of functionals preserving suprema in directed countable partial orders.
Forster~\cite{forster2022thesis} describes a reformulation to CIC in joint work with Kirst, using a modified variant of modulus-continuity where every terminating oracle computation provides classical information about the information accessed from the oracle.
Another preliminary suggestion due to Forster and Kirst~\cite{types22abstract} uses a more constructive formulation of modulus-continuity, allowing to establish Post's theorem connecting the arithmetical hierarchy with Turing degrees~\cite{types22kleenepost}.
However, their proof assumes an enumeration of all (higher-order) oracle computations which seems not to follow from CT, therefore leaving the consistency status of the assumption unclear.

As a remedy to this situation, we propose an alternative synthetic characterisation of oracle computability based on a stricter notion of sequential continuity, loosely following van Oosten~\cite{vanOosten2011-VANPCA-2}.
While this concept naturally describes the functionals considered computable by emphasising the sequence of computation steps interleaved with oracle interactions, it immediately yields the desired enumeration from CT by reducing higher-order functionals on partial spaces to partial first-order functions on mere data types.
Concretely, 
in this paper we develop the theory of oracle computability as far as possible without any axioms for synthetic computability: we show %
that Turing reducibility forms an upper semilattice, transports decidability, and is strictly more expressive than truth-table reducibility,
and prove that whenever both a predicate $p$ and its complement are semi-decidable relative to an oracle $q$, then $p$ Turing-reduces to $q$.\footnote{The non-relativised form of the latter statement also appears under the name of ``Post's theorem'' in the literature~\cite{troelstra1988constructivism}, not to be confused with the mentioned theorem regarding the arithmetical hierarchy, see the explanation in \Cref{sec:PT}.}
All results are mechanised in Coq, both to showcase the feasibility of the synthetic approach for machine-checked mathematics and as base for future related mechanisation projects.

For easy accessibility, the Coq development\footnote{\href{https://github.com/uds-psl/coq-synthetic-computability/tree/code-paper-oracle-computability}{\texttt{https://github.com/uds-psl/coq-synthetic-computability/}} \href{https://github.com/uds-psl/coq-synthetic-computability/tree/code-paper-oracle-computability}{\texttt{tree/code-paper-oracle-computability}}} is seamlessly integrated with the text presentation: every formal statement in the PDF version of this paper is hyperlinked with HTML documentation of the Coq code.
To further improve fluid readability, we introduce most concepts and notations in passing, but hyperlink most definitions in the PDF with the glossary in \Cref{sec:glossary}.
\vspace{-0.8\baselineskip}

\paragraph{Contribution}
We give a definition of synthetic oracle computability in constructive type theory and derive notions of Turing reducibility and relative semi-decidability.
We establish basic properties of all notions,
most notably that Turing reducibility forms an upper semi-lattice,
transports decidability if and only if Markov's principle holds,
and is strictly more general than truth-table reducibility.
We conclude by a proof of Post's theorem relating decidability with semi-decidability of a set and its complement.
\vspace{-0.8\baselineskip}

\paragraph{Outline}
We begin by introducing the central notion of synthetic oracle computability in \Cref{sec:oracle}, employed in \Cref{sec:turing} to derive synthetic notions of Turing reducibility and oracle semi-decidability.
Before we discuss their respective properties (\Cref{sec:prop_sdec,sec:prop_turing}) and show that Turing reducibility is strictly weaker than a previous synthetic rendering of truth-table reducibility (\Cref{sec:tt}), we develop the basic theory of synthetic oracle computations by establishing their closure properties (\Cref{sec:closure}) and by capturing their computational behaviour (\Cref{sec:functions}).
Some of these closure properties rely on a rather technical alternative characterisation of oracle computability described in \Cref{sec:forms}, which will also be used to establish the main result relating oracle semi-decidability with Turing reducibility discussed in \Cref{sec:PT}.
We conclude in \Cref{sec:conclusion} with remarks on the Coq formalisation as well as future and related work.

\section{Synthetic Oracle Computability}
\label{sec:oracle}

The central notion of this paper is the synthetic definition of oracle computability.
Historically, oracle computability was introduced as an extension of Turing machines in Turing's PhD thesis~\cite{turing1939systems},
but popularised by Post~\cite{post1944recursively}.
Various analytic definitions of oracle computability exist, all having in common that computations can ask questions and retrieve answers from an oracle.

For our synthetic definition, we specify concretely when a higher-order functional $F\of{(Q \to A \to \Prop) \to (I \to O \to \Prop)}$
is considered (oracle-)computable.
Such a functional takes as input a possibly non-total binary relation $R\of{Q\to A\to \Prop}$, an oracle relating questions $q\of{Q}$ to answers $a\of{A}$, and yields a computation relating inputs $i\of{I}$ to outputs $o\of{O}$.
For special cases like Turing reductions, we will instantiate $Q,I:=\nat$ and $A,O:=\bool$.
Note that we do not require oracles $R$ to be deterministic, but if they are, then so are the resulting relations $F R$ (cf.\ \Cref{coq:interrogation_output_det}).

We define oracle computability by observing that a terminating computation with oracles has a sequential form:
in any step of the sequence, the oracle computation can ask a question to the oracle, return an output, or diverge.
Informally, we can enforce such sequential behaviour by requiring that every terminating computation $F R\,i\,o$ can be described by (finite, possibly empty) lists $\qs\of{\List Q}$ and $\ans\of{\List A}$ such that from the input $i$ the output $o$ is eventually obtained after a finite sequence of steps, during which the questions in $\qs$ are asked to the oracle one-by-one, yielding corresponding answers in $\ans$.
This computational data can be captured by a partial\footnote{There are many ways how semi-decidable partial values can be represented in CIC, for instance via step-indexing. Since the actual implementation does not matter, we abstract over any representation providing the necessary operations, see \Cref{def:partial}.} function of type $\ofbox{I \to \List{A} \pto \sumt Q O}$, called the (computation) tree of $F$, that on some input and list of previous answers either returns the next question to the oracle, returns the final output, or diverges.

So more formally, we call $F\of{(Q \to A \to \Prop) \to (I \to O \to \Prop)}$ 
an (oracle-)computable functional if
there is a  tree $\tau \of{I \to \List{A} \pto \sumt Q O}$ such that
\[ \forall R\,i\,o.\;F R\,i\,o ~\leftrightarrow~ \exists \qs \; \ans. ~ \interrogate {\tau i} R \qs \ans \,\land\, \tau\,i\,\ans \hasvalue \inr o  \]
with  the interrogation relation $\sigma ; R \vdash \qs ; \ans$ being defined inductively by
\label{intro:interrogate}
\begin{mathpar}
  \infer{~}{\interrogate \sigma R {[]} {[]}}

  \infer{\interrogate \sigma R \qs \ans \and \sigma \ans \hasvalue \inl q \and R q a}
    {\interrogate \sigma R {\qs \app [q]} {\ans \app [a]}}
\end{mathpar}
where $\List A$ is the type of lists over $a$\label{intro:list},
$l\app l'$ is list concatenation\label{intro:app}, where we use the suggestive shorthands $\inl q$ and $\inr o$ for the respective injections into the sum type $\sumt Q O$\label{intro:sum}, and where $\sigma\of{\List{A} \pto \sumt Q O}$ denotes a tree at a fixed input $i$.

To provide some further intuition and visualise the usage of the word ``tree'', we discuss the following example functional in more detail:
\begin{align*}
	F ~&:~ (\nat\to\bool\to\Prop)\to(\nat\to\bool\to\Prop)\\
	FR\,i\,o~&:=~o=\btrue \land \forall q < i.\, R\,q\,\btrue
\end{align*}

Intuitively, the functional can be computed by asking all questions $q$ for $q < i$ to the oracle.
If the oracle does not return any value, $F$ does not return a value.
If the oracle returns $\bfalse$ somewhere, $F$ also does not return a value -- i.e.\ runs forever.
If the oracle indeed returns $\btrue$ for all $q < i$, $F$ returns $\btrue$.

In the case of $i=3$, this process may be depicted by

\begin{center}\footnotesize
\begin{tikzcd}[sep=small]
& {\ask 0} \arrow[ld, "\bfalse"'] \arrow[rd, "\btrue"] &                                     &                                     &    \\
{\partundef} &                                     & {\ask 1} \arrow[ld, "\bfalse"'] \arrow[rd, "\btrue"] &                                     &    \\
& {\partundef}                                  &                                     & {\ask 2} \arrow[ld, "\bfalse"'] \arrow[rd, "\btrue"] &    \\
&                                     & {\partundef}                                  &                                     & {\out \btrue}
\end{tikzcd}
\end{center}

\noindent
where the paths along labelled edges represent the possible answer lists $\ans$ while the nodes represent the corresponding actions of the computation: the paths along inner nodes denote the question lists $\qs$ and the leafs the output behaviour.
Note that $\ret\of{X\pto X}$ is the return of partial functions, turning a value into an always defined partial value, while $\partundef$ denotes the diverging partial value.
Formally, a tree $\tau\of{\nat \to \List{\bool} \pto \sumt \nat \bool}$ computing $F$ can be defined by
$$\tau\,i\,\ans~:=~
\begin{cases}
\partundef&\textnormal{if }\bfalse \in \ans\\
\ret(\ask\, |\ans|)&\textnormal{if }\bfalse \not\in \ans \land |\ans|< i\\
\ret(\out\,\btrue)&\textnormal{if }\bfalse \not\in \ans \land |\ans|\ge i
\end{cases}$$
where here and later on we use such function definitions by cases to represent (computable) pattern matching.

As usual in synthetic mathematics, the definition of a functional $F$ as being computable if it can be described by a tree is implicitly relying on the fact that all definable (partial) functions in CIC could also be shown computable in the analytic sense.
Describing oracle computations via trees in stages goes back to Kleene~\cite{Kleene_1959}, cf.\ also the book by Odifreddi~\cite{odifreddi1992classical}.
Our definition can be seen as a more explicit form of sequential continuity due to van Oosten~\cite{van_Oosten_1999,vanOosten2011-VANPCA-2},
or as a partial, extensional form of a dialogue tree due to Escardó~\cite{escardodialogue}.
Our definition allows us to re-prove the theorem
by Kleene~\cite{kleene1952introduction} and Davis~\cite{Davis1958-DAVCU}
that
computable functionals fulfill the more common definition of continuity with a modulus:

\setCoqFilename{TuringReducibility.OracleComputability}
\begin{lemma}[][cont_to_cont]
  Let $F$ be a computable functional.
  If $F R\,i\,o$, then there exists a list $\qs \of{\List Q}$,
  the so-called modulus of continuity,
  such that $\forall q \in \qs.\;\exists a.\;R q a$
  and 
  for all $R'$ with $\forall q \in \qs.\,\forall a.\;R q a \leftrightarrow R' q a$ we also have that $F R'\, i\, o$.
\end{lemma}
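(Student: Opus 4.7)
The plan is to unpack the computability of $F$ directly and reuse the very same interrogation witness for the modulus. By the definition of computability, from $FR\,i\,o$ we obtain a tree $\tau$ together with lists $\qs, \ans$ satisfying $\interrogate{\tau i}{R}{\qs}{\ans}$ and $\tau\,i\,\ans \hasvalue \inr o$. I would take exactly this $\qs$ as the claimed modulus of continuity.

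The first obligation, that every $q \in \qs$ has some answer $a$ with $Rqa$, follows by induction on the interrogation $\interrogate{\tau i}{R}{\qs}{\ans}$. In the base case $\qs = []$ the claim is vacuous, and in the inductive step $\qs \app [q]$, $\ans \app [a]$ we inherit the property for $\qs$ from the induction hypothesis and add the new witness $a$ together with the premise $Rqa$ coming from the interrogation rule.

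For the second obligation, fix an $R'$ that agrees with $R$ on every $q\in\qs$. By the tree characterisation of $F$, it suffices to exhibit lists $\qs',\ans'$ with $\interrogate{\tau i}{R'}{\qs'}{\ans'}$ and $\tau\,i\,\ans'\hasvalue \inr o$; I would pick $\qs':=\qs$ and $\ans':=\ans$, so the second conjunct is immediate. To establish the interrogation for $R'$, I would again induct on $\interrogate{\tau i}{R}{\qs}{\ans}$. The base case is the empty interrogation, and in the step case the induction hypothesis gives $\interrogate{\tau i}{R'}{\qs}{\ans}$, the tree condition $\tau\,i\,\ans \hasvalue \inl q$ is preserved unchanged, and $Rqa$ transports to $R'qa$ via the agreement hypothesis, since $q\in\qs\app[q]$.

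The only subtle point is that the inductive characterisation of interrogations appends questions and answers at the end, so care must be taken that, when transporting the interrogation from $R$ to $R'$, the next question asked by $\tau$ depends only on the answer prefix, which is identical in both interrogations. This is precisely what makes the same witness work verbatim and is the reason the proof goes through by a direct induction rather than needing any stronger inductive invariant.
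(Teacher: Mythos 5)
Your proposal is correct and matches the paper's proof: both unpack the tree witness, reuse the same $\qs$ and $\ans$, and discharge the two obligations by induction on the interrogation, transporting $Rqa$ to $R'qa$ via the agreement hypothesis. The subtlety you note about appending at the end (so that membership in a prefix of $\qs$ is inherited by the full list) is exactly the routine bookkeeping the paper dismisses as ``trivial''.
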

\begin{proof}
  Given $F R\,i\,o$ and $F$ computable by $\tau$ we have $\interrogate {\tau i} {R} {\qs} {\ans}$ and $\tau\,i\,\ans \hasvalue \inr o$.
  It suffices to prove both $\forall q \in \qs.\;\exists a.\;R q a$ and $\interrogate {\tau i} {R'} {\qs} {\ans}$ by induction on the given interrogation, which is trivial.
  \qed
\end{proof}

Nevertheless, our notion of computable functionals is strictly stronger than modulus-continuity as stated,
while we are unaware of a proof relating it to a version where the moduli are computed by a partial function.

\begin{lemma}[][counterex]
	There are modulus-continuous functionals that are not computable.
\end{lemma}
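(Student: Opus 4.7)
The plan is to exhibit the trivial constant functional as a counterexample. Instantiate $Q := I := A := \nat$ and $O := \bool$, and take $F R\,i\,o := \top$. This $F$ ignores its arguments entirely; modulus-continuity should therefore be utterly trivial, while non-computability should be forced by the fact that any tree is a deterministic partial function and hence cannot produce two different outputs on the same empty answer list.

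For modulus-continuity, I plan to use the empty list $[] \of \List Q$ as modulus for every $(R,i,o)$. The premise $\forall q \in [].\;\exists a.\;R\,q\,a$ holds vacuously, and for any $R'$ the hypothesis that $R$ and $R'$ agree on $[]$ is vacuous as well, while $F R'\,i\,o = \top$ holds by definition. This discharges the modulus-continuity obligation without any case analysis.

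For non-computability, I plan to assume towards a contradiction that some $\tau \of I \to \List A \pto \sumt Q O$ witnesses $F R\,i\,o \leftrightarrow \exists \qs\,\ans.\;\interrogate{\tau i}{R}{\qs}{\ans} \land \tau\,i\,\ans \hasvalue \inr o$, and then instantiate $R := \lambda q\,a.\,\bot$, the everywhere-undefined relation. By induction on the interrogation, any $\interrogate{\tau i}{R}{\qs}{\ans}$ with this $R$ forces $\qs = \ans = []$, since the inductive step of the interrogation relation requires an actual answer $R\,q\,a$, which is unobtainable. Consequently $F R\,i\,o \leftrightarrow \tau\,i\,[] \hasvalue \inr o$. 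Since $F R\,i\,\btrue = F R\,i\,\bfalse = \top$, both $\tau\,i\,[] \hasvalue \inr \btrue$ and $\tau\,i\,[] \hasvalue \inr \bfalse$ would hold, contradicting the functionality of $\hasvalue$ on the partial-value abstraction.

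The only subtle point is the appeal to the partial-value interface, namely that $\hasvalue$ is functional so that $\tau\,i\,[]$ has at most one defined result; this is a basic expected property of the abstraction referenced in the footnote about \emph{partial}. Everything else is bookkeeping, so I expect the Coq proof of \texttt{counterex} to be short and to amount essentially to the two observations above.
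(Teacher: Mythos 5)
Your proof is correct, but it takes a genuinely different route from the paper. You use the constant functional $F\,R\,i\,o := \top$ and derive non-computability from determinism: under the empty oracle $R_\bot$ the only interrogation is the empty one, so a tree would have to satisfy both $\tau\,i\,[\,] \hasvalue \inr\,\btrue$ and $\tau\,i\,[\,] \hasvalue \inr\,\bfalse$, which is impossible since evaluation of partial values is functional (this does follow from the paper's abstract interface, e.g.\ via the step-indexed evaluator $\rho$ and its monotonicity, or by invoking the paper's remark that computable functionals send deterministic oracles --- and $R_\bot$ is vacuously deterministic --- to deterministic relations). The paper instead uses $F\,R\,i\,o := \exists q.\,R\,q\,\btrue$ and argues directly about the first tree node: with the full oracle the root must be defined, it cannot be an output (else $F\,R_\bot$ would hold), so it is a fixed question $q_0$, which is then defeated by an oracle undefined exactly at $q_0$. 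Your argument is shorter, but note what each buys: your counterexample is non-computable only because $F\,R$ fails to be single-valued on an output type with two elements (so it says nothing if $O$ is a singleton, and the same cheap determinism argument would in fact also dispatch the paper's example via a deterministic oracle hitting $\btrue$); the paper's example is robust for arbitrary $O$ and isolates the real phenomenon --- that sequential computability forces the first question to be chosen independently of the oracle, which an unbounded existential over oracle answers cannot accommodate even though it has a one-element modulus. Both establish the stated lemma.
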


\begin{proof}
	Consider the functional $F\of{(\nat\to\bool\to\Prop)\to(I\to O\to\Prop)}$ defined by
	$$FRio ~:=~ \exists q.\,R\,q\,\btrue.$$
	Clearly, $F$ is modulus-continuous since from a terminating run $FRio$ we obtain $q$ with $R\,q\,\btrue$ and therefore can choose $\qs:=[q]$ as suitable modulus.

  However, suppose $\tau:I \to \List{\bool}\pto \sumt \nat O$ were a tree for $F$, then given some input $i$ we can inspect the result of $\tau\,i\,[]$ because
  $F\,R_\top\,i\,o$ holds for all $i$, $o$, and the full oracle $R_\top\,q\,a := \top$.
  However, the result cannot be $\out o$ for any output $o$, as this would yield $F R_\bot$ for the empty oracle $R_\bot \,q\,a := \bot$, violating the definition of $F$.
  Thus $\tau\,i\,[] \hasvalue \ask q_0$, conveying an initial question $q_0$ independent of the input oracle.
	But then employing the oracle $R_0$ defined by $R_0\,q_0\,a:=\bot$ and $R_0\,q\,a:=\top$ for all $q\not=q_0$ we certainly have $F\,R_0\,i\,o$ by definition but no interrogation $\interrogate {\tau i} {R_0} \qs \ans$ with $\tau\,i\,\ans \hasvalue \inr o$, as this would necessarily include an answer $a$ with $R_0\,q_0\,a$ as first step, contradicting the construction of $R_0$.
	\qed
\end{proof}

The advantage of using the stricter notion of sequential continuity over modulus-continuity is that by their reduction to trees, computable functionals are effectively turned into flat first-order functions on data types. 
Thus one directly obtains an enumeration of all oracle computations, as needed in most advanced scenarios, from an enumeration of first-order functions, which itself could be obtained by assuming usual axioms for synthetic computability.

\section{Turing Reducibility and Oracle Semi-Decidability}
\label{sec:turing}

Using our synthetic notion of oracle computability, we can directly derive synthetic formulations of two further central notions of computability theory:
Turing reducibility -- capturing when a predicate is decidable relative to a given predicate -- 
and oracle semi-decidability -- capturing when a predicate can be recognised relative to a given predicate.

To provide some intuition first, we recall that in the synthetic setting a predicate $p:X\to \Prop$ over some type $X$ is decidable if there is a function $f\of{X\to\bool}$
such that $\forall x.\;p x \leftrightarrow f x = \btrue$, i.e.\ $f$ acts as a decider of $p$.
This definition is standard in synthetic computability~\cite{BauerSyntCT,forster2019synthetic} and relies on the fact that constructively definable functions $f\of{X\to\bool}$ are computable.

To relativise the definition of a decider to an oracle, we first define the characteristic relation $\charrel p\of{X\to\bool\to\Prop}$ of a predicate $p\of{X\to\Prop}$ by\label{intro:charrel}
$$\charrel p~:=~\lambda x b.
\begin{cases}
px & \text{if } b = \btrue \\
\neg p x & \text{if } b = \bfalse.\\
\end{cases}$$
Employing $\charrel p$, we can now equivalently characterise a decider $f$ for $p$ by requiring that $\forall xb.\;\charrel p xb \leftrightarrow f x = b$.
Relativising this exact pattern, we then define Turing reducibility of a predicate $p \of{X\to\Prop}$ to $q\of{Y \to \Prop}$ by a computable functional $F$ transporting the characteristic relation of $q$ to the characteristic relation of $p$:
\label{intro:redT}
\[ p \redT q ~:=~ \exists F.\; F \textit{ is computable} \land \forall xb.\; \charrel p x b \leftrightarrow F \charrel q x b \]

Note that while we do not need to annotate a decider $f$ with a computability condition because we consider all first-order functions of type $\ofbox{\nat\to\nat}$ or $\ofbox{\nat\to\bool}$ as computable,
a Turing reduction is not first-order,
and thus needs to be enriched with a tree to rule out unwanted behaviour.
In fact, without this condition, we would obtain $p \redT q$ for every $p$ and $q$ by simply setting $F\,R := \charrel p$.

Next, regarding semi-decidability, a possible non-relativised synthetic definition is to require a partial function $f\of{X\pfun\mathbbm{1}}$
such that $\forall x.\;p x \leftrightarrow f x \hasvalue \star$, where $\mathbbm{1}$ is the inductive unit type with singular element $\star$\label{intro:unit}.
That is, the semi-decider $f$ terminates on elements of $p$ and diverges on the complement $\overline p$ of $p$ (cf.~\cite{forster2022thesis}).

Again relativising the same pattern, we say that $p\of{X\to\Prop}$ is (oracle-)semi-decidable relative to $q\of{Y \to \Prop}$
if there is a computable functional $F$ mapping relations $R\of{Y\to\bool\to\Prop}$ to
relations of type $\ofbox{X\to\mathbbm{1}\to\Prop}$ such that $F \charrel q$ accepts $p$:

\newcommand{\sdec}[2]{\ensuremath{\mathcal{S}_{#1}(#2)}}

\[ \sdec q p ~:=~ \exists F.\; F \textit{ is computable} \land \forall x.\; p x \leftrightarrow F \charrel q x \star \]

As in the case of Turing reductions, the computability condition of an oracle semi-decider is crucial: without the restriction, we would obtain $\sdec q p$ for every $p$ and $q$ by setting $F\,R\,x\,\star:=p\,x$.

While we defer developing the theory of synthetic Turing reducibility and oracle semi-decidability to later sections, we can already record here that the fact that decidability implies semi-decidability also holds in relativised form:

\setCoqFilename{TuringReducibility.OracleComputability}
\begin{lemma}[][Turing_to_sdec]
	If $p\redT q$ then $\sdec q p$ and $\sdec q {\overline p}$.
\end{lemma}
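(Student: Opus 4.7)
The plan is to convert any Turing reduction into two oracle semi-deciders by post-processing only the \emph{output} step of the underlying computation tree, leaving the pattern of questions and answers untouched. Unfolding $p \redT q$ yields a computable functional $F$ witnessed by a tree $\tau\of{X\to\List\bool\pto\sumt Y\bool}$ satisfying $\forall xb.\;\charrel p\,x\,b \leftrightarrow F\charrel q\,x\,b$. Define two new trees $\tau_p,\tau_{\overline p}\of{X\to\List\bool\pto\sumt Y\mathbbm{1}}$ by monadic bind: compute $\tau\,x\,\ans$, forward any question $\inl q$ as $\inl q$, collapse the matching output to $\inr\star$, and diverge on the non-matching output. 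Concretely, $\tau_p$ returns $\inr\star$ when $\tau\,x\,\ans\hasvalue\inr\btrue$ and diverges when $\tau\,x\,\ans\hasvalue\inr\bfalse$, while $\tau_{\overline p}$ does the opposite. Let $G_p$ and $G_{\overline p}$ be the functionals computed by $\tau_p$ and $\tau_{\overline p}$.

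The key observation is that $\tau_p\,x$ and $\tau\,x$ agree on every question step, so an interrogation $\interrogate{\tau_p x}{\charrel q}\qs\ans$ is simultaneously an interrogation $\interrogate{\tau x}{\charrel q}\qs\ans$, and conversely. This can be verified by a straightforward induction on the interrogation derivation, using only the definition of $\tau_p$ in the question case. Hence the two trees induce the same intermediate behaviour, and the only discrepancy lies in the final output.

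For correctness of $G_p$, assume $p\,x$. Then $\charrel p\,x\,\btrue$, so $F\charrel q\,x\,\btrue$, yielding an interrogation $\interrogate{\tau x}{\charrel q}\qs\ans$ with $\tau\,x\,\ans\hasvalue\inr\btrue$; by construction $\tau_p\,x\,\ans\hasvalue\inr\star$, giving $G_p\charrel q\,x\,\star$. Conversely, from $G_p\charrel q\,x\,\star$ extract an interrogation together with $\tau_p\,x\,\ans\hasvalue\inr\star$; this can only hold when $\tau\,x\,\ans\hasvalue\inr\btrue$, so $F\charrel q\,x\,\btrue$, hence $\charrel p\,x\,\btrue$, i.e.\ $p\,x$. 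The argument for $G_{\overline p}$ is symmetric, swapping the roles of $\btrue$ and $\bfalse$ and using that $\charrel p\,x\,\bfalse \leftrightarrow \overline p\,x$.

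I expect the only delicate point to be the formal treatment of partial values when defining $\tau_p$, namely showing that a bind which diverges on one branch really does produce a tree inhabiting the required type and commuting with interrogations in the way claimed. Assuming the standard monadic operations on partial values from \Cref{def:partial}, this is routine, and no axioms beyond the ambient theory are needed.
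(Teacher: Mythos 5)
Your proof is correct and follows essentially the same route as the paper: post-process the output leaf of the reduction's tree, converting $\inr\btrue$ to $\inr\star$ (and diverging on $\inr\bfalse$) for $\sdec q p$, and symmetrically for $\sdec q{\overline p}$, while the interrogation structure is unchanged. The paper phrases it by first naming the functional $F'\,R\,x\,\star := F\,R\,x\,\btrue$ and then exhibiting the post-processed tree as its witness, whereas you build the tree directly and read off the functional, but the content and the key observation about interrogations coinciding are the same.
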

\begin{proof}
	Let $F$ witnesses $p\redT q$, then $F'\,R\,x\,\star := F\,R\,x\,\btrue$ witnesses $\sdec q p$.
	In particular, if $\tau\of{X\to \List{\bool}\pto\nat +\bool}$ computes $F$, then $\tau'\of{X\to \List{\bool}\pto\nat +\mathbbm{1}}$, constructed by running $\tau$ and returning $\inr \star$ whenever $\tau$ returns $\inr \btrue$, computes $F'$.
	The proof of $\sdec q {\overline p}$ is analogous, simply using $\bfalse$ in place of $\btrue$.
	\qed
\end{proof}

\section{Closure Properties of Oracle Computations}
\label{sec:closure}

In this section we collect some examples of computable functionals and show how they can be composed, yielding a helpful abstraction for later computability proofs without need for constructing concrete computation trees.
Note that the last statements of this section depend on a rather technical intermediate construction using a more flexible form of interrogations.
We refer to the Coq code and to \Cref{sec:forms}, where we will also deliver the proofs left out.

First, we show that composition with a transformation of inputs preserves computability and that all partial functions are computable, ignoring the the input oracle.
The latter also implies that total, constant, and everywhere undefined functions are computable.

\setCoqFilename{TuringReducibility.OracleComputability}
\begin{lemma}[][computable_precompose]\label{lem:allaboutcomp}
  The following functionals mapping relations $R \of{Q \to A \to \Prop}$ to relations
  of type $\ofbox{I \to O \to \Prop}$ are computable:
  \begin{enumerate}
  \item $\lambda R\,i\,o.\;F R\,(g i)\,o$  for $g \of{I \to I'}$ and computable $F\of{(Q \to A \to \Prop )\to (I' \to O \to \Prop)}$,
  \item $\lambda R\,i\,o.\;f i \hasvalue o$ given $f \of{I \pfun O}$,
  \item $\lambda R\,i\,o.\;f i = o$ given $f \of{I \to O}$,
  \item $\lambda R\,i\,o.\;o = v$ given $v \of O$,
  \item $\lambda R\,i\,o.\;\bot$.
  \end{enumerate}
\end{lemma}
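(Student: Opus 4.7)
The plan is to exhibit, for each of the five functionals, an explicit computation tree and verify the computability equivalence by induction on interrogations. Since cases (3)--(5) reduce to case (2), and (1) is an independent closure property, the genuine work lies in (1) and (2).

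For case (1), given a tree $\tau \of{I' \to \List A \pto \sumt Q O}$ computing $F$, the natural candidate is $\tau'\,i := \tau\,(g\,i)$. The equivalence
\[ F R\,(g i)\,o \;\leftrightarrow\; \exists \qs\,\ans.\;\interrogate{\tau'\,i}{R}{\qs}{\ans}\,\land\,\tau'\,i\,\ans \hasvalue \inr o \]
is immediate by unfolding $\tau'$ and using that $F$ is computed by $\tau$.

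For case (2), the idea is that the functional asks no questions: define $\tau\,i\,\ans := \text{bind}(f\,i)(\lambda o.\ret(\inr o))$, so that $\tau\,i\,\ans \hasvalue \inr o$ iff $f\,i \hasvalue o$, uniformly in $\ans$. For the forward direction of the computability equivalence, given $f\,i\hasvalue o$ pick $\qs := []$ and $\ans := []$; the empty interrogation $\interrogate{\tau\,i}{R}{[]}{[]}$ holds by the base rule, and $\tau\,i\,[] \hasvalue \inr o$ by construction. For the backward direction, given any interrogation ending in $\tau\,i\,\ans \hasvalue \inr o$, this already forces $f\,i\hasvalue o$ regardless of the length of $\ans$, since $\tau$ ignores its second argument except through the bind on $f\,i$. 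So the functional equals $\lambda R\,i\,o.\,f\,i\hasvalue o$.

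Case (3) follows by instantiating (2) with the total function $f$ viewed as a partial function via $\ret$, using $\ret(f i) \hasvalue o \leftrightarrow f i = o$. Case (4) is the further instantiation with the constant function $\lambda i.\,v$. For case (5), take $\tau\,i\,\ans := \partundef$; then $\tau\,i\,\ans \hasvalue \inr o$ is never provable, so the right-hand side of the computability equivalence is always false, matching the functional $\lambda R\,i\,o.\,\bot$.

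The only mildly subtle step is case (2), where one must be careful that the chosen tree genuinely ignores the oracle in the sense that an arbitrary interrogation can be shortened to the empty one without affecting the final output; this is automatic here because $\tau\,i\,\ans$ does not depend on $\ans$. The remaining cases are routine applications of this pattern or of direct tree substitution.
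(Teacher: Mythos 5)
Your proposal is correct and matches the paper's proof: the same trees $\tau'\,i := \tau\,(g\,i)$ for (1) and $\tau\,i\,\ans := f\,i \bind \lambda o.\,\ret(\inr o)$ for (2), with (3)--(5) obtained as instances of (2). The paper states this more tersely but the argument is identical.
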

\begin{proof}
  For 1, let $\tau$ compute $F$ and define $\tau'\,i\,l := \tau\, (g i)\, l$.
  For 2, define $\tau'\,i\,l := f i \bind \lambda o.\; \ret (\out o)$,
  where $\bind$ is the bind operation of partial functions.
  All others follow by using (2).
  \qed
\end{proof}

Next, if $Q=I$ and $A=O$, then the identity functional is computable:

\setCoqFilename{TuringReducibility.OracleComputability}
\begin{lemma}[][computable_id]
  The functional mapping $R \of{Q \to A \to \Prop}$ to $R$ itself is computable.
\end{lemma}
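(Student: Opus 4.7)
The plan is to exhibit a tree which, on any input $q$, asks the oracle the question $q$ once and then returns the first received answer as its output. Concretely, define $\tau\of{Q \to \List{A}\pto\sumt Q A}$ by
\[
  \tau\,q\,[] \,:=\, \ret(\inl q), \qquad \tau\,q\,(a :: \_) \,:=\, \ret(\inr a).
\]
I then have to verify the biconditional $R\,q\,a \leftrightarrow \exists \qs\,\ans.\; \interrogate{\tau q}{R}{\qs}{\ans} \land \tau\,q\,\ans \hasvalue \inr a$.

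For the forward direction, assume $R\,q\,a$ and choose $\qs := [q]$, $\ans := [a]$. Starting from the base case $\interrogate{\tau q}{R}{[]}{[]}$, apply the step rule using $\tau\,q\,[] \hasvalue \inl q$ and the assumption $R\,q\,a$ to obtain $\interrogate{\tau q}{R}{[q]}{[a]}$; the final tree-lookup $\tau\,q\,[a] \hasvalue \inr a$ is by definition.

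For the backward direction, given $\qs$, $\ans$ with $\interrogate{\tau q}{R}{\qs}{\ans}$ and $\tau\,q\,\ans \hasvalue \inr a$, I would case-split on $\ans$. If $\ans = []$, then $\tau\,q\,[] \hasvalue \inr a$ would conflict with $\tau\,q\,[] \hasvalue \inl q$ by determinism of $\ret$, so this case is impossible. If $\ans = a' :: \_$, then $\tau\,q\,\ans \hasvalue \inr a'$ forces $a = a'$, and inverting the interrogation on its first step yields $R\,q\,a'$, hence $R\,q\,a$.

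There is no serious obstacle here; the only mild subtlety is making sure the backward inversion on the interrogation relation properly extracts that the first question asked was $q$ (using $\tau\,q\,[] \hasvalue \inl q$), and that any extra entries in $\ans$ beyond the first are harmless because the tree's output on such inputs is independent of them.
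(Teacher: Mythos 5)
Your tree is exactly the one the paper uses, and your verification of the biconditional is correct where the paper simply omits it. One small point of precision: since the interrogation relation appends question/answer pairs at the \emph{end}, inversion peels off the last step rather than the first, but this is harmless here because once the tree returns $\inr a'$ after a single answer the interrogation cannot be extended, so $\ans$ must have length exactly one and the first and last steps coincide (in particular, there are no ``extra entries'' to worry about).
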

\begin{proof}
  Define
  \[ \tau\,q\,l :=
    \begin{cases}
      \ret (\inl q) & \text{if } l = [] \\
      \ret (\inr a) & \text{if } l = (q,a) :: l'. \tag*{\qed}
    \end{cases}
  \]
\end{proof}

Moreover, given two functionals and a boolean test on inputs, the process calling either of the two depending on the test outcome is computable:

\begin{lemma}[][computable_if]
  Let $F_1$ and $F_2$ both map relations $R\of{Q\to A \to \Prop}$
  to relations of type $\ofbox{I \to O \to \Prop}$ and
  $f\of{I \to \bool}$.
  Then $F$ mapping $R$ to
  the following relation of type $\ofbox{I \to O \to \Prop}$
  is computable:
  \[\lambda io.\begin{cases}
  F_1\; R\; i\; o & \text{if } fi = \btrue \\
  F_2\; R\; i\; o & \text{if } fi = \bfalse 
  \end{cases}
  \]
\end{lemma}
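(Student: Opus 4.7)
The plan is to construct a tree $\tau$ for $F$ by dispatching to the trees of $F_1$ and $F_2$ using the decision $f\,i$. Concretely, let $\tau_1,\tau_2\of{I\to \List A \pto \sumt Q O}$ compute $F_1$ and $F_2$ respectively, and define
\[
  \tau\,i\,l ~:=~ \begin{cases} \tau_1\,i\,l & \text{if } f\,i = \btrue \\ \tau_2\,i\,l & \text{if } f\,i = \bfalse \end{cases}
\]
This is a valid definition by pattern matching on the boolean $f\,i$, and it yields a partial function of the required type.

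Next, I would verify the computability equivalence by case analysis on $f\,i$. Fix $R$, $i$, $o$ and suppose first that $f\,i = \btrue$. Then by the definition of $F$, the claim $F R\,i\,o$ is exactly $F_1 R\,i\,o$, and by assumption the latter is equivalent to the existence of $\qs,\ans$ with $\interrogate{\tau_1 i}{R}{\qs}{\ans}$ and $\tau_1\,i\,\ans \hasvalue \inr o$. The key observation is that for this fixed $i$ the two sections $\tau_1\,i\,\cdot$ and $\tau\,i\,\cdot$ are literally the same partial function, so the interrogation relation and the termination predicate coincide verbatim, giving the desired equivalence for $\tau$. The case $f\,i = \bfalse$ is completely symmetric, using $\tau_2$ in place of $\tau_1$.

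There is no real obstacle here: the only subtlety is to notice that the interrogation predicate $\interrogate\sigma R \qs \ans$ only inspects $\sigma$ at the argument lists that arise, and the termination condition only inspects $\tau\,i\,\ans$, so it is enough that $\tau$ agrees with $\tau_1$ (respectively $\tau_2$) on the whole fibre over $i$, which holds by construction. Alternatively, one can avoid this bookkeeping by observing that $F$ is exactly the functional built by \Cref{lem:allaboutcomp}(1) after a suitable reformulation, but the direct construction above is the shortest route and is the one I would carry out.
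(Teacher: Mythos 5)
Your construction is exactly the one the paper uses: the tree $\tau\,i\,l$ dispatches to $\tau_1\,i\,l$ or $\tau_2\,i\,l$ according to $f\,i$, and the verification is the straightforward case analysis you describe (which the paper leaves implicit). The proof is correct and takes essentially the same route.
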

\begin{proof}
  Let $\tau_1$ and $\tau_2$ compute $F_1$ and $F_2$ respectively
  and define \[ \tau\,i\,l := \begin{cases}
  \tau_1\;i\;l & \text{if } fi = \btrue \\
  \tau_2\;i\;l & \text{if } fi = \bfalse. \tag*{\qed}
  \end{cases}\]
\end{proof}

Taken together, the previous three lemmas yield computability proofs for functionals consisting of simple operations like calling functions, taking indentities, and branching over conditionals.
The next three lemmas extend to partial binding, function composition, and linear search, so in total we obtain an abstraction layer accommodating computability proofs for the most common ingredients of algorithms.
As mentioned before, we just state the last three lemmas without proof here and refer to the Coq development and \Cref{sec:forms} for full detail.

\begin{lemma}[][computable_bind]
  Let $F_1$ map relations $R \of{Q\to A \to \Prop}$ to relations of type $\ofbox{I \to O' \to \Prop}$,
  $F_2$ map relations $R \of{Q\to A \to \Prop}$ to relations of type $\ofbox{(I \times O') \to O \to \Prop}$,
  and both be computable.
  Then $F$ mapping $R \of{Q\to A \to \Prop}$ to $\lambda i o.\,\exists o' \of O'.\,F_1\,R\,i\,o' \land F_2\,R\,(i,o')\,o$
  of type $\ofbox{I \to O \to \Prop}$
  is computable.
\end{lemma}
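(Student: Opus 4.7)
The plan is to construct a tree $\tau$ for $F$ by sequencing trees $\tau_1$ for $F_1$ and $\tau_2$ for $F_2$ in the obvious way: on input $i$ and answer list $\ans$, the tree $\tau$ first simulates $\tau_1$ on $i$, consuming answers from the front of $\ans$ one by one to supply to $\tau_1$. As long as $\tau_1$ requests a question, $\tau$ forwards that question. Once $\tau_1$ produces an output $o'$, having consumed some prefix $\ans_1$ of $\ans$, $\tau$ switches to simulating $\tau_2$ on the input $(i, o')$ with the remaining suffix $\ans_2$ (where $\ans = \ans_1 \app \ans_2$), and then behaves as $\tau_2$ from that point on.

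Concretely, I would define $\tau\,i\,\ans$ by structural recursion on $\ans$, threading an auxiliary "consumed prefix" argument $\ans_1$ initialised to $[]$: we inspect $\tau_1\,i\,\ans_1$ via a bind; if it yields $\inl q$ and $\ans = []$, we return $\ret(\inl q)$; if it yields $\inl q$ and $\ans = a :: \ans'$, we recurse with consumed prefix $\ans_1 \app [a]$; and if it yields $\inr o'$, we simply return $\tau_2\,(i, o')\,\ans$. If $\tau_1$ diverges on the chosen prefix, then so does $\tau$, matching the intended semantics.

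Correctness then splits into two directions. For soundness, given an interrogation $\interrogate{\tau\,i}{R}{\qs}{\ans}$ ending in $\tau\,i\,\ans \hasvalue \inr o$, I would identify the split point at which $\tau$ transitioned from simulating $\tau_1$ to simulating $\tau_2$: this partitions $\qs = \qs_1 \app \qs_2$ and $\ans = \ans_1 \app \ans_2$ into two sub-interrogations, one of $\tau_1\,i$ ending in some output $o'$ and one of $\tau_2\,(i, o')$ ending in $o$. Completeness is the converse: given $F_1\,R\,i\,o'$ and $F_2\,R\,(i,o')\,o$, the two sub-interrogations concatenate into one interrogation of $\tau$.

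The main obstacle is precisely this concatenation/splitting of interrogations, because the inductive definition of $\interrogate{\cdot}{\cdot}{\cdot}{\cdot}$ extends question/answer lists one element at a time from the right, whereas our composition splits them into two blocks from the left; proving the required decomposition lemma by the naive induction does not mesh well with the "two-phase" structure of $\tau$. This is exactly why the authors defer the proof to \Cref{sec:forms}: a more flexible reformulation of interrogations, better suited to append-style reasoning, is introduced there and handles the bookkeeping cleanly.
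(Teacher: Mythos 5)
Your plan is correct and rests on the same core idea as the paper's proof: run $\tau_1$ until it produces an output $o'$, split the answer list as $\ans = \ans_1 \app \ans_2$ at that point, and continue as $\tau_2\,(i,o')$ on the suffix $\ans_2$. Where you differ is in how this two-phase behaviour is packaged as a tree. You build a plain tree directly, replaying $\tau_1$ on successive prefixes of $\ans$ at every call in order to rediscover the split point. The paper instead routes through ``stalling'' interrogations (\Cref{sec:forms}): it defines a non-recursive, state-based tree with state type $\option{(O' \times \nat)}$ recording $o'$ and the split position $|\ans_1|$ once $\tau_1$ has produced output, and then invokes a generic, once-and-for-all equivalence between stalling trees and plain trees --- whose proof is essentially your prefix-replaying recursion, done abstractly. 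Your version is more elementary and self-contained for this one lemma, at the price of re-establishing the replay invariants by hand; the paper's abstraction amortises that cost over \Cref{coq:computable_comp} and \Cref{coq:computable_search} as well, where the two simulated trees interleave rather than merely sequence and a direct replay would be considerably messier. On the verification side, the splitting/concatenation of interrogations you identify as the obstacle is precisely \Cref{lem:noqinterrogation_app}, phrased via the subtree $\sigma \at \ans_1$; it goes through by a routine induction, so it is less of a roadblock than you suggest. The real bookkeeping is the invariant that on any answer list arising in an interrogation of $\tau\,i$, the replay of $\tau_1$ determines a unique consumed prefix (questions on all proper prefixes of $\ans_1$, output exactly at $\ans_1$); your construction supports this, so the argument closes.
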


\begin{lemma}[][computable_comp]
  Let $F_1$ map relations $R \of{Q\to A \to \Prop}$ to relations $\ofbox{X \to Y \to \Prop}$,
  $F_2$ map relations $R \of{X\to Y \to \Prop}$ to relations $\ofbox{I \to O \to \Prop}$,
  and both be computable.
  Then $F$ mapping $R \of{Q\to A \to \Prop}$ to $\lambda i o.\;F_2\;(F_1 R)\;i\;o$
  of type $\ofbox{I \to O \to \Prop}$
  is computable.
\end{lemma}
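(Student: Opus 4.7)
The plan is to compose trees by nested simulation: from trees $\tau_1$ for $F_1$ and $\tau_2$ for $F_2$, I would build a tree $\tau \of{I \to \List A \pto \sumt Q O}$ whose behaviour on input $i$ and answer list $\mathit{ans}$ is to replay $\tau_2$ on $i$, using $\tau_1$ to resolve each $X$-question that $\tau_2$ asks, and using $\mathit{ans}$ as the supply of $A$-answers from the real oracle $R$.

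Concretely, I would define $\tau\,i\,\mathit{ans}$ by an inner recursion on a pair of partial states $(\mathit{ans}_1, \mathit{ans}_2)$, where $\mathit{ans}_2 \of{\List Y}$ accumulates the $Y$-answers already delivered to $\tau_2$ and $\mathit{ans}_1 \of{\List A}$ accumulates the $A$-answers already consumed on the current $X$-question. Starting from $([],[])$ and the suffix $\mathit{ans}$, the step is:
\begin{enumerate}
\item Evaluate $\tau_2\,i\,\mathit{ans}_2$. If it returns $\inr o$, return $\inr o$. If it returns $\inl x$, proceed.
\item Evaluate $\tau_1\,x\,\mathit{ans}_1$. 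If it returns $\inr y$, reset $\mathit{ans}_1 := []$, extend $\mathit{ans}_2 := \mathit{ans}_2 \app [y]$ and recurse. If it returns $\inl q$, then either the remaining suffix of $\mathit{ans}$ is empty, in which case we return $\inl q$, or the next element $a$ of the suffix becomes the next $\tau_1$-answer, so we set $\mathit{ans}_1 := \mathit{ans}_1 \app [a]$ and recurse.
\end{enumerate}
This recursion decreases the remaining suffix of $\mathit{ans}$ in the $\tau_1$-asking branch and otherwise increases $|\mathit{ans}_2|$ only when a terminating sub-interrogation of $\tau_1$ has just completed, so the monadic $\bind$-structure of partial functions handles termination correctly.

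For correctness I would use the flexible form of interrogations from \Cref{sec:forms}. The central bookkeeping observation is that an interrogation $\interrogate {\tau i} R \qs \ans$ reaching $\tau\,i\,\ans \hasvalue \inr o$ can be decomposed into an interrogation $\interrogate{\tau_2 i}{F_1 R}{\xs}{\ys}$ of $\tau_2$ with $\tau_2\,i\,\ys \hasvalue \inr o$, together with, for each $x_k \in \xs$, a sub-interrogation $\interrogate{\tau_1 x_k} R {\qs_k}{\ans_k}$ of $\tau_1$ with $\tau_1\,x_k\,\ans_k \hasvalue \inr y_k$, where $\qs = \qs_1 \app \cdots \app \qs_n$ and $\ans = \ans_1 \app \cdots \app \ans_n$. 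Both directions of the equivalence $F R\,i\,o \iff F_2(F_1 R)\,i\,o$ follow by unfolding the definitions of $F_1, F_2$ through their trees and assembling/disassembling this layered interrogation.

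The main obstacle is precisely this layered bookkeeping: proving that the replay semantics of the inner recursion agrees with the concatenated interrogations, and in particular that the "reset" step (clearing $\mathit{ans}_1$ whenever $\tau_1$ produces a $Y$-answer) is justified because the subsequent $\tau_1$-call uses a fresh prefix of $\mathit{ans}$. Encoding this cleanly in the plain inductive definition of $\interrogate{}{}{}{}$ is painful, which is why I would lean on the more flexible interrogation form of \Cref{sec:forms}, where partial segments of a run can be glued together without re-running $\tau_2$ from scratch at each boundary.
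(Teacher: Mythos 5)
Your construction is essentially the paper's: the paper also simulates $\tau_2$ and resolves each of its $X$-questions by a sub-simulation of $\tau_1$, partitioning the global answer list into consecutive blocks per completed $\tau_1$-run, and it likewise routes the argument through the more flexible (stalling, stateful) interrogations of \Cref{sec:forms}, whose state $\List{(X\times Y)}\times\option{(X\times\nat)}$ plays exactly the role of your accumulators $\mathit{ans}_2$ and $\mathit{ans}_1$. The only caveat is that your inner recursion is not well-founded in the branch where $\tau_1$ returns an output (nothing shrinks there), so it cannot be justified by $\bind$ alone and must be implemented via unbounded search $\mu$ or step-indexing --- which is precisely what the paper's stalling-to-plain-tree conversion packages up.
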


\begin{lemma}[][computable_search]
  The functional mapping $R\of{(I \times \nat) \to\bool\to\Prop}$ to
  the following relation of type $\ofbox{I \to \nat \to \Prop}$ is computable:
  $\lambda i n\; R\; (i,n)\; \btrue \land \forall m < n.\; R\; (i, m)\; \bfalse$.
\end{lemma}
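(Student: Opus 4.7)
The plan is to define a tree that sequentially queries the oracle at $(i,0), (i,1), (i,2), \ldots$ and halts as soon as it sees a $\btrue$. Concretely, let
\[
  \tau\,i\,\ans := \begin{cases}
    \ret(\inl(i, |\ans|)) & \text{if } \btrue \notin \ans,\\
    \ret(\inr(|\ans|-1)) & \text{if } \ans = l \app [\btrue] \text{ with } \btrue \notin l,\\
    \partundef & \text{otherwise}.
  \end{cases}
\]
The two active clauses are disjoint (the first demands no $\btrue$ in $\ans$, the second demands exactly one $\btrue$, occurring at the end), so $\tau$ is well-defined. Intuitively, while all received answers are $\bfalse$ the tree asks the next index; as soon as a $\btrue$ appears at the last position, it returns that position as output; any other shape of $\ans$ is unreachable and is sent to $\partundef$.

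For the forward direction, suppose $R\,(i, n)\,\btrue$ and $R\,(i, m)\,\bfalse$ for all $m < n$. I build the interrogation by iterating the step rule $n+1$ times: starting from $\qs = \ans = []$, at each intermediate stage $\ans$ is a list of $\bfalse$s of length $k \le n$, so $\tau\,i\,\ans \hasvalue \inl(i, k)$ by the first clause, and I extend with the oracle answer, which is $\bfalse$ for $k < n$ and $\btrue$ for $k = n$. The resulting final $\ans = [\bfalse, \ldots, \bfalse, \btrue]$ matches the second clause, so $\tau\,i\,\ans \hasvalue \inr n$ as required.

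For the backward direction, given $\interrogate{\tau i}{R}{\qs}{\ans}$ with $\tau\,i\,\ans \hasvalue \inr n$, the definition of $\tau$ already forces $\ans = l \app [\btrue]$ with $\btrue \notin l$ and $n = |\ans|-1$. The remaining task is to read off $R$ from $\ans$. I prove by induction on the interrogation the invariant that $\qs = [(i,0), \ldots, (i,|\ans|-1)]$ and $R\,(i, k)\,\ans[k]$ for every $k < |\ans|$. The base case is trivial, and in the step case, whenever $\tau\,i\,\ans' \hasvalue \inl q$ holds for an intermediate $\ans'$, the only clause of $\tau$ emitting an $\inl$ forces $\btrue \notin \ans'$ and $q = (i, |\ans'|)$, which extends the invariant by one position. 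Specialising the invariant to the final $\ans$ gives $R\,(i, n)\,\btrue$ and $R\,(i, m)\,\bfalse$ for $m < n$.

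The main obstacle is this side induction, in particular arranging the invariant so that it simultaneously controls the question pattern and the answer pattern: the two are intertwined because $\tau$ only extends when its input is $\btrue$-free, so a single $\btrue$ in the middle of $\ans$ would already prevent the interrogation from reaching that length. Once this coupling is spelled out, the rest of the argument is routine.
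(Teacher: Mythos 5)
Your proof is correct and follows essentially the same approach as the paper: ask the questions $(i,0),(i,1),\dots$ in order and output as soon as a $\btrue$ answer arrives, then verify both directions by induction on the interrogation. The only (immaterial) difference is that the paper's tree returns the index of the first $\btrue$ occurring anywhere in $\ans$, while yours only accepts a $\btrue$ in the final position and sends other shapes to $\partundef$ --- both work, since answer lists with an interior $\btrue$ never arise in an interrogation, and your write-up of the invariant is more explicit than the paper's, which defers the verification to the Coq code.
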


\section{Computational Cores of Oracle Computations}
\label{sec:functions}

In this section,
we prove that if $F$ maps $R\of{Q \to A \to \Prop}$ to a relation $\ofbox{I \to O \to \Prop}$
and $F$ is computable,
then there is a higher-order function $f \of{ (Q \pfun A) \to (I \pfun O)}$ such that
for any $r \of{Q \pfun A}$ with graph $R$,
the graph of $f r$ agrees with $F R$.
This means that every computable functional possesses an explicit computational core, mapping (partially) computable input to (partially) computable output, needed for instance to justify that decidability is transported backwards along Turing reductions (\Cref{coq:partial_decidable}).

In preparation, the following two lemmas state simple properties of interrogations regarding concatenation and determinacy.
Given $\sigma \of {\List{A} \pfun \sumt Q O}$ and $l \of{\List{A}}$ we write
$\sigma \at l$ for the sub-tree of $\sigma$ starting at path $l$, i.e.\ for the tree $\lambda l'.\;\sigma(l \app l')$.\label{intro:subtree}

\begin{lemma}[][noqinterrogation_app]\label{lem:noqinterrogation_app}
	We have interrogations $\interrogate \sigma R {\qs_1} {\ans_1}$ and $\interrogate {\sigma \at \ans_1} R {\qs_2} {\ans_2}$
	if and only if
	$|\qs_2|=|\ans_2|$ and $\interrogate \sigma R {\qs_1 \app \qs_2} {\ans_1 \app \ans_2}$.
\end{lemma}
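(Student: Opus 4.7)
The plan is to prove the two directions separately, both by induction, noting first that any interrogation $\interrogate \sigma R \qs \ans$ satisfies $|\qs|=|\ans|$ by a trivial induction on its derivation; this disposes of the length condition in the forward direction and will let us identify $\qs_2$ with $\ans_2$ structurally in the backward direction.

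For the forward direction, I would induct on the derivation of $\interrogate {\sigma \at \ans_1} R {\qs_2} {\ans_2}$. The base case is immediate: $\qs_2 = \ans_2 = []$, and $\qs_1 \app [] = \qs_1$, $\ans_1 \app [] = \ans_1$, so the assumption $\interrogate \sigma R {\qs_1} {\ans_1}$ already gives the conclusion. For the step case, we have $\qs_2 = \qs_2' \app [q]$ and $\ans_2 = \ans_2' \app [a]$ with $\interrogate {\sigma \at \ans_1} R {\qs_2'} {\ans_2'}$, $(\sigma \at \ans_1)(\ans_2') \hasvalue \inl q$, and $R\,q\,a$. The induction hypothesis gives $\interrogate \sigma R {\qs_1 \app \qs_2'} {\ans_1 \app \ans_2'}$, and by definition of the sub-tree, $(\sigma \at \ans_1)(\ans_2') = \sigma(\ans_1 \app \ans_2')$; applying the interrogation step rule (and using associativity of $\app$) yields $\interrogate \sigma R {\qs_1 \app (\qs_2' \app [q])} {\ans_1 \app (\ans_2' \app [a])}$, as required.

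For the backward direction, I would perform induction on $\qs_2$ from the right (i.e.\ using list reverse induction, or equivalently on $|\qs_2|$ together with decomposing its last element), justified by the length equality $|\qs_2|=|\ans_2|$ which also forces $\ans_2$ to decompose symmetrically. When $\qs_2 = []$ and $\ans_2 = []$, the hypothesis collapses to $\interrogate \sigma R {\qs_1} {\ans_1}$, and the second conclusion is the base case of the interrogation relation applied to $\sigma \at \ans_1$. In the step case, $\qs_2 = \qs_2' \app [q]$ and $\ans_2 = \ans_2' \app [a]$, so the hypothesis becomes $\interrogate \sigma R {\qs_1 \app \qs_2' \app [q]} {\ans_1 \app \ans_2' \app [a]}$; inversion on this interrogation (using the standard list fact that $xs \app [x] = ys \app [y]$ implies $xs = ys$ and $x = y$) gives $\interrogate \sigma R {\qs_1 \app \qs_2'} {\ans_1 \app \ans_2'}$, $\sigma(\ans_1 \app \ans_2') \hasvalue \inl q$, and $R\,q\,a$. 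The induction hypothesis then yields $\interrogate \sigma R {\qs_1} {\ans_1}$ and $\interrogate {\sigma \at \ans_1} R {\qs_2'} {\ans_2'}$, and a final application of the step rule (rewriting $\sigma(\ans_1 \app \ans_2')$ as $(\sigma \at \ans_1)(\ans_2')$) extends the latter to $\qs_2' \app [q]$ and $\ans_2' \app [a]$.

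The mildly awkward step is the backward direction: the interrogation relation is defined by extending on the right, so straightforward structural induction on $\qs_2$ from the left does not match the inductive rule. The cleanest fix is to use a reverse list induction principle for $\qs_2$ (or equivalently induct on the derivation length), combined with the right-decomposition inversion lemma for interrogations. Once that induction principle is in place, everything reduces to associativity of $\app$ and unfolding $\sigma \at \ans_1$.
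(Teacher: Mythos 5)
Your proof is correct, and it follows the route one would expect (and that the accompanying Coq development takes), the paper itself stating this lemma without a written proof: the forward direction by induction on the derivation of the second interrogation, and the backward direction by right-to-left (reverse) induction on $\qs_2$ with the matching decomposition of $\ans_2$ forced by the length condition, everything else reducing to inversion of the step rule, associativity of $\app$, and unfolding $\sigma \at \ans_1$. Your preliminary observation that every interrogation has equally long question and answer lists, and your identification of the need for a reverse induction principle, are exactly the right technical points.
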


\begin{lemma}[][interrogation_output_det]
	Let $R$ be functional and
	$\interrogate \sigma R {\qs_1} {\ans_1}$
	as well as
	$\interrogate \sigma R {\qs_2} {\ans_2}$.
	Then if $| \qs_1 | \leq |\qs_2|$,
	then $\qs_1$ is a prefix of $\qs_2$ and $\ans_1$ is a prefix of $\ans_2$.
\end{lemma}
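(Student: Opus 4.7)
The plan is to prove the stronger statement that $\qs_1$ is a prefix of $\qs_2$ and $\ans_1$ a prefix of $\ans_2$ by strong induction on $|\qs_2|$, inverting the given interrogations at their \emph{tails} (the way they are inductively built) rather than at their heads. This avoids needing any separate forward-inversion principle for interrogations, which would otherwise be required if one tried to induct on $\qs_1$ directly.

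If $|\qs_2| = 0$, then also $|\qs_1| = 0$ by the length hypothesis, and the claim is immediate. Otherwise, I would invert $\interrogate \sigma R \qs_2 \ans_2$ at the tail to obtain $\qs_2 = \qs_2' \app [q_2]$ and $\ans_2 = \ans_2' \app [a_2]$, arising from a shorter interrogation $\interrogate \sigma R \qs_2' \ans_2'$ together with $\sigma \ans_2' \hasvalue \inl q_2$ and $R\,q_2\,a_2$. Now I case-split on whether $|\qs_1| \leq |\qs_2'|$. In the easy case, the induction hypothesis applied to the shorter interrogation gives that $\qs_1$ is a prefix of $\qs_2'$, and hence of $\qs_2$.

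In the remaining case we necessarily have $|\qs_1| = |\qs_2| > 0$, so I can likewise invert the interrogation on $\qs_1, \ans_1$ at the tail as $\qs_1 = \qs_1' \app [q_1]$, $\ans_1 = \ans_1' \app [a_1]$, with $\sigma \ans_1' \hasvalue \inl q_1$ and $R\,q_1\,a_1$. Since now $|\qs_1'| = |\qs_2'|$, the induction hypothesis forces $\qs_1' = \qs_2'$ and $\ans_1' = \ans_2'$ (equal-length prefixes coincide). Then $\sigma \ans_1'$ and $\sigma \ans_2'$ are the same partial value taking the values $\inl q_1$ and $\inl q_2$, so $q_1 = q_2$. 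Functionality of $R$ applied to $R\,q_1\,a_1$ and $R\,q_1\,a_2$ then yields $a_1 = a_2$, so $\qs_1 = \qs_2$ and $\ans_1 = \ans_2$, and the prefix claim is trivial.

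The main obstacle is arranging the inversion scheme so that the case distinction on $|\qs_1|$ versus $|\qs_2'|$ lines up correctly and leaves no gap, and keeping track that the two tail-inversions produce sub-interrogations of \emph{equal} length in the delicate branch. Once this setup is in place, the only genuine uses of the hypotheses are functionality of $\sigma \ans$ as a partial value (pinning down the next question) and functionality of $R$ (pinning down the next answer).
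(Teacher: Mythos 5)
Your proof is correct, and it is essentially the argument the paper relies on (the paper states this lemma without proof, deferring to the Coq development, where it is proved by exactly this kind of induction using determinism of partial values and functionality of $R$). The only implicit ingredient worth making explicit is the invariant $|\qs|=|\ans|$ for any derivable interrogation, which you need both to invert the base case ($\qs_1=[]$ forces $\ans_1=[]$) and to conclude that the equal-length prefixes $\ans_1'$ and $\ans_2'$ coincide; it follows by a trivial induction on the interrogation.
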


Now conveying the main idea, we first define an evaluation function $\delta\; \sigma\; f: \nat \pfun \sumt Q O$ which evaluates
$\sigma \of{\List{A} \pfun \sumt Q O}$
on $f \of{Q \pfun A}$
for at most $n$ questions.
\begin{mathpar}
  \delta\; \sigma\; f\; n := \sigma [\,] \bind \lambda x.\;
  \begin{cases}
    \ret (\inr o) & \text{if } x = \inr o \\
    \ret (\inl q) & \text{if } x = \inl q, n = 0 \\
    f q \bind \lambda a.\;\delta\; (\sigma \at [a])\;f\;n' & \text{if } x = \inl q, n = \succN n'. \\
  \end{cases}
\end{mathpar}

The intuition is that $\delta$ always reads the initial node of the tree $\sigma$ by evaluating~$\sigma [\,]$.
If $\sigma [\,] \hasvalue \inr o$, then $\delta$ returns this output.
Otherwise, if $\sigma [\,] \hasvalue \inl q$ and $\delta$ has to evaluate no further questions ($n = 0$), it returns $\inl q$.
If $\delta$ has to evaluate $\succN n$ questions,
it evaluates $f q \hasvalue a$ and 
recurses on the subtree of $\sigma$ with answer $a$, i.e.\ on $\sigma \at [a]$.
We first verify that $\delta$ composes with interrogations by induction on the interrogation:

\begin{lemma}[][interrogation_plus]
  If
  $\interrogate \sigma {(\lambda q a.\,f q \hasvalue a)} \qs \ans$
  and
  $\delta (\tau \at \ans) f n \hasvalue v$
  then
  $\delta \tau f n \hasvalue v$.
\end{lemma}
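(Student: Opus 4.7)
The plan is induction on the interrogation $\interrogate \sigma (\lambda q\,a.\,f\,q\hasvalue a)\,\qs\,\ans$, holding $\tau$, $f$, $n$, and $v$ fixed throughout. I read the statement with $\sigma$ and $\tau$ denoting the same tree, which appears to be the intent.

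In the base case $\qs=\ans=[\,]$, we have $\tau \at \ans = \tau$, so the hypothesis and the goal coincide.

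For the step case, the interrogation decomposes as $\interrogate \tau R (\qs_0\app [q])(\ans_0\app[a])$, obtained from a sub-interrogation $\interrogate \tau R\,\qs_0\,\ans_0$ together with $\tau\,\ans_0\hasvalue \inl q$ and $f\,q\hasvalue a$. The IH applied to the sub-interrogation gives: whenever $\delta(\tau\at\ans_0)\,f\,n\hasvalue v'$, also $\delta\,\tau\,f\,n\hasvalue v'$ for any $v'$. Hence, from the hypothesis $\delta(\tau\at(\ans_0\app[a]))\,f\,n\hasvalue v$ it suffices to derive the intermediate $\delta(\tau\at\ans_0)\,f\,n\hasvalue v$ and then invoke the IH at the same budget $n$. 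To produce this intermediate, I would unfold $\delta$ once at $\tau\at\ans_0$: its head evaluation $(\tau\at\ans_0)[\,]=\tau\,\ans_0$ yields $\inl q$, so when $n=\succN n'$ the computation queries $f\,q$, reads $a$, and recurses on $\delta((\tau\at\ans_0)\at[a])\,f\,n'=\delta(\tau\at(\ans_0\app[a]))\,f\,n'$, using the associativity $(\tau\at\ans_0)\at[a]=\tau\at(\ans_0\app[a])$.

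The main obstacle — and what makes this statement strictly tighter than the budget-increasing variant — is reconciling the budget: the hypothesis provides convergence at budget $n$, whereas the unfolding above feeds budget $n-1$ into its recursive call. I would isolate the gap as a short budget-coherence lemma for $\delta$ that bridges $\delta\,\sigma\,f\,n$ and $\delta\,\sigma\,f\,(n-1)$ for the values produced in the branch where the head query has already been resolved by a completed interrogation step, and handle the boundary $n=0$ directly by inspecting the head of $\tau\at\ans$ (in which case the hypothesis forces $v$ to be exactly the outstanding question $\inl q$ that $\delta\,\tau\,f\,0$ likewise reports, via the interrogation). Once that coherence step is in place, the induction closes uniformly and no further analysis is needed.
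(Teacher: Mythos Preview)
Your induction scheme is the right one, and you correctly put your finger on the budget mismatch. But the ``budget-coherence lemma'' you reach for cannot exist, because the statement as printed --- with the same $n$ on both sides --- is actually false. Take $\tau$ with $\tau[\,]\hasvalue\inl q_0$ and $\tau[a_0]\hasvalue\inr o$, and $f\,q_0\hasvalue a_0$. Then $\interrogate\tau{(\lambda q a.\,f q\hasvalue a)}{[q_0]}{[a_0]}$ holds and $\delta(\tau\at[a_0])\,f\,0\hasvalue\inr o$, yet $\delta\,\tau\,f\,0\hasvalue\inl q_0\neq\inr o$. Your $n=0$ analysis breaks for exactly this reason: the head of $\tau\at\ans$ need not be the question $q$ from the last interrogation step --- it can be a fresh question $q'$ or already an output $o$, neither of which coincides with what $\delta\,\tau\,f\,0$ reports.

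The lemma's name \texttt{interrogation\_plus} gives away the missing piece: the conclusion should read $\delta\,\tau\,f\,(n+|\ans|)\hasvalue v$. With that correction your induction on the interrogation closes without any auxiliary lemma. In the step case, from $\delta(\tau\at(\ans_0\app[a]))\,f\,n\hasvalue v$ you unfold $\delta(\tau\at\ans_0)\,f\,(n+1)$: the head $\tau\,\ans_0\hasvalue\inl q$ together with $f\,q\hasvalue a$ reduces it to $\delta(\tau\at(\ans_0\app[a]))\,f\,n$, hence $\delta(\tau\at\ans_0)\,f\,(n+1)\hasvalue v$. The induction hypothesis at budget $n+1$ then yields $\delta\,\tau\,f\,((n+1)+|\ans_0|)=\delta\,\tau\,f\,(n+|\ans_0\app[a]|)\hasvalue v$. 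The ``plus'' is exactly the length of the interrogation, absorbed one step at a time --- so the mismatch you noticed is a feature of the correct statement, not something to be patched away.
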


Conversely, every evaluation of $\delta$ yields a correct interrogation:

\begin{lemma}[][evalt_to_interrogation]
  If $\delta\;\sigma\;f\;n \hasvalue \inr o$
  then there are $\qs$ and $\ans$ with
  $|\qs| \leq n$ and $\interrogate \sigma {(\lambda q a.\,f q \hasvalue a)} \qs \ans$,
  and $\sigma\;\ans \hasvalue \inr o$.
\end{lemma}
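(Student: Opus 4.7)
The plan is to proceed by induction on $n$, generalising over $\sigma$ so that the inductive hypothesis applies to arbitrary subtrees $\sigma \at [a]$. In both the base case $n = 0$ and the inductive case, I first inspect the value of $\sigma[\,]$, since unfolding the definition of $\delta$ immediately exposes a case distinction on $\sigma[\,]$ through the $\bind$. In the base case, $\delta\,\sigma\,f\,0 \hasvalue \inr o$ forces $\sigma[\,] \hasvalue \inr o$, because the $\inl q$ branch with $n = 0$ produces $\ret(\inl q)$, not $\ret(\inr o)$. Then $\qs := []$ and $\ans := []$ discharge the goal, with the empty interrogation.

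For the inductive case $n = \succN n'$, suppose $\delta\,\sigma\,f\,(\succN n') \hasvalue \inr o$. Either $\sigma[\,] \hasvalue \inr o$, and the empty witnesses work as above, or $\sigma[\,] \hasvalue \inl q$, $f\,q \hasvalue a$, and $\delta\,(\sigma \at [a])\,f\,n' \hasvalue \inr o$. Applying the inductive hypothesis to the subtree $\sigma \at [a]$ yields $\qs'$, $\ans'$ with $|\qs'| \leq n'$, $\interrogate{\sigma \at [a]}{R}{\qs'}{\ans'}$, and $(\sigma \at [a])\,\ans' \hasvalue \inr o$, where $R := \lambda q\,a.\,f q \hasvalue a$. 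I then take $\qs := q :: \qs'$ and $\ans := a :: \ans'$. The length bound $|\qs| \leq \succN n'$ is immediate, and $\sigma\,\ans = \sigma([a] \app \ans') = (\sigma \at [a])\,\ans' \hasvalue \inr o$ by definition of the subtree. For the interrogation, I invoke \Cref{lem:noqinterrogation_app} with $\qs_1 := [q]$, $\ans_1 := [a]$, $\qs_2 := \qs'$, $\ans_2 := \ans'$: the singleton interrogation $\interrogate \sigma R {[q]} {[a]}$ is obtained from the empty interrogation via the extension rule using $\sigma[\,] \hasvalue \inl q$ and $R\,q\,a$, and the second hypothesis of the lemma is the IH.

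The only real friction point is the mismatch between the recursive shape of $\delta$, which peels off questions from the \emph{left}, and the inductive definition of $\interrogate \sigma R \qs \ans$, which extends $\qs$ and $\ans$ on the \emph{right}. Without \Cref{lem:noqinterrogation_app} one would be tempted to generalise the statement or prove an auxiliary cons-extension lemma for interrogations; but since \Cref{lem:noqinterrogation_app} already packages exactly the needed concatenation property, the induction goes through directly with $\sigma$ suitably generalised.
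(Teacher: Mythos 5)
Your proof is correct and follows exactly the paper's (one-line) argument: induction on $n$ with $\sigma$ generalised, case analysis on $\sigma[\,]$ exposed by the $\bind$, and \Cref{lem:noqinterrogation_app} to glue the singleton interrogation onto the one obtained from the inductive hypothesis for the subtree. Your observation about the left/right mismatch between $\delta$'s recursion and the interrogation rules is precisely why the paper cites that concatenation lemma.
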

\begin{proof}
  By induction on $n$, using \Cref{lem:noqinterrogation_app}. \qed
\end{proof}

Put together, a computable functional is fully captured by $\delta$ for oracles described by partial functions:

\begin{lemma}[][interrogation_equiv_evalt]
  Given a functional $F$ computed by $\tau$
  we have that 
  \[ F (\lambda q a.\,f q \hasvalue a) i o \leftrightarrow \exists n.\;\delta\;(\tau i)\;f\;n \hasvalue \inr o. \]
\end{lemma}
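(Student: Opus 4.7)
The plan is to unfold the definition of ``$F$ is computed by $\tau$'' and derive each direction by invoking one of the two previous lemmas. The statement is essentially the bridge between the interrogation-based description of a computable functional and its step-indexed evaluation via $\delta$, so it should decompose cleanly in two.

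For the forward direction, I would assume $F(\lambda q a.\, f q \hasvalue a)\,i\,o$. By the definition of $F$ being computed by $\tau$, this yields lists $\qs, \ans$ such that $\interrogate{\tau i}{(\lambda q a.\, f q \hasvalue a)}{\qs}{\ans}$ holds and $\tau\, i\, \ans \hasvalue \inr o$. The latter is precisely $(\tau i \at \ans)\,[] \hasvalue \inr o$, and so by the very first case of $\delta$ we immediately get $\delta\, (\tau i \at \ans)\, f\, n \hasvalue \inr o$ for any $n$ (e.g.\ $n = 0$). Applying Lemma~\ref{coq:interrogation_plus} with the interrogation on $\qs, \ans$ and the evaluation on the sub-tree, we conclude $\delta\, (\tau i)\, f\, n \hasvalue \inr o$.

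For the backward direction, I would assume $\delta\,(\tau i)\,f\,n \hasvalue \inr o$ and directly apply Lemma~\ref{coq:evalt_to_interrogation}, obtaining $\qs, \ans$ with $\interrogate{\tau i}{(\lambda q a.\, f q \hasvalue a)}{\qs}{\ans}$ and $\tau\, i\, \ans \hasvalue \inr o$. These are exactly the witnesses required by the definition of $F$ being computed by $\tau$, yielding $F(\lambda q a.\, f q \hasvalue a)\,i\,o$.

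There is no real obstacle here: the two prior lemmas have been designed to be inverses of each other at the level of runs, and the additional observation that the evaluation of $\delta$ on a sub-tree whose root already holds the output succeeds with budget $0$ is immediate from the case analysis in the definition of $\delta$. The only mildly subtle point is matching the step-index $n$ in the interrogation\_plus application to what we have at hand, but since $\delta\,(\tau i \at \ans)\,f\,0$ already gives the output, this is a non-issue.
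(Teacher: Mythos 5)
Your proposal is correct and matches the paper's intended argument: the paper gives no separate proof beyond ``put together'', meaning exactly your combination of \Cref{coq:interrogation_plus} (for the forward direction, after observing that $\tau\,i\,\ans \hasvalue \inr o$ means the subtree $\tau i \at \ans$ yields the output at budget $0$) and \Cref{coq:evalt_to_interrogation} (for the backward direction), with the definition of computability unfolded on both sides. Your closing remark on the step index is apt --- the mechanised version of \Cref{coq:interrogation_plus} actually concludes termination at budget $|\qs|+n$ rather than $n$, but since the goal only asserts $\exists n$, this makes no difference.
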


This is enough preparation to describe the desired computational core of computable functionals:

\begin{theorem}[][Turing_transports_computable]
  If $F$ maps $R\of{Q \to A \to \Prop}$ to a relation $\ofbox{I \to O \to \Prop}$
  and $F$ is computable,
  then there is a partial function $f \of {(Q \pfun A) \to I \pfun O}$ such that if
  $R$ is computed by a partial function $r \of{ Q \pfun A}$, then
  $F R$ is computed by $f r$.
\end{theorem}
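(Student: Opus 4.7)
The plan is to bootstrap the construction directly from \Cref{coq:interrogation_equiv_evalt}. Unpacking computability of $F$ yields a tree $\tau$ such that for any partial $r\of{Q\pfun A}$ with graph $R$,
\[ F R\,i\,o \;\leftrightarrow\; \exists n.\;\delta(\tau i)\,r\,n \hasvalue \inr o, \]
so the task reduces to realising the right-hand side as a genuine element of $I \pfun O$.

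Concretely, I define $f r i$ by an unbounded search: step through $n=0,1,2,\ldots$, inspect $\delta(\tau i)\,r\,n$, and as soon as some $n$ makes it converge to $\inr o$, return $o$. The bind and mu-style search operations provided by the abstract partial-value type (\Cref{def:partial}) suffice to express this as a partial function independently of the chosen representation. Correctness of $f r$ as computing $F R$ is then immediate from the equivalence above: $f r i \hasvalue o$ holds iff some $n$ witnesses $\delta(\tau i)\,r\,n \hasvalue \inr o$, iff $F R\,i\,o$.

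The main subtlety is to check that the search is coherent, i.e.\ that two different indices $n_1 \neq n_2$ cannot yield different outputs $o_1 \neq o_2$. Here \Cref{coq:evalt_to_interrogation} turns each witness $n_k$ into an interrogation $\interrogate{\tau i}{R}{\qs_k}{\ans_k}$ with $\tau i\,\ans_k \hasvalue \inr o_k$. Since $R$ is the graph of a function and hence functional, \Cref{coq:interrogation_output_det} forces the shorter interrogation to be a prefix of the longer one; but $\tau i$ already commits to the output $\inr o_k$ on the shorter answer list, so the longer interrogation cannot properly extend it (it would need to ask a question there, contradicting the output). Hence $o_1 = o_2$, so $f r$ is well-defined. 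This is the only point I expect to require care: once coherence is established, assembling $f$ from $\delta$ and checking the computation condition is routine.
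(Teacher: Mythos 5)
Your proposal is correct and matches the paper's proof: the paper likewise defines $f\,r\,i$ by unbounded search for an $n$ with $\delta\,(\tau i)\,r\,n \hasvalue \inr o$, and justifies correctness via \Cref{coq:interrogation_equiv_evalt} together with \Cref{coq:interrogation_output_det}. The coherence argument you spell out is precisely the role the paper assigns to \Cref{coq:interrogation_output_det}, so there is nothing to add.
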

\begin{proof}
  Let $F$ be computed by $\tau$.
  We define $f r i$ to search for $n$ such that $\delta\;(\tau i)\;f\;n$ returns $\inr o$, and let it return this $o$.
  The claim then follows straightforwardly by the previous lemma and \Cref{coq:interrogation_output_det}.
  \qed
\end{proof}

\section{Properties of Oracle Semi-Decidability}
\label{sec:prop_sdec}

In the following two sections we establish some standard properties of our synthetic renderings of oracle semi-decidability and Turing reducibility, respectively.
All proofs are concise but precise, given that in the synthetic setting they just amount to the essence of the computational manipulations often described just informally for a concrete model of computation in the analytic approach to computability employed e.g.\ in textbooks.

We first establish the connection to non-relative semi-decidability.

\setCoqFilename{TuringReducibility.SemiDec}
\begin{lemma}[][semi_decidable_to_OracleSemiDecidable]
	If $p$ is semi-decidable, then $\sdec q p$ for any $q$.
\end{lemma}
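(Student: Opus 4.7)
The plan is to construct the required oracle semi-decider by simply ignoring the oracle and re-using the given non-relative semi-decider for $p$. Since $p$ is semi-decidable, we have a partial function $f\of{X\pfun\mathbbm{1}}$ with $\forall x.\;p x \leftrightarrow f x \hasvalue \star$.

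Define the functional $F$ by
\[ F\,R\,x\,u ~:=~ f x \hasvalue u, \]
so that $F$ maps any relation $R\of{Y\to\bool\to\Prop}$ to a relation of type $\ofbox{X\to\mathbbm{1}\to\Prop}$, and in particular $F\,\charrel{q}\,x\,\star$ is equivalent to $f x \hasvalue \star$, which in turn is equivalent to $p x$ by the semi-decidability assumption.

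It remains to verify that $F$ is computable, and this is precisely item 2 of \Cref{lem:allaboutcomp} instantiated with $I := X$ and $O := \mathbbm{1}$: for any partial function $f\of{I\pfun O}$, the functional $\lambda R\,i\,o.\;f i \hasvalue o$ is computable (the witnessing tree simply executes $f$ on the input and returns $\inr o$, never consulting the oracle). No step here is subtle; the only thing to note is that the choice of oracle $q$ plays no role whatsoever, which is exactly what one expects since unconditional semi-decidability is strictly stronger than relative semi-decidability. \qed
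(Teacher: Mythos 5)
Your proof is correct and matches the paper's argument exactly: both define the oracle-ignoring functional $F\,R\,x\,o := f x \hasvalue o$ and invoke item 2 of \Cref{lem:allaboutcomp} for computability. Nothing to add.
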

\begin{proof}
	Let $f \of{X \pfun \mathbbm{1}}$ be a semi-decider for $p$.
	With \Cref{lem:allaboutcomp} (2) the functional mapping $R$ to $\lambda x o.\;f x \hasvalue o$ is computable, and it is easily shown to be a semi-decider for $p$ relative to $q$.
  \qed
\end{proof}

\begin{lemma}[][OracleSemiDecidable_semi_decidable]\label{lem:sdec_dec}
	If $\sdec q p$ and $q$ is decidable, then $p$ is semi-decidable.
\end{lemma}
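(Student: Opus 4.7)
The plan is to exploit the computational core of a computable functional established in \Cref{coq:Turing_transports_computable}, instantiating it at the partial function induced by the given decider of $q$.

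First I would unfold the hypotheses: from $\sdec q p$ we obtain a computable functional $F$ of type $\ofbox{(Y \to \bool \to \Prop) \to (X \to \mathbbm{1} \to \Prop)}$ with $\forall x.\,p x \leftrightarrow F \charrel q x \star$, and from decidability of $q$ we obtain a total function $g \of{Y \to \bool}$ with $\forall y b.\,\charrel q y b \leftrightarrow g y = b$.

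Next I would turn $g$ into a partial function $r \of{Y \pfun \bool}$ by $r y := \ret (g y)$. Since $g$ is total, we have $r y \hasvalue b \leftrightarrow g y = b$, and combining with the decider property this gives $r y \hasvalue b \leftrightarrow \charrel q y b$, i.e.\ the graph of $r$ is exactly $\charrel q$. Now I would apply \Cref{coq:Turing_transports_computable} to $F$ to obtain a partial function $\phi \of{(Y \pfun \bool) \to (X \pfun \mathbbm{1})}$ such that whenever a relation is computed by a partial function, its image under $F$ is computed by $\phi$ applied to that function. Setting $h := \phi\,r$ yields a partial function $h \of{X \pfun \mathbbm{1}}$ whose graph agrees with $F \charrel q$.

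Putting the chain together: $p x \leftrightarrow F \charrel q x \star \leftrightarrow h x \hasvalue \star$, so $h$ is a semi-decider for $p$ and hence $p$ is semi-decidable. The only delicate step is verifying that $r$ really computes $\charrel q$ in the sense required by \Cref{coq:Turing_transports_computable}, but this is immediate from the totality of $g$ and the defining equivalence of the decider; everything else is just plugging pieces together.
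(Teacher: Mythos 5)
Your proposal is correct and matches the paper's own proof: the paper likewise applies the computational-core theorem (\Cref{coq:Turing_transports_computable}) to the semi-deciding functional $F$ and instantiates it at $\lambda y.\,\ret(g y)$ for the decider $g$ of $q$. Your additional check that the graph of $\ret \circ g$ is exactly $\charrel q$ is the same routine verification the paper leaves implicit.
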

\begin{proof}
	Let $g$ decide $q$ and let $F$ be a semi-decider of $p$ relative to $q$.
	Let $f$ be the function from \Cref{coq:Turing_transports_computable} that transports computable functions along $F$.
	Now $f (\lambda y.\;\mathsf{ret} (g y))$ is a semi-decider for $p$. \qed
\end{proof}

We next establish closure properties of oracle semi-decidability along reductions.
First, we can replace the oracle by any other oracle it reduces to:

\begin{lemma}[][Turing_transports_sdec]\label{lem:trans1}
	If $\sdec{q}p$ and $q \redT q'$, then also $\sdec {q'} p$.
\end{lemma}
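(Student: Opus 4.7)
The plan is to compose the two witnessing functionals. Let $F$ witness $\sdec q p$, so $F$ is computable and $\forall x.\;px \leftrightarrow F\charrel q x\star$, and let $G$ witness $q \redT q'$, so $G$ is computable and $\forall y b.\;\charrel q y b \leftrightarrow G\charrel{q'} y b$.

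I would define the candidate functional by $H := \lambda R.\;F(G R)$, which maps $R\of{Y\to\bool\to\Prop}$ to a relation of type $\ofbox{X\to\mathbbm{1}\to\Prop}$. Computability of $H$ is immediate from \Cref{coq:computable_comp} applied to the computable functionals $G$ (viewed as mapping $(Y\to\bool\to\Prop)$ to $(Y\to\bool\to\Prop)$) and $F$ (viewed as mapping $(Y\to\bool\to\Prop)$ to $(X\to\mathbbm{1}\to\Prop)$).

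It then remains to verify the semi-decision property $\forall x.\;px \leftrightarrow H\charrel{q'}x\star$. Unfolding, $H\charrel{q'}x\star = F(G\charrel{q'})x\star$, and the hypothesis on $G$ gives us that the relations $G\charrel{q'}$ and $\charrel q$ coincide pointwise. Since $F$ is computable, its behaviour only depends on its input oracle pointwise (this is clear from the definition of computability via interrogations, as the interrogation relation only inspects $R$ on the queries it actually makes), so $F(G\charrel{q'})x\star$ is logically equivalent to $F\charrel q x\star$, which in turn is equivalent to $px$ by the hypothesis on $F$.

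The only mild subtlety is the pointwise-extensionality step for $F$: strictly speaking one needs to observe that if $R$ and $R'$ agree as relations, then $\interrogate{\tau i}{R}{\qs}{\ans}$ holds iff $\interrogate{\tau i}{R'}{\qs}{\ans}$ holds, which is a trivial induction on the interrogation. Everything else is bookkeeping, and no new trees need to be constructed thanks to the composition lemma.
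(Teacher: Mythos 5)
Your proposal is correct and matches the paper's proof, which likewise just invokes the composition lemma (\Cref{coq:computable_comp}) to compose the semi-decider functional with the Turing reduction. The extensionality observation you flag — that a computable functional's output only depends on its oracle pointwise, by a trivial induction on interrogations — is exactly the bookkeeping the paper leaves implicit under ``straightforward.''
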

\begin{proof}
	Straightforward using \Cref{coq:computable_comp}. \qed
\end{proof}

Secondly, if we can semi-decide a predicate $p$ relative to $q$, then also simpler predicates should be semi-decidable relative to $q$.
This however requires a stricter notion of reduction, for instance many-one reductions that rule out complementation.
As in~\cite{forster2019synthetic}, we say that $p':X\to \Prop$ many-one reduces to $p:Y\to \Prop$ if there is a function $f:X\to Y$ embedding $p'$ into $p$:\label{intro:redm}
\[ p' \redm p ~:=~ \exists f:X \to Y.\;\forall x.\; p' x \leftrightarrow p (f x) \]

Now the sought after property can be stated as follows:

\begin{lemma}[][red_m_transports_sdec]\label{lem:trans2}
	If $\sdec q {p}$ and $p' \redm p$, then also $\sdec q {p'}$.
\end{lemma}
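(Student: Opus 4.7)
The plan is to build the required semi-decider for $p'$ by precomposing the given semi-decider for $p$ with the many-one reduction $f\of{X\to Y}$. Concretely, from $\sdec q p$ we have a computable functional $F$ with $\forall y.\,p\,y \leftrightarrow F\,\charrel q\,y\,\star$, and from $p' \redm p$ we have $f\of{X\to Y}$ with $\forall x.\,p'\,x \leftrightarrow p(f x)$. I would set
\[ F'\,R\,x\,\star ~:=~ F\,R\,(f x)\,\star, \]
which is a functional mapping relations $R\of{Y\to\bool\to\Prop}$ to relations of type $\ofbox{X\to\mathbbm{1}\to\Prop}$.

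Computability of $F'$ follows immediately from \Cref{lem:allaboutcomp}~(1): precomposition of a computable functional with a function on inputs preserves computability, and here the input transformation is just $f$. No new tree needs to be constructed by hand; if $\tau$ computes $F$, then $\tau'\,x\,l := \tau\,(f x)\,l$ computes $F'$.

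For correctness, for any $x\of X$ we chain the two equivalences:
\[ F'\,\charrel q\,x\,\star ~\leftrightarrow~ F\,\charrel q\,(f x)\,\star ~\leftrightarrow~ p(f x) ~\leftrightarrow~ p'\,x, \]
using the definition of $F'$, the semi-decider property of $F$, and the many-one reduction. Thus $F'$ is an oracle semi-decider for $p'$ relative to $q$, establishing $\sdec q {p'}$.

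There is no real obstacle here: the lemma is essentially a direct application of the precomposition closure property, since many-one reductions act entirely at the level of inputs, leaving the oracle and output channel untouched. The only minor point to keep in mind is that $f$ is a total function $X\to Y$, so \Cref{lem:allaboutcomp}~(1) applies directly without needing the partial-binding variant.
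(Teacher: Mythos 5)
Your proposal is correct and follows essentially the same route as the paper: the functional is obtained by precomposing the given oracle semi-decider with the many-one reduction $f$, computability comes from \Cref{lem:allaboutcomp}~(1) (with exactly the tree $\tau'\,x\,l := \tau\,(f x)\,l$), and correctness is the three-step chain of equivalences you give. The paper's one-line proof additionally cites \Cref{lem:allaboutcomp}~(4) and \Cref{coq:computable_bind}, reflecting that its mechanisation packages the construction as ``run $F$ at $f x$, then bind with the constant-$\star$ output''; since the output type is already $\mathbbm{1}$, your direct definition $F'\,R\,x\,o := F\,R\,(f x)\,o$ makes that extra step unnecessary, which is a harmless (indeed slightly leaner) variation of the same idea.
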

\begin{proof}
	Straightforward using \Cref{lem:allaboutcomp} (1,4) and \Cref{coq:computable_bind}. \qed
\end{proof}

\section{Properties of Turing Reducibility}
\label{sec:prop_turing}

We continue with similarly standard properties of Turing reducibility.
Again, all proofs are concise but precise.
As a preparation, we first note that Turing reducibility can be characterised without the relational layer.

\setCoqFilename{TuringReducibility.OracleComputability}
\begin{lemma}[][Turing_reducible_without_rel]
  $p \preceq_{\mathsf{T}} q$ if and only if there
  is $\tau$ such that for all $x$ and $b$
  we have \[\charrel p x b \leftrightarrow \exists \qs \ans.\; \interrogate {\tau x} q \qs \ans \land
    \tau\,x\,\ans \hasvalue \out b.\]
\end{lemma}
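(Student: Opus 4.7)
The plan is to observe that this lemma is essentially a straightforward unfolding of two nested definitions: the definition of $p \redT q$ as existence of a computable functional $F$ with $\charrel p x b \leftrightarrow F \charrel q x b$, and the definition of computability of $F$ in terms of a tree $\tau$ satisfying the interrogation-based equivalence. Composing these two equivalences should yield the desired characterisation directly, so I expect no real obstacles.

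For the forward direction, I would unfold $p \redT q$ to obtain a functional $F$ together with a tree $\tau$ of the appropriate type computing it, along with the reduction equation $\charrel p x b \leftrightarrow F \charrel q x b$. Instantiating the computability property of $F$ at the characteristic relation $\charrel q$ (which is what the interrogation argument $q$ in the statement refers to, up to the usual identification) gives that $F \charrel q x b$ is equivalent to $\exists \qs \ans.\;\interrogate{\tau x}{\charrel q}\qs\ans \land \tau x\, \ans \hasvalue \inr b$. Chaining this with the reduction equivalence yields the claimed property of $\tau$.

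For the reverse direction, given $\tau$ with the stated property, I would define the functional
\[
  F R\,x\,b ~:=~ \exists \qs\;\ans.\;\interrogate{\tau x} R \qs \ans \land \tau\,x\,\ans \hasvalue \inr b,
\]
so that $\tau$ computes $F$ by construction. Specialising to $R := \charrel q$ gives $F \charrel q x b \leftrightarrow \charrel p x b$ by hypothesis, and therefore $F$ witnesses $p \redT q$.

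The only subtle point worth flagging is the mild notational overloading in the statement: the $q$ appearing inside the interrogation is actually its characteristic relation $\charrel q$, since interrogations are defined over binary relations $Y \to \bool \to \Prop$ rather than over predicates $Y \to \Prop$. Once this identification is made explicit, both directions are just a repackaging of definitions and require no additional combinatorial work on interrogations.
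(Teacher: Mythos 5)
Your proof is correct and is exactly the intended argument: the paper states this lemma without proof precisely because it is the definitional unfolding you describe, composing the two equivalences in one direction and taking $F R\,x\,b := \exists \qs\,\ans.\;\interrogate{\tau x} R \qs \ans \land \tau\,x\,\ans \hasvalue \inr b$ (which $\tau$ computes by construction) in the other. Your remark that the $q$ in the interrogation stands for $\charrel q$ is also the right reading of the statement.
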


Now to begin, we show that Turing reducibility is a preorder.

\setCoqFilename{TuringReducibility.OracleComputability}
\begin{theorem}[][Turing_refl]
  Turing reducibility is reflexive and transitive.
\end{theorem}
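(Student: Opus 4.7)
The plan is to unfold the definition of $\redT$ and rely on the closure properties already established.

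For reflexivity, I would take the identity functional $F := \lambda R.\,R$ as witness for $p \redT p$. This is computable by \Cref{coq:computable_id}, and the required equivalence $\charrel p x b \leftrightarrow F \charrel p x b$ reduces to $\charrel p x b \leftrightarrow \charrel p x b$, which is trivial.

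For transitivity, assume $p \redT q$ witnessed by a computable functional $F_1$ with $\charrel p x b \leftrightarrow F_1 \charrel q x b$, and $q \redT r$ witnessed by a computable functional $F_2$ with $\charrel q y b \leftrightarrow F_2 \charrel r y b$. I would set $F := \lambda R.\; F_1\,(F_2\,R)$, which is computable by \Cref{coq:computable_comp}. It remains to show $\charrel p x b \leftrightarrow F_1\,(F_2 \charrel r)\,x\,b$. Composing the two given equivalences reduces this to showing that $F_1$ sends the extensionally equivalent oracles $\charrel q$ and $F_2 \charrel r$ to extensionally equivalent relations.

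The one subtle step is this last extensionality argument: in CIC without functional extensionality, $F_2 \charrel r$ and $\charrel q$ are pointwise equivalent but not definitionally equal. However, unfolding computability of $F_1$ via its tree $\tau_1$, the relation $F_1\,R\,x\,o$ only mentions $R$ through an interrogation $\interrogate{\tau_1 x}{R}{\qs}{\ans}$, and a straightforward induction on interrogations shows that if $R\,q\,a \leftrightarrow R'\,q\,a$ for all $q,a$, then $\interrogate{\tau_1 x}{R}{\qs}{\ans}$ iff $\interrogate{\tau_1 x}{R'}{\qs}{\ans}$. This transfers the equivalence through $F_1$ and completes the proof. I expect this extensional compatibility of computable functionals with oracles to be the only non-routine point; everything else is a direct application of the closure lemmas from \Cref{sec:closure}.
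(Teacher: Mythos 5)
Your proposal matches the paper's proof exactly: reflexivity via the computable identity functional and transitivity via \Cref{coq:computable_comp}. The extensionality point you raise is real but correctly resolved (computable functionals access the oracle only through interrogations, which respect pointwise equivalence of oracles); the paper simply leaves this detail implicit.
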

\begin{proof}
  Reflexivity follows directly by the identity functional being computable via \Cref{lem:allaboutcomp}.
  Transitivity follows with \Cref{coq:computable_comp}. \qed
\end{proof}

In fact, Turing reducibility is an upper semilattice:

\begin{theorem}[][Turing_upper_semi_lattice]
  Let $p \of{ X \to \Prop}$ and $q \of {Y \to \Prop}$.
  Then there is a lowest upper bound $p + q \of{\sumt X Y \to \Prop}$ w.r.t.\ $\redT$:
  Let $(p + q) (\mathsf{inl}\; x) := p x$ and $(p + q) (\mathsf{inr}\; y) := q y$.
  then $p + q$ is the join of $p$ and $q$ w.r.t $\redT$, i.e.\
  $p \redT p + q$, $q \redT p + q$, and for all $r$ if $p \redT r$ and $q \redT r$ then $p + q \redT r$.
\end{theorem}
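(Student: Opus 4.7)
The plan is to exhibit explicit witnesses for each of the three required reductions, leaning on the closure lemmas from \Cref{sec:closure} where convenient and constructing trees by hand where not.

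For the two injection inequalities $p \redT p+q$ and $q \redT p+q$, the intended reduction simply forwards the oracle query: on input $x$, ask $\inl x$ (respectively $\inr y$) to the oracle and return the answer. I would obtain this abstractly by taking the identity functional (computable by \Cref{coq:computable_id}), precomposing with $g := \inl : X \to \sumt X Y$ via \Cref{coq:computable_precompose} (1), which yields the computable functional $\lambda R\, x\, b.\, R (\inl x)\, b$. Verifying $\charrel{p} x b \leftrightarrow \charrel{p+q}(\inl x)\, b$ is then immediate by unfolding $p+q$ and the two cases $b = \btrue, \bfalse$ of $\charrel{\cdot}$, and the argument for $q$ is completely symmetric with $g := \inr$.

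For the joinability claim, let $\tau_1$ and $\tau_2$ be trees computing the functionals $F_1$ and $F_2$ witnessing $p \redT r$ and $q \redT r$, respectively. I would define the combined tree $\tau : \sumt X Y \to \List{\bool} \pto \sumt Z \bool$ by pattern matching on the first argument, setting $\tau(\inl x)\, l := \tau_1\, x\, l$ and $\tau(\inr y)\, l := \tau_2\, y\, l$. Because the tree at input $\inl x$ coincides exactly with $\tau_1$ at $x$, interrogations of $\tau(\inl x)$ against an oracle $R$ are in bijection with interrogations of $\tau_1 x$ against $R$, so the specification $\charrel{p+q}(\inl x)\, b \leftrightarrow \charrel p x b$ reduces to that of $F_1$; the case $\inr y$ is handled the same way via $F_2$.

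The main obstacle is that step (3) does not cleanly fit the branching pattern of \Cref{coq:computable_if}, since $F_1$ and $F_2$ have distinct input types $X$ and $Y$ rather than a shared input type carrying a boolean test. Building the tree $\tau$ directly by case analysis on $\sumt X Y$ sidesteps this type mismatch, which is why I would not try to force the proof through the high-level abstractions; an alternative would be to first extend $F_1$ and $F_2$ to functionals on $\sumt X Y$ (diverging on the wrong summand) and then glue them with \Cref{coq:computable_if}, but the direct construction is shorter and the correctness verification is a routine unfolding in either case.
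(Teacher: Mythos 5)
Your proposal is correct and matches the paper's proof essentially exactly: the two injections are obtained by precomposing the identity functional with $\mathsf{inl}$/$\mathsf{inr}$ (\Cref{lem:allaboutcomp} (1) and \Cref{coq:computable_id}), and the join is witnessed by the functional and tree defined by case analysis on $\sumt X Y$, dispatching to $\tau_1$ or $\tau_2$. Your side remark about why \Cref{coq:computable_if} does not directly apply is accurate and consistent with the paper's choice to build the tree directly.
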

\begin{proof}
  The first two claims follow by \Cref{lem:allaboutcomp} (1) and \Cref{coq:computable_id}.
  For the third, let $F_1$ reduce $p$ to $r$ and be computed by $\tau_1$ and $F_2$ reduce $q$ to $r$ computed by $\tau_2$.
  Define
  \begin{mathpar}
    F R\,z\,o := \begin{cases}
                       F_1 R \,x \,o & \text{if } z = \mathsf{inl}\; x \\
                       F_2 R \,x \,o & \text{if } z = \mathsf{inr}\; y
                     \end{cases} \and

    \tau z l :=
    \begin{cases}
      \tau_1 x l & \text{if } z = \mathsf{inl}\; x \\
      \tau_2 y l & \text{if } z = \mathsf{inr}\; y
    \end{cases}
  \end{mathpar}
  $\tau$ computes $F$, and $F$ reduces $p + q$ to $r$.
\qed
\end{proof}

We continue by establishing properties analogous to the ones concerning oracle semi-decidability discussed in \Cref{sec:prop_sdec}.
First, analogously to \Cref{coq:semi_decidable_to_OracleSemiDecidable}, the non-relativised notion of decidability implies %
Turing reducibility:

\setCoqFilename{TuringReducibility.SemiDec}
\begin{lemma}[][bisemidecidable_Turing]
	If $p$ and $\overline p$ are semi-decidable, then $p\redT q$ for any $q$.
	In particular, if $p$ is decidable, then $p\redT q$ for any $q$.
\end{lemma}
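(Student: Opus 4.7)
The plan is to exploit the constructive content of the two semi-deciders to obtain a genuine partial decider for $p$, which then induces a Turing reduction to an arbitrary $q$ by \Cref{lem:allaboutcomp} (2), since no actual oracle queries are needed. The \emph{in particular}-clause is immediate once the main statement is established, because a decider $g \of{X\to\bool}$ immediately yields semi-deciders for both $p$ and $\overline p$ (returning $\star$ when $gx = \btrue$ or $gx = \bfalse$, respectively).

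First, let $f_1, f_2 \of{X \pfun \mathbbm{1}}$ be semi-deciders for $p$ and $\overline p$, so $f_1 x \hasvalue \star \leftrightarrow px$ and $f_2 x \hasvalue \star \leftrightarrow \neg px$. The key step is to combine them into a partial function $f \of{X \pfun \bool}$ satisfying $f x \hasvalue \btrue \leftrightarrow p x$ and $f x \hasvalue \bfalse \leftrightarrow \neg p x$. Concretely, $f$ runs $f_1 x$ and $f_2 x$ in a dovetailed fashion and returns $\btrue$ as soon as $f_1$ terminates and $\bfalse$ as soon as $f_2$ terminates. Such a parallel-or combinator on partial values is exactly the kind of primitive that the abstract partial-function interface of \Cref{def:partial} is designed to support.

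Second, I would verify that $f$ satisfies $\charrel p x b \leftrightarrow f x \hasvalue b$. From left to right, by definition of $\charrel p$ either $p x$ holds (then $f_1 x \hasvalue \star$, so $f x \hasvalue \btrue$) or $\neg p x$ holds (then $f_2 x \hasvalue \star$, so $f x \hasvalue \bfalse$). From right to left, if $f x \hasvalue \btrue$ then $f_1 x$ terminates, whence $p x$; analogously for $\bfalse$, using that $f_2$ is a semi-decider for $\overline p$.

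Third, applying \Cref{lem:allaboutcomp} (2) to $f$ yields that the functional $F R\,x\,b := f x \hasvalue b$ is computable, regardless of the oracle $R$. Instantiating at $R := \charrel q$ and using the previous equivalence gives $\charrel p x b \leftrightarrow F \charrel q x b$, establishing $p \redT q$ for the arbitrary $q$. The main subtlety is the dovetailed combination of $f_1$ and $f_2$ in the first step; everything else is pure bookkeeping on top of the closure lemmas already proved.
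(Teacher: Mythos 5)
Your proposal is correct and follows essentially the same idea as the paper's proof: dovetail the two semi-deciders so that the resulting computation never queries the oracle. The paper constructs the tree $\tau$ directly (searching for the least step-index at which one of the semi-deciders succeeds and returning $\out\btrue$ or $\out\bfalse$ accordingly), whereas you package the identical dovetailing into an intermediate partial function $f\of{X\pfun\bool}$ and then invoke \Cref{lem:allaboutcomp}~(2) --- a slightly more modular route to the same tree; just be aware that the ``parallel-or'' you rely on is not a primitive of the interface in \Cref{def:partial} but must itself be assembled from $\mu$ and the step-indexed evaluation $\rho$, which is precisely the search the paper performs inline.
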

\begin{proof}
  Let $f$ semi-decide $p$ and $g$ semi-decide $\overline p$.
  Define $F R\,x\,b := \charrel p x b$ and let $\tau x l$ ignore $l$ and find the least $n$ such that either $f x n = \btrue$ or $g x n = \btrue$ and then return $\inr (f x n)$.
  \qed
\end{proof}

Secondly, \Cref{lem:trans1,lem:trans2} correspond to the transitivity of Turing reducibility, the latter relying on the fact that many-one reductions induce Turing reductions:

\setCoqFilename{TuringReducibility.OracleComputability}
\begin{lemma}[][red_m_impl_red_T]
	If $p\redm q$ then $p \redT q$.
\end{lemma}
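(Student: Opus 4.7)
The plan is to turn the many-one reduction witness $f:X\to Y$ with $\forall x.\;p\,x\leftrightarrow q(f\,x)$ directly into a Turing reduction by defining a functional that simply forwards each query about $p\,x$ to the oracle as a query about $q(f\,x)$. Concretely, I would set
\[
F\,R\,x\,b ~:=~ R\,(f\,x)\,b.
\]
When instantiated with $R := \charrel q$, this yields $\charrel q (f\,x)\,b$, which by the reduction agrees with $\charrel p\,x\,b$ for both values of $b$, establishing the characteristic-relation equivalence required by the definition of $\redT$.

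For the computability side, I would obtain $F$ as a composition of already-established computable functionals rather than constructing a tree by hand. The identity functional on $Q\to A\to\Prop$ is computable by \Cref{coq:computable_id}, and precomposition of a computable functional with a function on inputs preserves computability by \Cref{lem:allaboutcomp} (1). Applying the precomposition with $g := f$ to the identity functional gives exactly $\lambda R\,x\,b.\;R\,(f\,x)\,b$, which is $F$.

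Finally, I would verify the defining equivalence of $\redT$: unfolding $F\,\charrel q\,x\,b$ gives $\charrel q\,(f\,x)\,b$, and a short case analysis on $b$ together with the many-one equivalence $p\,x\leftrightarrow q(f\,x)$ delivers $\charrel p\,x\,b\leftrightarrow\charrel q\,(f\,x)\,b$. There is no real obstacle here; the only mild subtlety is remembering to handle both branches of $\charrel{\cdot}$ (the $\btrue$ branch uses the forward direction of the many-one equivalence, the $\bfalse$ branch its contrapositive), but both are immediate.
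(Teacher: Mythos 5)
Your proposal is correct and matches the paper's proof, which defines exactly the same functional $F\,R\,x\,b := R\,(f\,x)\,b$; you additionally spell out the computability argument (identity functional plus precomposition) and the case analysis on $b$, which the paper leaves implicit.
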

\begin{proof}
  Let $f$ be the many-one reduction.
  Define $F R\,x\,b := R\,(f x)\,b$.
  \qed
\end{proof}

Thirdly, in connection to \Cref{lem:sdec_dec}, we prove the more involved result that Turing reducibility reflects decidability if and only if Markov's principle holds.
Markov's principle is an axiom in constructive mathematics stating that satisfiability of functions $\ofbox{\nat\to\bool}$ is stable under double negation, i.e.:
\[ \mathsf{MP} := \forall f\of{\nat\to\bool}.\;\neg\neg(\exists n.\;f n = \btrue) \to \exists n.\;f n = \btrue \]

Concretely, $\mathsf{MP}$ will be needed as it corresponds to the termination of non-diverging partial functions:

\begin{lemma}\label{coq:MP_to_MP_partial}
  $\mathsf{MP}$ if and only if $\forall X Y. \forall f \of{ X \pto Y}.\forall x.\;\neg\neg(\exists y.\;f x \hasvalue Y) \to \exists y.\;f x \hasvalue Y$.
\end{lemma}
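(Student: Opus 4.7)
The plan is to prove both directions by moving between the concrete first-order setting of $\mathsf{MP}$ and the abstract partial-value interface of \Cref{def:partial}. The bridge in both directions is that termination of a partial value is, morally, the existence of a stage at which a boolean test succeeds, and the interface provides both the stage-predicate for one direction and an unbounded-search constructor for the other.

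\textbf{Forward direction.} Assume $\mathsf{MP}$, fix $f \of{X \pto Y}$ and $x \of X$, and suppose $\neg\neg(\exists y.\,f x \hasvalue y)$. I would use the fact that, in the partial-value interface, convergence $p \hasvalue y$ is witnessed by a step-indexed evaluator whose ``has the computation produced a value by step $n$?'' predicate is decidable. Define $g \of{\nat \to \bool}$ by letting $g\,n$ be this boolean check applied to $f\,x$ at step $n$. Then $\exists y.\,f x \hasvalue y$ is equivalent to $\exists n.\,g\,n = \btrue$, so the hypothesis gives $\neg\neg(\exists n.\,g\,n = \btrue)$. Applying $\mathsf{MP}$ yields a concrete $n$ at which the evaluator succeeds, from which the underlying value $y$ with $f\,x \hasvalue y$ is directly read off.

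\textbf{Backward direction.} Assume the partial-function stability statement and let $f \of{\nat \to \bool}$ with $\neg\neg(\exists n.\,f\,n = \btrue)$. Using the unbounded-search primitive of the partial-value interface, I construct $h \of{\mathbbm{1} \pto \nat}$ such that $h\,\star$ searches for the least $n$ with $f\,n = \btrue$; by construction, $h\,\star \hasvalue n$ iff such an $n$ exists (and is the least one). The double-negated existence assumption then gives $\neg\neg(\exists n.\,h\,\star \hasvalue n)$; the stability hypothesis, instantiated at $X := \mathbbm{1}$, $Y := \nat$ and $h$, produces an actual $n$ with $h\,\star \hasvalue n$, and unfolding the definition of $h$ yields $f\,n = \btrue$.

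\textbf{Main obstacle.} The only real subtlety is that \Cref{def:partial} abstracts over the concrete implementation of partial values, so both directions must go through the abstract interface rather than a specific step-indexing. However, the two ingredients needed — a decidable check for convergence at a given stage, and an unbounded $\mu$-search constructor — are precisely the operations this interface is designed to support, so once these are invoked the proof reduces to routine equivalence-chasing between $\exists n.\,g\,n = \btrue$ and $\exists y.\,f\,x \hasvalue y$.
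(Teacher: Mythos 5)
Your proof is correct and follows exactly the argument the paper leaves implicit (it cites this equivalence as well-known and proves it only in the Coq development): the forward direction uses the step-indexed evaluator $\rho$ from the abstract partiality interface to turn termination of $f\,x$ into a boolean predicate on $\nat$ amenable to $\mathsf{MP}$, and the backward direction uses the $\mu$-search operation to package a boolean test as a partial value whose termination is then recovered from the stability assumption instantiated at the unit type. The only points worth making explicit in a full write-up are the constructive least-witness step needed to match the specification of $\mu$ (its output is the least $n$ with $f\,n=\btrue$) and the direction $\rho^n(f\,x)=\Some\,y \to f\,x \hasvalue y$ of the evaluator's specification, both of which the interface indeed provides.
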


Another ingredient is that total partial function $X\pto Y$ induce functions $X\to Y$, as stated here for the specific case of deciders $X\to \bool$:

\begin{lemma}[][partial_decidable]
  Let $f\of{X\pfun\bool}$ and
  $p\of{X\to\Prop}$.
  If $\forall x.\;p x \leftrightarrow f x \hasvalue \btrue$
  and $\forall x.\exists b.\;f x \hasvalue b$, then
  $p$ is decidable, i.e.\ there is a function $g\of{X\to\bool}$ such that $\forall x.\;p x \leftrightarrow g x = \btrue$.
\end{lemma}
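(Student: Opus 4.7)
The plan is to produce the decider $g\of{X\to\bool}$ by extracting, for each $x$, the unique boolean at which $f\,x$ terminates, and then to verify the equivalence $p\,x \leftrightarrow g\,x = \btrue$ by pointwise rewriting. The key tool is the extraction operation that the abstract representation of partial values (Definition~\ref{def:partial}) makes available on inputs known to terminate.

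First I would observe that partial values are functional: if $f\,x \hasvalue b_1$ and $f\,x \hasvalue b_2$, then $b_1 = b_2$. Together with the totality hypothesis $\forall x.\,\exists b.\,f\,x \hasvalue b$, each input $x$ thus determines a unique boolean $b$ with $f\,x \hasvalue b$. The extraction of this boolean is the main step: using the operations provided by the abstract partial-function type — concretely, unbounded search over the step index until a defined result appears — we obtain $g\of{X\to\bool}$ satisfying $\forall x.\,f\,x \hasvalue g\,x$. This search is constructively justified because the totality assumption supplies the termination witness and the step-wise predicate "$f\,x$ has produced a result by step $n$" is decidable on $\bool$.

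Given such a $g$, the equivalence is immediate. For the forward direction, $p\,x$ yields $f\,x \hasvalue \btrue$ by hypothesis; combined with $f\,x \hasvalue g\,x$ and functionality of $\hasvalue$, this forces $g\,x = \btrue$. Conversely, if $g\,x = \btrue$, then $f\,x \hasvalue \btrue$, which gives $p\,x$ by the hypothesis.

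The only non-trivial ingredient is the extraction step, which hinges on the partial-function abstraction providing a means to turn a $\Prop$-level termination witness into a computational value. In the paper's setup this is packaged up by the operations assumed in \Cref{def:partial}, and crucially does \emph{not} require Markov's principle here, since the extracted existential ranges over a decidable predicate on $\nat$ with a finite codomain — this is precisely why the analogous result about Turing-transport of decidability (later in the paper) does need $\mathsf{MP}$, while this lemma, which already assumes totality as a hypothesis, does not.
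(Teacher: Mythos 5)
Your proposal is correct and matches the intended argument: the paper's Coq proof likewise extracts the boolean by unbounded search over the step index $n$ of $\rho^n(f\,x)$, which is a decidable predicate on $\nat$ whose inhabitation is guaranteed by the totality hypothesis, so the $\Prop$-level existential can be eliminated into a computational value without $\mathsf{MP}$ (the ConstructiveEpsilon-style extraction). The determinism of $\hasvalue$ that you invoke in the verification step is indeed available, as it follows from the monotonicity of $\rho$ in the abstract interface of \Cref{def:partial}.
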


Now assuming $p \redT q$ for $q$ decidable, we can derive a non-diverging partial decider for $p$, which is turned into a total partial decider with \Cref{coq:MP_to_MP_partial} and then into an actual decider with \Cref{coq:partial_decidable}:

\begin{theorem}[][transport_decidable]
  Given $\mathsf{MP}$,
  if $q$ is decidable and $p \redT q$, then $p$ is decidable.
\end{theorem}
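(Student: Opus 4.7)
The plan is to use the computational core $f$ of the reducing functional together with $\mathsf{MP}$ to turn the partial decider into a total one, then invoke \Cref{coq:partial_decidable}.

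First, I would fix a decider $g : Y \to \bool$ for $q$ and set $r := \lambda y.\;\ret(g\,y) : Y \pfun \bool$, so that $r$ computes $\charrel q$ in the sense that $\charrel q\,y\,b \leftrightarrow r\,y \hasvalue b$. Let $F$ witness $p \redT q$ with tree $\tau$; applying \Cref{coq:Turing_transports_computable} yields a partial function $f : (Y \pfun \bool) \to X \pfun \bool$ such that $f\,r$ computes $F\,\charrel q$. Since by assumption $F\,\charrel q = \charrel p$, we obtain for every $x$ and $b$
\[ \charrel p\,x\,b \leftrightarrow (f\,r)\,x \hasvalue b. \]
In particular $p\,x \leftrightarrow (f\,r)\,x \hasvalue \btrue$ and $\neg p\,x \leftrightarrow (f\,r)\,x \hasvalue \bfalse$.

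Next I would establish totality of $f\,r$: for every $x$ we have $\exists b.\;(f\,r)\,x \hasvalue b$. By \Cref{coq:MP_to_MP_partial}, $\mathsf{MP}$ is equivalent to the statement that non-diverging partial values are defined, so it suffices to prove $\neg\neg \exists b.\;(f\,r)\,x \hasvalue b$. Assume otherwise, so neither $(f\,r)\,x \hasvalue \btrue$ nor $(f\,r)\,x \hasvalue \bfalse$ holds; the first negation gives $\neg p\,x$ by the equivalence above, while the second gives $\neg\neg p\,x$, a contradiction.

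Finally, with $(f\,r)$ total and $p\,x \leftrightarrow (f\,r)\,x \hasvalue \btrue$ in hand, \Cref{coq:partial_decidable} immediately produces a function $h : X \to \bool$ deciding $p$. The only delicate point is the totality argument: it is precisely where $\mathsf{MP}$ enters, exploiting that the reduction forces the partial decider to terminate on every input, even though constructively we cannot case-split on $p\,x$ directly. All other steps are straightforward applications of the computational-core theorem and the reformulation of $\mathsf{MP}$ for partial values.
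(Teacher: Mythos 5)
Your proposal is correct and follows essentially the same route as the paper: obtain the computational core $f$ via \Cref{coq:Turing_transports_computable}, prove totality of $f(\lambda y.\ret(g\,y))$ under $\mathsf{MP}$ via \Cref{coq:MP_to_MP_partial}, and conclude with \Cref{coq:partial_decidable}. Your totality argument (deriving both $\neg p\,x$ and $\neg\neg p\,x$ from the assumed divergence) is just a directly unfolded version of the paper's case analysis on $p\,x$ under double negation.
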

\begin{proof}
  Let $F$ be the reduction relation and let $f$ transport computability along it as in \Cref{coq:Turing_transports_computable}.
  Let $g$ decide $q$.
  It is straightforward that $\forall x b.\;\charrel p x b \leftrightarrow f (\lambda y.\ret (g y)) x \hasvalue b$ (*).
  It suffices to prove that $\forall x.\exists b.\;f (\lambda y.\ret (g y)) x\hasvalue b$ to obtain the claim from \Cref{coq:partial_decidable}.

  Using \Cref{coq:MP_to_MP_partial} and $\mathsf{MP}$, given $x$ it suffices to prove
  $\neg\neg\exists b.\;f (\lambda y.\ret (g y)) x\hasvalue b$.
  Because the goal is negative and we can prove $\neg\neg(p x \lor \neg p x)$,
  we are allowed to do a case analysis on $p x$.
  In both cases we can prove termination using (*).
  \qed
\end{proof}

As hinted above, the previous theorem could be stated without $\MP$ by using a notion of decidability via a non-diverging partial decider $f\of{X\pfun\bool}$, i.e.\ with $\forall x.\neg\neg\exists b.\;f x \hasvalue b$.
However, in the stated form, it is in fact equivalent to $\MP$:

\begin{lemma}[][decidable_Turing_MP]
	\label{lem_MP_reverse}
  If $p$ is decidable if there is decidable $q$ with $p \redT q$, then $\MP$ holds.
\end{lemma}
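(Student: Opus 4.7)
The plan is to instantiate the hypothesis on a predicate whose decidability, combined with double-negation elimination, yields a Markov-style witness for an arbitrary $f : \nat \to \bool$ satisfying $\neg\neg\exists n.\,f n = \btrue$. I will take $p : \mathbbm{1} \to \Prop$ defined by $p\,\star := \exists n.\,f n = \btrue$, and use as oracle the decidable predicate $q : \nat \to \Prop$ given by $q\,n := (f n = \btrue)$, which is decided by $f$ itself.

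To exhibit $p \redT q$, I will construct the tree $\tau : \mathbbm{1} \to \List{\bool} \pto \nat + \bool$ that, on answer list $l$, returns $\ret(\inr\,\btrue)$ if $\btrue \in l$ and $\ret(\inl\,|l|)$ otherwise. Against the oracle $\charrel q$, the interrogation will ask $0,1,2,\ldots$ in turn, receive $f 0, f 1, \ldots$ as answers, and emit $\inr\,\btrue$ as soon as some $f n = \btrue$. Writing $F$ for the functional computed by $\tau$, we immediately get $F\,\charrel q\,\star\,\btrue \leftrightarrow \exists n.\,f n = \btrue \leftrightarrow \charrel p\,\star\,\btrue$. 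Since $\tau$ can only ever output $\btrue$, the value $F\,\charrel q\,\star\,\bfalse$ is vacuously empty, and equating it with $\charrel p\,\star\,\bfalse = \neg p$ amounts precisely to $\neg\neg p$, which is our assumption on $f$. So $p \redT q$ holds, the hypothesis yields decidability of $p$, and combined with $\neg\neg p$ this produces $p$, i.e.\ the required $\exists n.\,f n = \btrue$.

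The key subtlety, and the only place where the double-negation assumption on $f$ is consumed, is the $\bfalse$-branch of the reduction's correctness: the tree never outputs $\bfalse$, so the defining equivalence of $\redT$ collapses to $\neg p \leftrightarrow \bot$, i.e.\ $\neg\neg p$. Everything else — well-definedness of $\tau$, the interrogation unfolding, and the final $\neg\neg$-elimination through a decidable disjunction — is routine.
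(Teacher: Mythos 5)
Your proof is correct, but it takes a genuinely different route from the paper. The paper's proof is a two-line reduction to previously established results: by Theorem~2.20 of Forster, Kirst, and Smolka, $\MP$ follows once one shows that every bi-semi-decidable $p\of{\nat\to\Prop}$ is decidable, and this in turn follows from \Cref{coq:bisemidecidable_Turing} (bi-semi-decidable predicates Turing-reduce to anything) together with the hypothesis instantiated at some decidable $q$. You instead unfold everything and prove the implication from scratch: given $f$ with $\neg\neg\exists n.\,f n = \btrue$, you build the predicate $p$ and the decidable oracle $q\,n := (f n = \btrue)$ by hand, exhibit the reduction via an explicit searching tree, and correctly isolate where the double-negation hypothesis is consumed --- namely in the $\bfalse$-branch of the correctness equivalence, where the tree's inability to ever output $\bfalse$ forces $\neg p \leftrightarrow \bot$. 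That analysis is exactly right, and the final extraction of the witness from the decider is fine. What the paper's route buys is brevity and reuse (the searching tree is hidden inside \Cref{coq:bisemidecidable_Turing}, and the delicate $\neg\neg$ bookkeeping is delegated to the cited equivalence of $\MP$ with $\mathsf{PT}_0$); what your route buys is self-containedness and a clearer view of precisely which classical assumption drives the reverse implication. One minor caveat: the hypothesis must be available at the types you instantiate ($X=\unit$ or, if the formal statement fixes $X=\nat$, you can use the constant predicate $p\,x := \exists n.\,f n = \btrue$ instead, with no change to the argument).
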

\begin{proof}
  By~\cite[Theorem 2.20]{forster2019synthetic} it suffices to prove that whenever $p\of{\nat\to\Prop}$ and $\overline p$ are semi-decidable, then also $p$ is decidable, which follows by \Cref{coq:bisemidecidable_Turing} and the assumption for some choice of a decidable predicate $q$.
  \qed
\end{proof}

Lastly, we prove that using classical logic, predicates are Turing-equivalent to their complement, providing evidence for the inherent classicality:

\begin{lemma}[][Turing_red_compl]
  For double-negation stable $p$, $p \redT \overline p$ and $\overline p \redT p$.
\end{lemma}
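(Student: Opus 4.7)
The plan is to exhibit a single computable functional that witnesses both reductions by querying the oracle once on the input and returning the flipped boolean answer. Concretely, define
\[ F\,R\,x\,b ~:=~ \exists b'.\; R\,x\,b' \land b = \neg b', \]
where $\neg$ denotes boolean negation on $\bool$. Computability of $F$ is witnessed directly by the tree $\tau$ with $\tau\,x\,[] \hasvalue \inl x$ and $\tau\,x\,[a] \hasvalue \inr(\neg a)$; alternatively, $F$ is obtained by composing the identity functional from \Cref{coq:computable_id} with the post-processing step $b' \mapsto \neg b'$ using \Cref{lem:allaboutcomp}\,(3) and \Cref{coq:computable_bind}. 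Using the characterisation from \Cref{coq:Turing_reducible_without_rel}, this reduces both reductions to verifying equivalences of the characteristic relations under $F$.

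The substantive step is thus to check, for $b \in \{\btrue,\bfalse\}$, the two equivalences $\charrel p x b \leftrightarrow F\,\charrel{\overline p}\,x\,b$ (for $p \redT \overline p$) and $\charrel{\overline p} x b \leftrightarrow F\,\charrel p\,x\,b$ (for $\overline p \redT p$). Unfolding definitions, two of the four subcases are immediate: at $b = \bfalse$ in the first equivalence, $F\,\charrel{\overline p}\,x\,\bfalse$ reduces to $\charrel{\overline p} x \btrue = \neg p x = \charrel p x \bfalse$; the symmetric calculation handles $b = \btrue$ in the second equivalence. The remaining two subcases (one per direction) both reduce to the schema $\neg\neg p x \leftrightarrow p x$: for $p \redT \overline p$ at $b = \btrue$, $F\,\charrel{\overline p}\,x\,\btrue$ unfolds to $\charrel{\overline p} x \bfalse = \neg\neg p x$ while $\charrel p x \btrue = p x$; for $\overline p \redT p$ at $b = \bfalse$, $F\,\charrel p\,x\,\bfalse$ unfolds to $\charrel p x \btrue = p x$ while $\charrel{\overline p} x \bfalse = \neg\neg p x$. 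Both are discharged by the assumed double-negation stability of $p$.

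The main obstacle is purely bookkeeping: one must carefully distinguish propositional negation (used in forming the complement $\overline p$ and in the $\bfalse$-branch of $\charrel{\cdot}$) from boolean negation (used inside the reduction functional), and recognise that the two levels align precisely through the stability hypothesis $\neg\neg p x \to p x$. Notably, no classical principle beyond this pointwise stability is used, so the lemma applies to any predicate that happens to be double-negation stable, not only under full classical logic.
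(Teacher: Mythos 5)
Your proposal is correct and matches the paper's proof: the paper likewise takes the single functional $F\,R\,x\,b := R\,x\,(\neg_{\bool}\,b)$ (extensionally identical to your existential formulation), establishes computability via the identity functional and the closure lemmas, and discharges the two nontrivial subcases via the assumed stability $\neg\neg p\,x \to p\,x$. Your write-up just spells out the four-way case analysis that the paper leaves implicit.
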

\begin{proof}
  Assume $\forall x.\;\neg\neg p x \to p x$.
  For both reductions, take $F R\,x\,b := R\,x\,(\neg_{\bool} b)$,
  which is computable by \Cref{coq:computable_bind}, \Cref{coq:computable_id}, and \Cref{lem:allaboutcomp} (1,3).
  \qed
\end{proof}

\begin{lemma}[][rev]
	\label{lem_LEM_reverse}
  Let $X$ be some type with $x_0 \of X$.
  If $p \redT \overline p$ for all $p\of{X\to\Prop}$, then $\mathsf{MP}$ implies the law of excluded middle $(\mathsf{LEM}:=\forall P:\Prop.\,P\lor\neg P )$.
\end{lemma}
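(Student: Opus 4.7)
The plan is to derive double-negation elimination $\mathsf{DNE} := \forall P\of\Prop.\,\neg\neg P \to P$, which intuitionistically implies $\mathsf{LEM}$ by applying $\mathsf{DNE}$ to the always double-negated $P \lor \neg P$. So I will assume $\mathsf{MP}$ and the hypothesis $p \redT \overline p$ for all $p\of{X \to \Prop}$, fix $P\of\Prop$ with $\neg\neg P$, and aim to derive $P$.

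First I would instantiate the global assumption with the constant predicate $p\,x := P$, whose complement is $\overline p\,x = \neg P$, to obtain a computable functional $F$ with some tree $\tau$ witnessing $p \redT \overline p$; unfolding the definition of Turing reducibility yields $P \leftrightarrow F\,\charrel{\overline p}\,x_0\,\btrue$.

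The crucial observation is that under $\neg\neg P$ the oracle $\charrel{\overline p}$ is extensionally equivalent to the decidable relation $R_\bot\of{X \to \bool \to \Prop}$ defined by $R_\bot\,q\,a := (a = \bfalse)$: indeed the hypothesis refutes $\charrel{\overline p}\,q\,\btrue = \neg P$ and is precisely $\charrel{\overline p}\,q\,\bfalse = \neg\neg P$. Since the interrogation-based definition of $F$ depends on its oracle argument only extensionally (by straightforward induction on the interrogation), this upgrades to $F\,\charrel{\overline p}\,x_0\,\btrue \leftrightarrow F\,R_\bot\,x_0\,\btrue$ under $\neg\neg P$.

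Next, setting $r_\bot := \lambda x.\,\ret\,\bfalse$ so that $\lambda q\,a.\,r_\bot\,q \hasvalue a$ coincides with $R_\bot$, \Cref{coq:interrogation_equiv_evalt} reformulates $F\,R_\bot\,x_0\,\btrue$ as the semi-decidable statement $\exists n.\,\delta\,(\tau\,x_0)\,r_\bot\,n \hasvalue \inr\,\btrue$, to which $\mathsf{MP}$ applies for $\neg\neg$-stability. Putting the pieces together: under $\neg\neg P$ the reduction and the oracle equivalence give $P \to F\,R_\bot\,x_0\,\btrue$, so taking contrapositives with the hypothesis delivers $\neg\neg F\,R_\bot\,x_0\,\btrue$; $\mathsf{MP}$ then yields $F\,R_\bot\,x_0\,\btrue$, and reversing the equivalences gives back $P$. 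The main technical point is to cast $F\,R_\bot\,x_0\,\btrue$ into the $\Sigma^0_1$ shape expected by $\mathsf{MP}$, which is precisely what \Cref{coq:interrogation_equiv_evalt} together with the paper's partial-function machinery provides.
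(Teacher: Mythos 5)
Your proof is correct and follows essentially the same route as the paper: derive $\neg\neg P \to P$, instantiate the hypothesis with the constant predicate $\lambda x.\,P$, observe that under $\neg\neg P$ its complement is decided by the constant-$\bfalse$ oracle, and use $\mathsf{MP}$ to extract termination. The only difference is that the paper invokes its already-proven \Cref{coq:transport_decidable} as a black box at this point, whereas you inline that theorem's content (oracle extensionality, \Cref{coq:interrogation_equiv_evalt}, and the $\Sigma^0_1$ reformulation for $\mathsf{MP}$) by hand.
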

\begin{proof}
  Assume $\mathsf{MP}$, $X$ with $x_0:X$, and that $p \redT \overline p$ for all $p:X\to\Prop$.
  It suffices to prove that for every proposition $P$ we have $\neg\neg P \to P$.
  So assume $\neg\neg P$.

  By $\mathsf{MP}$ and \Cref{coq:transport_decidable},
  we have that whenever $\lambda x.\,\neg P$ is decidable, then so is $\lambda x.\,P$.
  Now since $\neg\neg P$ holds, $\lambda x.\,\bfalse$ decides $\lambda x.\,\neg P$.
  Thus we have a decider $f$ for $\lambda x.\,P$.
  A case analysis on $f x_0$
  yields either $P$ and we are done -- or $\neg P$, which is ruled out by $\neg\neg P$.
  \qed
\end{proof}

The last lemma ensures that some amount of classical logic is necessary to prove that Turing reducibility is closed under complements,
since it is well-known that $\mathsf{MP}$ does not imply $\mathsf{LEM}$.

\section{Turing Reducibility and Truth-Table Reducibility}
\label{sec:tt}

As a further expectable property, we establish the well-known connection of Turing reducibility to truth-table reducibility, namely that every truth-table reduction induces a Turing reduction while the converse does not hold.
Note that the proofs in this section have a classical flavour where explicitly mentioned.

We use the synthetic definition of truth-table reducibility from Forster and Jahn~\cite{csl23}.
We model truth-tables as lists $\List \bool$, but just work with a boolean evaluation predicate $l \vDash T$ and refer to the Coq code for its definition.
\label{intro:redtt}
\newcommand\redtt{\hyperref[def:redtt]{\preceq_{\texttt{tt}}}}
\begin{align*}
  p \redtt q := \exists f \of {X \to \List Y \!\!\times\!\List \bool}.\forall x \of X.\forall l \of{\List\bool}.\,\Forall_2\,\charrel q\,(\pi_1 (f x))\,l \to (p x \leftrightarrow l \vDash \pi_2 (f x) )
\end{align*}
where $\Forall_2$ lifts binary predicates to lists pointwise by conjunction.

We first show that truth-table reducibility implies Turing reducibility.

\begin{theorem}[][truthtable_Turing]
  If $q$ is classical (i.e.\ $\forall y.\;q y \lor \neg q y$),
  then
  $p \redtt q$ implies $p \redT q$.
\end{theorem}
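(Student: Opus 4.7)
The plan is to build the computation tree directly from the given truth-table reduction. Suppose $p \redtt q$ is witnessed by some $f: X \to \List Y \times \List \bool$; write $(\vec{y}_x, T_x) := f x$ for the list of queries and the truth-table associated to input $x$. I would define the tree $\tau\of{X \to \List\bool \pto Y + \bool}$ as follows: on input $x$ and answer-so-far list $l$, if $|l| < |\vec{y}_x|$, return $\ret(\inl\,y)$ where $y$ is the $|l|$-th element of $\vec{y}_x$; otherwise, return $\ret(\inr(l \vDash T_x))$. Intuitively, $\tau$ walks down $\vec{y}_x$ asking the queries one after the other, and as soon as all answers are collected, it evaluates the truth-table against them.

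Using the relation-free characterisation of Turing reducibility from \Cref{coq:Turing_reducible_without_rel}, it then suffices to show, for all $x$ and $b$, that $\charrel{p}\,x\,b$ holds if and only if there is an interrogation $\interrogate{\tau x}{\charrel q}{\qs}{\ans}$ with $\tau\,x\,\ans \hasvalue \inr b$.

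For the forward direction, assume $\charrel{p}\,x\,b$. Classicality of $q$ lets me pick, for each $y$ in $\vec{y}_x$, a boolean $a_y$ with $\charrel q\,y\,a_y$. A short induction on the length of the prefix of $\vec{y}_x$ assembles an interrogation with questions $\vec{y}_x$ and answers $\ans$ collecting these $a_y$ in order; at this point $\tau\,x\,\ans \hasvalue \inr(\ans \vDash T_x)$, and the truth-table property applied to $\ans$ (whose pointwise-correspondence to $\charrel q\,\vec{y}_x$ is immediate from the interrogation) gives $p x \leftrightarrow \ans \vDash T_x$, so $b = (\ans \vDash T_x)$. Conversely, given a terminating interrogation with output $b$, the $\inl$-branch of $\tau$ forces $\qs$ to be a prefix of $\vec{y}_x$, while termination with an $\inr$-leaf forces $|\ans| \geq |\vec{y}_x|$; hence $\qs = \vec{y}_x$, $|\ans| = |\vec{y}_x|$, and $b = \ans \vDash T_x$. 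The interrogation provides $\Forall_2\,\charrel q\,\vec{y}_x\,\ans$, and the truth-table property then concludes $\charrel p\,x\,b$.

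The main obstacle is the forward-direction bookkeeping: one must simultaneously use classicality of $q$ to choose the answers and build an interrogation of exactly the shape dictated by $\tau$, which is a straightforward but fiddly induction on the length of $\vec{y}_x$. Beyond that, both directions are essentially forced by the case split defining $\tau$, and no appeal to the closure lemmas of \Cref{sec:closure} is needed.
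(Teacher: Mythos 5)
Your proof is correct and essentially coincides with the paper's: the same tree (ask the entries of $\pi_1(f\,x)$ in order, then output the truth-table evaluation), the same use of classicality of $q$ to build the answer list by induction on $\pi_1(f\,x)$ for the forward direction, and the same forced analysis for the converse. The only cosmetic difference is that you invoke the relation-free characterisation of $\redT$ instead of first writing down the intermediate functional $F$, which changes nothing of substance.
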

\begin{proof}
  Let $f$ be the truth-table reduction.
  Define $F$ to map $R\of{Y \to \bool \to \Prop}$ to
  \begin{align*}
    \lambda x b.\; \exists l \of{\List\bool}.\; \Forall_2\;R\;(\pi_1 (f x))\;l \land 
                                l \vDash \pi_2 (f x))
  \end{align*}
  which can be computed by the tree
  \begin{mathpar}
    \tau x l :=
    \begin{cases}
      \ret (\inl a) & \text{if } \pi_1 (f x) \text{ at position } |l| \text{ is } a \\
      \ret (\inr (l \vDash \pi_2 (f x))) & \text{otherwise.}
    \end{cases}
  \end{mathpar}

  The direction from right to left is straightforward.
  For the direction from left to right, it suffices to prove the existence of $l$ with $\Forall_2\;\charrel q\;\pi_1(f x)\;l$,
  following by induction on $\pi_1 (f x)$, using the assumption that $q$ is classical to construct $l$.  
  \qed
\end{proof}

We now prove that the inclusion of truth-table reducibility in Turing reducibility is strict.
Forster and Jahn~\cite{csl23} introduce a hypersimple predicate $H_I\of{\nat\to\Prop}$ as the deficiency predicate of a strongly enumerable predicate $I\of{\nat\to\Prop}$~\cite{dekker1954hypersimple}:
Given an injective, strong enumerator $E_I%
$ of $I$ ($\forall x.\;I x\! \leftrightarrow\! \exists n. E_I n = x$), they set
\[H_I x ~:=~ \exists x_0 > x.\; E_I x_0 < E_I x.\]

They prove that $I$ does not truth-table reduce to $H_I$ assuming axioms for synthetic computability, and in particular that the halting problem fulfills the preconditions for $I$.
Thus, to separate truth-table from Turing reducibility, it suffices to give a Turing reduction $I \redT H_I$ (without having to assume axioms for synthetic computability).

Algorithmically, one can decide $I z$ given a partial function $f\of{\nat\pfun\bool}$ deciding $H_I$ as follows:
We search for $x$ such that $f x \hasvalue \bfalse$ and $E_I x > z$, i.e.\ $\neg H_I x$.
Such an $x$ does (not not) exists because the complement of $H_I$ is non-finite.
Then $I z$ holds if and only if $z \in [E_I 0, \dots E_I (x + 1)]$.

Formally, we first establish the classical existence of such $x$ in the more general situation of arbitrary non-finite predicates and injections.

\begin{lemma}[][non_finite_to_least]
  If $p\of{X\to\Prop}$ is non-finite and $f\of{X \to \nat}$ is injective, then for $z\of\nat$%
  \[\neg\neg\exists x.\;p x \land f x \geq z \land \forall y.\; p y \to f y \geq z \to f x \leq f y. \]
\end{lemma}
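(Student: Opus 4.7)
The goal is a doubly-negated existential, so my plan is to assume $H : \neg\exists x.\,(p x \land f x \geq z \land \forall y.\,p y \to f y \geq z \to f x \leq f y)$ and derive $\bot$; under this ambient goal I may freely strip double negations from intermediate hypotheses. The strategy has two parts: first exhibit some $x_0$ with $p x_0$ and $f x_0 \geq z$, then classically descend to the minimum using $H$.

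For the first part, I would exploit that injectivity of $f$ forces at most $z$ distinct elements to have $f$-value below $z$. Concretely, working under $\bot$ I classically assemble a list $l$ containing, for each $k < z$, the unique preimage $x$ with $f x = k$ whenever such an $x$ exists (the existential case analysis being justified since our conclusion is $\bot$). This $l$ has length at most $z$. Applying non-finiteness of $p$ to $l$ supplies some $x_0 \notin l$ with $p x_0$; were $f x_0 < z$, injectivity would put $x_0$ back into $l$, contradiction. Hence $f x_0 \geq z$.

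For the descent, I would do strong induction on $n := f x_0$. The induction hypothesis reads: whenever $n' < n$ and there is some $x$ with $p x$, $f x \geq z$, and $f x = n'$, we obtain $\bot$. Instantiating $H$ at the current $x_0$ together with $p x_0$ and $f x_0 \geq z$ yields $\neg\forall y.\,p y \to f y \geq z \to f x_0 \leq f y$, since otherwise $x_0$ would itself witness the forbidden existential. Classically — permissible because the ambient goal is $\bot$ — this negation supplies a $y$ with $p y$, $f y \geq z$, and $f y < f x_0$, which matches the induction hypothesis at $n' := f y$ and closes the case.

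The hard part will be the interplay with the paper's precise formulation of \emph{non-finite} together with the constructive packaging of "at most $z$ preimages below $z$": turning injectivity into a finite bound and then invoking non-finiteness to escape that bound is the only real substance of the proof. The descent itself is entirely routine once one is free to eliminate double negations, and no further axioms beyond those already implicit in the $\neg\neg$-wrapped conclusion are required.
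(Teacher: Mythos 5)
Your proof is correct and is essentially the intended argument (the paper defers this proof to its Coq development): you obtain an initial witness with $f\,x \geq z$ by classically assembling a list covering the at most $z$ preimages of values below $z$ and invoking non-finiteness, and then minimise by well-founded descent on $f\,x_0$ using the assumed negation of the existential, all under the ambient double negation. Both steps are sound, since the goal $\bot$ licenses the classical case analyses and since $f\,x \leq f\,y$ is decidable and hence stable, so the step from $\neg\forall y.\,(\dots)$ to a witness $y$ with $f\,y < f\,x_0$ goes through.
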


Next, we verify the resulting characterisation of $I$ via list membership.

\begin{lemma}[][I_iff]
  If $\neg H_I x$ and $E_I x > z$, then $I z \leftrightarrow [E_I 0, \dots, E_I (x + 1)]$.
\end{lemma}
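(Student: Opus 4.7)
The plan is to prove the two directions of the biconditional separately; both are straightforward given the unpacked definition of $\neg H_I x$ and the strong enumerability of $I$ by $E_I$.

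For the right-to-left direction, suppose $z$ appears in $[E_I 0, \dots, E_I(x+1)]$. Then $z = E_I k$ for some $k \leq x+1$, and the enumerator property $\forall y.\;I y \leftrightarrow \exists n.\;E_I n = y$ immediately gives $I z$.

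For the left-to-right direction, suppose $I z$. By strong enumerability we obtain some $n$ with $E_I n = z$. The key step is to show $n \leq x+1$; in fact I would show the stronger $n \leq x$, which suffices since $[E_I 0, \dots, E_I x]$ is a prefix of $[E_I 0, \dots, E_I(x+1)]$. To this end, assume for contradiction that $n > x$. The hypothesis $\neg H_I x$ unfolds to $\forall m > x.\;E_I m \geq E_I x$ (and in fact $> E_I x$ by injectivity of $E_I$, since $E_I m = E_I x$ would force $m = x$). Instantiating with $m := n$ gives $E_I n > E_I x$. Combined with the assumption $E_I x > z$ and $E_I n = z$, this yields $z > z$, a contradiction. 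Hence $n \leq x$ and $z = E_I n$ occurs in $[E_I 0, \dots, E_I x] \subseteq [E_I 0, \dots, E_I(x+1)]$.

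There is no real obstacle in this proof: the only subtle point is the use of injectivity of $E_I$ to upgrade the inequality coming from $\neg H_I x$ from $\geq$ to a strict $>$, so that the two strict inequalities can be chained into the contradiction $z > z$. Everything else is unfolding definitions and appealing to the enumerator property.
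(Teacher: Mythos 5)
Your proof is correct and matches the argument the paper sketches informally just before the lemma (the paper itself defers the formal proof to the Coq development): membership gives $I z$ by the enumerator property, and conversely any witness $n$ with $E_I n = z$ must satisfy $n \leq x$ because $n > x$ together with $\neg H_I x$ would give $E_I n \geq E_I x > z = E_I n$. Note only that your injectivity step is superfluous: the weak inequality $E_I n \geq E_I x$ already chains with $E_I x > z$ to give the contradiction, so no upgrade to a strict inequality is needed.
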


Put together, we can describe the desired Turing reduction.

\begin{theorem}[][red]
  Assuming $\mathsf{LEM}$,
  if $I$ is strongly enumerable,
  then
  $I \redT H_I$.
\end{theorem}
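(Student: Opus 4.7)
My plan is to apply \Cref{coq:Turing_reducible_without_rel} and realise the informal algorithm sketched just before the theorem as an explicit computation tree $\tau \of{\nat \to \List \bool \pfun \sumt \nat \bool}$. On input $z$, the tree will sequentially query the oracle about $H_I 0, H_I 1, H_I 2, \dots$ and halt at the first index $x$ whose oracle answer is $\bfalse$ and for which $E_I x > z$. At that point \Cref{coq:I_iff} reduces membership $I z$ to checking whether $z$ occurs in the finite list $[E_I 0, \dots, E_I(x+1)]$, which is a decidable condition.

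Concretely, I would define
\[
\tau\,z\,l :=
\begin{cases}
\ret(\inr b) & \text{if some } j < |l| \text{ has } l[j] = \bfalse \text{ and } E_I j > z,\\
\ret(\inl |l|) & \text{otherwise,}
\end{cases}
\]
where in the first branch $x$ is taken as the least such $j$ and $b$ as the boolean value of $z \in [E_I 0, \dots, E_I(x+1)]$. Both tests are decidable on the finite data at hand, so $\tau$ is a genuine CIC term.

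For correctness, under $\mathsf{LEM}$ the characteristic relation $\charrel{H_I}$ is total, so every question issued by $\tau$ receives a definite oracle answer. The crux is to show that an interrogation of $\tau z$ against $\charrel{H_I}$ actually reaches a halting configuration. Since $H_I$ is hypersimple by construction, its complement is non-finite; applying \Cref{coq:non_finite_to_least} to $\overline{H_I}$ and the injection $E_I$ with threshold $z+1$ yields the double-negated existence of some $x$ with $\neg H_I x$ and $E_I x > z$. Using $\mathsf{LEM}$, this $\neg\neg$-statement lifts to a genuine witness, and at the corresponding stage the tree halts with the correct boolean by \Cref{coq:I_iff}.

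The main obstacle is precisely this termination argument: the search depth is a priori unbounded and depends on the (potentially non-computable) oracle values of $H_I$. Both the non-finiteness of $\overline{H_I}$, which comes from the hypersimplicity of $H_I$, and the classical elimination of double negation are essential here, which explains why the theorem is stated under $\mathsf{LEM}$ rather than purely constructively. Everything else — decidability of the stopping test, correctness of the verdict via \Cref{coq:I_iff}, and the equivalence with the interrogation-based formulation of $\redT$ supplied by \Cref{coq:Turing_reducible_without_rel} — is routine once the tree is in place.
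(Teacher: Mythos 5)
Your proposal is correct and follows essentially the same route as the paper: both implement a linear search for the least $x$ with oracle answer $\bfalse$ and $E_I x > z$, justify termination via \Cref{coq:non_finite_to_least} applied to the non-finite complement of $H_I$ (de-negated with $\mathsf{LEM}$), and read off the verdict from \Cref{coq:I_iff}. The only difference is presentational: you construct the tree explicitly through \Cref{coq:Turing_reducible_without_rel}, whereas the paper writes the same least-witness search as a relational functional $F$ (with the minimality clause $\forall x'<x.\,\dots$ encoding your ``first such $j$'') and delegates its computability to the closure lemmas.
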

\begin{proof}
  We define $F$ to map relations $R$ to the relation
  \begin{align*}
    \lambda z b .\; \exists x.\, R\;x\; \bfalse ~\land~ E_I x > z &~ \land ~                                    (b = \btrue \leftrightarrow z \in [E_I 0, \dots, E_I (x + 1)])  \\
                                  &~\land~  (\forall x' < x.\, (R\; x'\; \btrue \lor (R\; x'\;\bfalse \land E_I x' \leq z))) 
  \end{align*}
  which is straightforward to show computable.

  Regarding $F (\charrel{H_I}) z b \leftrightarrow \charrel I z b$,
  the direction from left to right is immediate from \Cref{coq:I_iff}.
  For the direction from right to left, assume $\charrel I z b$.
  Let $x$ be obtained for $H_I$ and $E_I$ from \Cref{coq:non_finite_to_least}.
  Then $x$ fulfils the claim by \Cref{coq:I_iff}.
  \qed
\end{proof}

Since in this paper we do not assume axioms for synthetic computability that imply $I \not\redtt H_I$, we keep the conclusion that truth-table reducibility is strictly stronger than Turing reducibility implicit.

\section{Post's Theorem ($\mathsf{PT}$)}
\label{sec:PT}

There are various results (rightly) called ``Post's theorem'' in the literature.
Here, we are concerned with the result that if both a predicate and its complement are semi-decidable,
the predicate is decidable.
This theorem was proved by Post in 1944~\cite{post1944recursively}, and is not to be confused with Post's theorem relating the arithmetical hierarchy and Turing jumps from 1948~\cite{post1948degrees}. %
We thus simply refer to the result we consider as $\mathsf{PT}_0$,
and use $\mathsf{PT}$ for its relativised version.

It is well-known that $\mathsf{PT}_0$ is equivalent to Markov's principle~\cite{troelstra1988constructivism,BauerSyntCT,forster2019synthetic}.
We here prove that the relativised version $\mathsf{PT}$ is fully constructive,
and that in fact the equivalence proof of $\mathsf{MP}$ and $\mathsf{PT}_0$ can be given using $\mathsf{PT}$ and the already proven equivalence between $\mathsf{MP}$
and the statement that Turing reducibility transports decidability backwards given in \Cref{sec:prop_turing}.

As an auxiliary notion, we introduce an equivalent but a priori more expressive form of interrogations which maintains an internal state of the computation and can ``stall'', i.e.\
trees do not have to either ask a question or produce an output, but can alternatively choose to just update the state.
Such trees are of type $S \to \List A \pto \sumt {(S \times \option Q)} O$,
where $\option Q$ is the inductive option type with elements $\None$ and $\Some q$ for $q \of Q$\label{intro:option}.

A stalling tree is a partial function $\sigma \of{S \to \List A \pto \sumt {(S \times \option Q)} O}$.
We define a stalling interrogation predicate $\interrogate \sigma R \qs \ans \mathbin{;} s \succ s'$ inductively by:
\begin{mathpar}
	\infer{~}{\interrogate \sigma R {[]} {[]}  \mathbin{;} s \succ s}
	
	\infer{\interrogate \sigma R \qs \ans  \mathbin{;} s \succ s'' \and \sigma\ \mathbin{;}s''\ \mathbin{;}\ans \hasvalue \inl (s',\None)}
	{\interrogate \sigma R {\qs} {\ans}  \mathbin{;} s \succ s'}
	
	\infer{\interrogate \sigma R \qs \ans  \mathbin{;} s \succ s'' \and \sigma\ \mathbin{;}s''\ \mathbin{;}\ans \hasvalue \inl (s',\Some q) \and R q a}
	{\interrogate \sigma R {\qs \app [q]} {\ans \app [a]}  \mathbin{;} s \succ s'}
\end{mathpar}
The first and third rule are not significantly different from before, apart from also threading a state $s$.
The second rule allows the tree to stall by only updating the state to $s'$, but without asking an actual question.
Intuitively, we can turn a stalling tree $\tau$ into a non-stalling one $\tau'$ by having $\tau'$
compute on input $\ans$ first all results of $\tau$ on all prefixes of $\ans$, starting from a call $\tau\,i\,s_0\;\ans$ for a given initial state $s_0$.
We give this construction in full detail in \Cref{sec:forms}.

A functional $F$
mapping $R \of{Q \to A \to \Prop}$
to a relation of type
$ \ofbox{I \to O \to \Prop}$ is computable via stalling interrogations if
there are a type $S$, an element $s_0 \of S$, and
a function $\tau \of{I \to S \to \List{A} \pto \sumt {(S \times \option Q)} O}$ such that
\[ \forall R\,i\,o.\;F R\,i\,o \leftrightarrow \exists \qs \; \ans \; s. ~ \interrogate {\tau i} R \qs \ans \mathbin{;} s_0 \succ s ~\land~ \tau\,i\,s\;\ans \hasvalue \inr o.  \]

We prove that the two definitions of computability are equivalent in \Cref{sec:forms}
and immediately move on to the proof of $\mathsf{PT}$.

\setCoqFilename{TuringReducibility.SemiDec}
\begin{theorem}[$\mathsf{PT}$][PT]
  If $\sdec q p$ and $\sdec q {\overline p}$, then
  $p \redT q$.
\end{theorem}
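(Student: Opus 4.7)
The plan is to dovetail the two oracle semi-deciders into a single Turing reduction. Let $F_1$ with tree $\tau_1$ witness $\sdec q p$ and $F_2$ with tree $\tau_2$ witness $\sdec q {\overline p}$. I would build a reduction $F$ that, on input $x$, runs $\tau_1$ and $\tau_2$ in parallel against the oracle $\charrel q$ and returns $\btrue$ or $\bfalse$ according to which of them first reaches $\out \star$. The stalling-tree formalism just introduced is the natural vehicle here, since its internal state can carry both sub-simulations concurrently while only emitting an ask step when an oracle answer is actually required.

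Concretely, I would equip the combined tree $\tau$ with state $(n, l_1, l_2)$, where $n$ is an outer round counter and $l_j$ records the oracle answers already fed to $\tau_j$. In a round, $\tau$ uses the evaluation function $\delta$ of \Cref{sec:functions} to advance each $\tau_j\,x$ on $l_j$ by up to $n$ further interactions. Three cases arise per $j$: if the evaluation yields $\inr \star$, $\tau$ emits $\inr \btrue$ (for $j=1$) or $\inr \bfalse$ (for $j=2$); if it produces some pending query $\inl q$, $\tau$ emits an ask step for $q$ and, on receiving $a$, appends $a$ to $l_j$ in the resulting state; and if neither side has terminated or asked within $n$ steps, $\tau$ stalls and increments $n$. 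Alternating between $j=1$ and $j=2$ within each round ensures that intermediate divergence in one $\tau_j$ cannot block progress on the other.

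Soundness of $\tau$ then follows from \Cref{coq:interrogation_equiv_evalt}: an output $\inr b$ of $\tau$ is produced only after $\delta$ has reported a complete sub-run of $\tau_j$, which decomposes into a genuine interrogation $\interrogate{\tau_j x}{\charrel q}\qs\ans$ with $\tau_j x\,\ans \hasvalue \out \star$; the defining property of $F_j$ then yields $p x$ (if $j=1$) or $\overline p x$ (if $j=2$), matching $b$ against $\charrel p x$. Completeness: if $p x$, the assumption $\sdec q p$ supplies a successful interrogation of $\tau_1$ of some finite length $k$, and once $n$ exceeds the requisite step budget the dovetailed tree discovers it and outputs $\inr \btrue$; the case $\neg p x$ is symmetric via $\tau_2$, noting that exactly one of $\charrel p x \btrue$ and $\charrel p x \bfalse$ holds. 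The equivalence between the stalling and standard forms of computability, established in \Cref{sec:forms}, finally repackages $\tau$ as an ordinary computable tree witnessing $p \redT q$.

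The main obstacle I expect lies in the bookkeeping: namely, decomposing the combined stalling interrogation produced by $\tau$ into the two sub-interrogations of $\tau_1$ and $\tau_2$ against $\charrel q$, and aligning them with the $\delta$-evaluations used inside the state. The flexibility of stalls, which absorb internal state updates without fabricating oracle queries, is exactly what makes this orchestration go through cleanly.
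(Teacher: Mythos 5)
Your proposal matches the paper's proof in essence: both dovetail $\tau_1$ and $\tau_2$ inside a stalling tree whose state carries a step budget together with bookkeeping of which oracle answers belong to which sub-tree, emit every pending question of either sub-tree, stall when the budget is exhausted, and output $\btrue$ or $\bfalse$ according to which semi-decider halts, with correctness in the completeness direction obtained by casing on the boolean $b$ in $\charrel p\,x\,b$. The one terminological slip is your appeal to $\delta$ for the bounded sub-evaluation: $\delta$ consumes a partial function answering the oracle's questions, which is not available inside the combined tree, so what you actually need (and what the paper uses) is the step-indexed evaluator $\rho^n$ applied to the single partial value $\tau_j\,x\,l_j$, emitting any resulting question outward to the enclosing interrogation.
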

\begin{proof}
  Let $p\of{X \to \Prop}$ and $q\of{Y\to\Prop}$ as well as $F_1$ and $F_2$ be the functionals representing the semi-deciders,
  computed respectively by $\tau_1$ and $\tau_2$.
  The intuition is, on input $x$ and $\ans$, to execute $\tau_1\;x$ and $\tau_2\;x$ in parallel and ensure that both their questions are asked.
  The interrogation can finish with $\btrue$ if $\tau_1\;x$ outputs a value,
  and with $\bfalse$ if $\tau_2\;x$ does.

  There are two challenges in making this intuition formal as an oracle computation:
  Only answers from $\ans$ that $\tau_{1}$ and $\tau_2$ asked for have to be actually passed to it, respectively,
  and both $\tau_1$ and $\tau_2$ need to be allowed to ask all of their questions and eventually produce an output fairly, even though only one of them ever will.
  
  	\newcommand{\getans}[1]{\mathsf{getas}_{#1}}
  	\newcommand{\getqs}[1]{\mathsf{getqs}_{#1}}

  Using \Cref{coq:Turing_reducible_without_rel}, we define the Turing reduction without providing the relational layer and instead directly construct a tree $\tau$ based on stalling interrogations
  with state type $S := \option Y \times \nat \times \List{(\bool \times Y)}$.
  The first argument is used to remember a question that needs to be asked next, arising from cases where both $\tau_1$ and $\tau_2$ want to ask a question.
  The second argument is a step-index $n$ used to evaluate both $\tau_1$ and $\tau_2$ for $n$ steps.
  The third argument records which question was asked by $\tau_1$ and which by $\tau_2$.
  To then construct $\tau$ compactly, we define helper functions $\getans{1,2} \of{\List{(\bool \times Y)} \to \List \bool \to \List Y}$ which choose answers from the second list according to the respective boolean in the first list.
  
  We then define
  \begin{mathpar}
    \tau (\Some q, n, t) \ans := \ret (\inl (\None, n, t \app [(\bfalse, q)], \Some q)) \\
    
    \tau (\None, n, t) \ans :=\scriptsize
    \begin{cases}
      \ret(\inr \btrue) & \text{if } x_1 = \Some(\inr o)  \\
      \ret(\inr \bfalse) & \text{if } x_2 = \Some (\inr o) \\
      \ret(\inl (\Some q', \succN n, t \app [(\btrue, q)], \Some q))
                        & \text{if } x_1 = \Some (\inl q) \\
      & \text{ and } x_2 = \Some (\inl q')  \\
      \ret(\inl (\None, \succN n, t \app [(\btrue, q)], \Some q))
                        & \text{if } x_1 = \Some (\inl q)  \\
      \ret(\inl (\None, \succN n, t \app [(\bfalse, q)], \Some q))
                        & \text{if } x_2 = \Some (\inl q)  \\
      \ret(\inl (\None, \succN n, t, \None)) & \text{otherwise} 
    \end{cases}
  \end{mathpar}
  where $x_1 = \rho^n\,(\tau_1\, x\, (\getans{1}\, t\, \ans))$ and $x_2 = \rho^n\,(\tau_2\, x\, (\getans{2}\, t\, \ans))$,
  with $\rho$ being a step-indexed evaluation function for partial values.

  This means that whenever $\tau_1$ returns an output, then $\btrue$ is returned and
  whenever $\tau_2$ returns an output, then $\bfalse$ is returned while no question is ever missed
  and the interrogation stalls if $n$ does not suffice to evaluate either $\tau_1$ or $\tau_2$.
  The invariants to prove that this indeed yields the wanted Turing reduction are technical but pose no major hurdles, we refer to the Coq code for details.\qed

\end{proof}

\enlargethispage{1.5\baselineskip}%
\begin{corollary}	\label{corollaryPT}
  The following are equivalent:
  \begin{enumerate}
  \item $\mathsf{MP}$
  \item Termination of partial functions is double negation stable.
  \item Turing reducibility transports decidability backwards.
  \item $\mathsf{PT}_0$
  \end{enumerate}
\end{corollary}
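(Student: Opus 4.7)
The plan is to establish the cycle as four short arguments, each reusing a result already developed in the paper.

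First, the equivalence $(1) \Leftrightarrow (2)$ is immediate from \Cref{coq:MP_to_MP_partial}, since clause (2) is literally the statement of $\MP$ lifted to arbitrary partial functions. The equivalence $(1) \Leftrightarrow (3)$ has also already been settled: $\MP \Rightarrow (3)$ is \Cref{coq:transport_decidable}, and $(3) \Rightarrow \MP$ is \Cref{coq:decidable_Turing_MP}. So the only new content is stitching in $\mathsf{PT}_0$, i.e.\ closing the cycle through (4).

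For $(3) \Rightarrow (4)$: assume decidability is transported backwards by $\redT$, and suppose $p$ and $\overline p$ are semi-decidable. By \Cref{coq:semi_decidable_to_OracleSemiDecidable}, both are oracle semi-decidable relative to any predicate $q$; pick a decidable $q$ (e.g.\ the everywhere-false predicate on $\mathbbm{1}$). Theorem~\ref{coq:PT} ($\mathsf{PT}$), which is the constructive relativised form, then yields $p \redT q$, and assumption (3) lifts this to decidability of $p$. For $(4) \Rightarrow (3)$: assume $\mathsf{PT}_0$, take $q$ decidable with $p \redT q$. By \Cref{coq:Turing_to_sdec} we obtain $\sdec q p$ and $\sdec q {\overline p}$, and \Cref{lem:sdec_dec} (whose hypothesis ``$q$ decidable'' is met) turns both into ordinary semi-decidability of $p$ and $\overline p$. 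Now $\mathsf{PT}_0$ finishes the job.

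There is essentially no obstacle in the argument: all four implications are direct compositions of already-stated lemmas. The conceptual point — which is the one worth emphasising in the write-up — is that $\mathsf{PT}$ (the constructive, relativised Post's theorem) acts as the bridge that \emph{decouples} the question of decidability-transport along $\redT$ from the non-relativised statement $\mathsf{PT}_0$, thereby recovering the classical equivalence $\mathsf{MP} \Leftrightarrow \mathsf{PT}_0$ as a consequence of (1)~$\Leftrightarrow$~(3) and the cycle $(3) \Leftrightarrow (4)$ just described, without reverting to the direct proof from~\cite{troelstra1988constructivism,BauerSyntCT,forster2019synthetic}.
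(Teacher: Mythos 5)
Your proof is correct, and the one genuinely new step --- the implication $(3) \Rightarrow (4)$ via the constructive relativised theorem $\mathsf{PT}$, instantiated at a trivially decidable oracle and combined with \Cref{coq:semi_decidable_to_OracleSemiDecidable} --- is exactly the paper's argument. The difference is purely in how the remaining equivalences are assembled. The paper closes a single cycle $(1)\to(2)\to(3)\to(4)\to(1)$, discharging $(1)\to(2)$ and $(4)\to(1)$ as ``well-known'' and so never needing a direct proof of $(4)\Rightarrow(3)$. You instead prove the pairwise equivalences $(1)\Leftrightarrow(2)$, $(1)\Leftrightarrow(3)$ and $(3)\Leftrightarrow(4)$, which forces you to supply two extra arguments: $(3)\Rightarrow(1)$, for which you correctly invoke \Cref{lem_MP_reverse}, and a direct $(4)\Rightarrow(3)$ via the lemma that $p \redT q$ gives $\sdec q p$ and $\sdec q {\overline p}$ together with \Cref{lem:sdec_dec}. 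Both of these are sound and the latter is a nice self-contained derivation that the paper does not spell out; note only that it is logically redundant given $(4)\Rightarrow(1)\Rightarrow(2)\Rightarrow(3)$, and that your route does not actually eliminate the classical citation, since the proof of \Cref{lem_MP_reverse} itself rests on the known fact that ``bi-semi-decidable implies decidable'' entails $\MP$ (i.e.\ essentially $(4)\Rightarrow(1)$). So: same key idea, a slightly more redundant but equally valid decomposition.
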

\begin{proof}
  Implications $(1) \to (2)$ and $(4) \to (1)$ are well-known.
  We have already proved implication $(2) \to (3)$.
  It suffices to prove $(3) \to (4)$, which is almost direct using~$\mathsf{PT}$:
  Assume that for all $X$, $Y$, $p\of{X\to\Prop}$, and $q\of{Y\to\Prop}$
  we have that if $q$ is decidable and $p \redT q$, then $p$ is decidable.
  Let furthermore $p$ and its complement be semi-decidable.
  We prove that $p$ is decidable.
  Clearly, it suffices to prove that $p \redT q$ for a decidable predicate $q$ (e.g.\ $\lambda n \of \nat.\top$).
  Using $\mathsf{PT}$, it suffices to prove $p$ and its complement semi-decidable in $q$,
  which in turn follows from the assumption that they are semi-decidable and \Cref{coq:semi_decidable_to_OracleSemiDecidable}.
  \qed
\end{proof}

\section{Discussion}
\label{sec:conclusion}

\textit{Mechanisation in Coq}\quad
\label{sec:coq}
The Coq mechanisation %
accompanying this paper closely follows the structure of the hyperlinked mathematical presentation and spans roughly 2500 lines of code for the novel results, building on a library of basic synthetic computability theory.
It showcases the feasibility of mechanising ongoing research with reasonable effort and illustrates the interpretation of synthetic oracle computations as a natural notion available in dependently-typed programming languages.
In fact, using Coq helped us a lot with finding the proofs concerning constructive reverse mathematics (\Cref{lem_MP_reverse,lem_LEM_reverse} and \Cref{corollaryPT}) in the first place, where subtleties like double negations need to be tracked over small changes in the definitions.

On top of the usual proof engineering, we used three notable mechanisation techniques.
First, we generalise over all possible implementations of partial functions, so our code is guaranteed to just rely on the abstract interface described in \Cref{sec:glossary}.
Secondly, we devised a custom tactic \texttt{psimpl} that simplifies goals involving partial functions by strategically rewriting with the specifications of the respective operations.
Thirdly, to establish computability of composed functionals, instead of constructing a complicated tree at once, we postpone the construction with the use of existential variables and apply abstract lemmas such as the ones described in \Cref{sec:closure} to obtain the trees step by step.

\textit{Related Work}\quad
Synthetic computability was introduced by Richman~\cite{richman1983church} and popularised by Richman, Bridges, and Bauer~\cite{bridges1987varieties,BauerSyntCT,bauer2017fixed,bauer2020Wisc}.
In synthetic computability, one assumes axioms such as
$\mathsf{CT}$ (``Church's thesis''~\cite{kreisel1965mathematical,troelstra1988constructivism}), postulating that \textit{all} functions are $\mu$-recursive.
$\mathsf{CT}$ is proved consistent for univalent type theory by Swan and Uemura~\cite{swan2019church}.
Since univalent type theory proves unique choice, using it as the basis for computability theory renders $\mathsf{CT}$ inconsistent with already the weak principle of omniscience~\cite{forster2020churchs}, and consequently with the law of excluded middle, precluding interesting results in constructive reverse mathematics.

Forster~\cite{forster2022parametric} identifies that working in CIC allows assuming $\mathsf{CT}$ and its consequences even under the presence of the law of excluded middle.
This approach has been used to develop the theory of many-one and truth-table reducibility~\cite{forster_et_al:LIPIcs.CSL.2023.21},
to give a proof of the Myhill isomorphism theorem~\cite{forster:hal-03891390} and a more general treatment of computational back-and-forth arguments~\cite{kirst2022computational}, to show that random numbers defined using Kolmogorov complexity form a simple set~\cite{forster_et_al:LIPIcs.ITP.2022.12},
to analyse Tennenbaum's theorem regarding its constructive content~\cite{hermes_et_al:LIPIcs.FSCD.2022.9},
to give computational proofs of Gödel's first incompleteness theorem~\cite{kirst2023synthetic,kirst2023godel},
and to develop an extensive Coq library of undecidability proofs~\cite{forster2020coq}.

The first synthetic definition of oracle computability is due to Bauer~\cite{bauer2020Wisc}, based on continuous functionals in the effective topos.
Forster has introduced a classically equivalent definition in his PhD thesis~\cite{forster2022thesis} based on joint work with Kirst.
Forster and Kirst have adapted this definition into one constructively equivalent to Bauer's definition~\cite{types22abstract}.
All these previous definitions however have in common that it is unclear how to derive an enumeration of all oracle computable functionals from $\mathsf{CT}$ as used in \cite{types22kleenepost}, because they do no reduce higher-order functionals to first-order functions.
Recently, Swan has suggested a definition of oracle computability based on modalities in univalent type theory~\cite{swanoracle}.

\textit{Future Work}\quad
With the present paper, we lay the foundation for several future investigations concerning synthetic oracle computability in the context of axioms like $\mathsf{CT}$, both by improving on related projects and by tackling new challenges.
First, a rather simple test would be the Kleene-Post theorem~\cite{Kleene1954}, establishing incomparable Turing degrees as already approximated in~\cite{types22kleenepost}, assuming an enumeration of all oracle computations of their setting. %
Similarly, we plan to establish Post's theorem~\cite{post1948degrees}, connecting the arithmetical hierarchy with Turing degrees. %
An interesting challenge would be a synthetic proof of the Friedberg-Muchnik theorem \cite{Friedberg1957,muchnik1963strong}, solving Post's problem~\cite{post1944recursively} concerning the existence of undecidable Turing degrees strictly below the halting problem.

\subsubsection{Acknowledgements} We want to thank Felix Jahn, Gert Smolka, Dominique Larchey-Wendling, and the participants of the TYPES\ '22 conference for many fruitful discussions about Turing reducibility,
as well as Martin Baillon, Yann Leray, Assia Mahboubi, Pierre-Marie Pédrot, and Matthieu Piquerez for discussions about notions of continuity.
The central inspiration to start working on Turing reducibility in type theory is due to Andrej Bauer's talk at the Wisconsin logic seminar in February 2021.

\bibliographystyle{splncs04}
\bibliography{biblio.bib}

\begin{thebibliography}{10}
\providecommand{\url}[1]{\texttt{#1}}
\providecommand{\urlprefix}{URL }
\providecommand{\doi}[1]{https://doi.org/#1}

\bibitem{BauerSyntCT}
Bauer, A.: First steps in synthetic computability theory. Electronic Notes in
  Theoretical Computer Science  \textbf{155},  5--31 (2006).
  \doi{10.1016/j.entcs.2005.11.049}

\bibitem{bauer2017fixed}
Bauer, A.: On fixed-point theorems in synthetic computability. Tbilisi
  Mathematical Journal  \textbf{10}(3),  167--181 (2017).
  \doi{10.1515/tmj-2017-0107}

\bibitem{bauer2020Wisc}
Bauer, A.: Synthetic mathematics with an excursion into computability theory
  (slide set). University of Wisconsin Logic seminar  (2020),
  \url{http://math.andrej.com/asset/data/madison-synthetic-computability-talk.pdf}

\bibitem{bridges1987varieties}
Bridges, D., Richman, F.: Varieties of constructive mathematics, vol.~97.
  Cambridge University Press (1987). \doi{10.1017/CBO9780511565663}

\bibitem{coquand1986calculus}
Coquand, T., Huet, G.P.: The calculus of constructions. Information and
  Computation  \textbf{76}(2/3),  95--120 (1988).
  \doi{10.1016/0890-5401(88)90005-3},
  \url{https://doi.org/10.1016/0890-5401(88)90005-3}

\bibitem{Davis1958-DAVCU}
Davis, M.D.: Computability and Unsolvability. McGraw-Hill Series in Information
  Processing and Computers, McGraw-Hill (1958)

\bibitem{dekker1954hypersimple}
Dekker, J.C.E.: A theorem on hypersimple sets. Proceedings of the American
  Mathematical Society  \textbf{5},  791--796 (1954).
  \doi{10.1090/S0002-9939-1954-0063995-6}

\bibitem{escardodialogue}
Escardo, M.: Continuity of {G}ödel's system {T} definable functionals via
  effectful forcing. Electronic Notes in Theoretical Computer Science
  \textbf{298},  119–141 (11 2013). \doi{10.1016/j.entcs.2013.09.010}

\bibitem{forster2020churchs}
Forster, Y.: {Church’s Thesis and Related Axioms in Coq’s Type Theory}. In:
  Baier, C., Goubault-Larrecq, J. (eds.) 29th EACSL Annual Conference on
  Computer Science Logic (CSL 2021). Leibniz International Proceedings in
  Informatics (LIPIcs), vol.~183, pp. 21:1--21:19. Schloss
  Dagstuhl--Leibniz-Zentrum f{\"u}r Informatik, Dagstuhl, Germany (2021).
  \doi{10.4230/LIPIcs.CSL.2021.21},
  \url{https://drops.dagstuhl.de/opus/volltexte/2021/13455}

\bibitem{forster2022thesis}
Forster, Y.: Computability in Constructive Type Theory. Ph.D. thesis, Saarland
  University (2021). \doi{10.22028/D291-35758}

\bibitem{forster2022parametric}
Forster, Y.: Parametric {Church’s Thesis}: Synthetic computability without
  choice. In: International Symposium on Logical Foundations of Computer
  Science. pp. 70--89. Springer (2022). \doi{10.1007/978-3-030-93100-1\_6}

\bibitem{csl23}
Forster, Y., Jahn, F.: {Constructive and Synthetic Reducibility Degrees: Post's
  Problem for Many-one and Truth-table Reducibility in Coq}. In: Klin, B.,
  Pimentel, E. (eds.) 31st EACSL Annual Conference on Computer Science Logic
  (CSL 2023). Leibniz International Proceedings in Informatics (LIPIcs),
  vol.~252, pp. 16:1--16:21. Schloss Dagstuhl--Leibniz-Zentrum fuer Informatik,
  Dagstuhl, Germany (2023). \doi{10.4230/LIPIcs.CSL.2023.16}

\bibitem{forster_et_al:LIPIcs.CSL.2023.21}
Forster, Y., Jahn, F.: {Constructive and Synthetic Reducibility Degrees:
  Post’s Problem for Many-One and Truth-Table Reducibility in Coq}. In: Klin,
  B., Pimentel, E. (eds.) 31st EACSL Annual Conference on Computer Science
  Logic (CSL 2023). Leibniz International Proceedings in Informatics (LIPIcs),
  vol.~252, pp. 21:1--21:21. Schloss Dagstuhl -- Leibniz-Zentrum f{\"u}r
  Informatik, Dagstuhl, Germany (2023). \doi{10.4230/LIPIcs.CSL.2023.21},
  \url{https://drops.dagstuhl.de/opus/volltexte/2023/17482}

\bibitem{forster:hal-03891390}
Forster, Y., Jahn, F., Smolka, G.: {A Computational Cantor-Bernstein and
  Myhill's Isomorphism Theorem in Constructive Type Theory}. In: {CPP 2023 -
  12th ACM SIGPLAN International Conference on Certified Programs and Proofs}.
  pp.~1--8. {ACM}, Boston, United States (Jan 2023).
  \doi{10.1145/3573105.3575690}, \url{https://inria.hal.science/hal-03891390}

\bibitem{types22abstract}
Forster, Y., Kirst, D.: Synthetic {T}uring reducibility in constructive type
  theory. 28th International Conference on Types for Proofs and Programs (TYPES
  2022) (2022),
  \url{https://types22.inria.fr/files/2022/06/TYPES_2022_paper_64.pdf}

\bibitem{forster2019synthetic}
Forster, Y., Kirst, D., Smolka, G.: {On synthetic undecidability in Coq, with
  an application to the Entscheidungsproblem}. In: Proceedings of the 8th {ACM}
  {SIGPLAN} International Conference on Certified Programs and Proofs - {CPP}
  2019. {ACM} Press (2019). \doi{10.1145/3293880.3294091},
  \url{https://doi.org/10.1145/3293880.3294091}

\bibitem{forster_et_al:LIPIcs.ITP.2022.12}
Forster, Y., Kunze, F., Lauermann, N.: {Synthetic Kolmogorov Complexity in
  Coq}. In: Andronick, J., de~Moura, L. (eds.) 13th International Conference on
  Interactive Theorem Proving (ITP 2022). Leibniz International Proceedings in
  Informatics (LIPIcs), vol.~237, pp. 12:1--12:19. Schloss Dagstuhl --
  Leibniz-Zentrum f{\"u}r Informatik, Dagstuhl, Germany (2022).
  \doi{10.4230/LIPIcs.ITP.2022.12},
  \url{https://drops.dagstuhl.de/opus/volltexte/2022/16721}

\bibitem{forster2020coq}
Forster, Y., Larchey-Wendling, D., Dudenhefner, A., Heiter, E., Kirst, D.,
  Kunze, F., Smolka, G., Spies, S., Wehr, D., Wuttke, M.: A {Coq} library of
  undecidable problems. In: The Sixth International Workshop on Coq for
  Programming Languages (CoqPL 2020). (2020),
  \url{https://github.com/uds-psl/coq-library-undecidability}

\bibitem{Friedberg1957}
Friedberg, R.M.: Two recursively enumerable sets of incomparable degrees of
  unsovlability (solution of post's problem, 1944. Proceedings of the National
  Academy of Sciences  \textbf{43}(2),  236--238 (Feb 1957).
  \doi{10.1073/pnas.43.2.236}, \url{https://doi.org/10.1073/pnas.43.2.236}

\bibitem{hermes_et_al:LIPIcs.FSCD.2022.9}
Hermes, M., Kirst, D.: {An Analysis of Tennenbaum’s Theorem in Constructive
  Type Theory}. In: Felty, A.P. (ed.) 7th International Conference on Formal
  Structures for Computation and Deduction (FSCD 2022). Leibniz International
  Proceedings in Informatics (LIPIcs), vol.~228, pp. 9:1--9:19. Schloss
  Dagstuhl -- Leibniz-Zentrum f{\"u}r Informatik, Dagstuhl, Germany (2022)

\bibitem{kirst2022computational}
Kirst, D.: Computational back-and-forth arguments in constructive type theory.
  In: 13th International Conference on Interactive Theorem Proving (ITP 2022).
  Schloss Dagstuhl-Leibniz-Zentrum f{\"u}r Informatik (2022)

\bibitem{types22kleenepost}
Kirst, D., Forster, Y., Mück, N.: {Synthetic Versions of the Kleene-Post and
  Post’s Theorem}. 28th International Conference on Types for Proofs and
  Programs (TYPES 2022) (2022),
  \url{https://types22.inria.fr/files/2022/06/TYPES_2022_paper_65.pdf}

\bibitem{kirst2023synthetic}
Kirst, D., Hermes, M.: Synthetic undecidability and incompleteness of
  first-order axiom systems in {C}oq: Extended version. Journal of Automated
  Reasoning  \textbf{67}(1), ~13 (2023)

\bibitem{kirst2023godel}
Kirst, D., Peters, B.: G{\"o}del’s theorem without tears - essential
  incompleteness in synthetic computability. In: 31st EACSL Annual Conference
  on Computer Science Logic (CSL 2023). Schloss Dagstuhl-Leibniz-Zentrum
  f{\"u}r Informatik (2023)

\bibitem{Kleene_1959}
Kleene, S.C.: Recursive functionals and quantifiers of finite types i.
  Transactions of the American Mathematical Society  \textbf{91}(1), ~1 (Apr
  1959). \doi{10.2307/1993145},
  \url{https://www.jstor.org/stable/1993145?origin=crossref}

\bibitem{kleene1952introduction}
Kleene, S.C.: Introduction to metamathematics, vol.~483. van Nostrand New York
  (1952)

\bibitem{Kleene1954}
Kleene, S.C., Post, E.L.: The upper semi-lattice of degrees of recursive
  unsolvability. The Annals of Mathematics  \textbf{59}(3), ~379 (May 1954).
  \doi{10.2307/1969708}, \url{https://doi.org/10.2307/1969708}

\bibitem{kreisel1965mathematical}
Kreisel, G.: Mathematical logic. Lectures in modern mathematics  \textbf{3},
  95--195 (1965). \doi{10.2307/2315573}

\bibitem{muchnik1963strong}
Muchnik, A.A.: On strong and weak reducibility of algorithmic problems.
  Sibirskii Matematicheskii Zhurnal  \textbf{4}(6),  1328--1341 (1963)

\bibitem{odifreddi1992classical}
Odifreddi, P.: Classical recursion theory: The theory of functions and sets of
  natural numbers. Elsevier (1992)

\bibitem{van_Oosten_1999}
van Oosten, J.: A combinatory algebra for sequential functionals of finite
  type. In: Models and Computability, pp. 389--406. Cambridge University Press
  (jun 1999). \doi{10.1017/cbo9780511565670.019},
  \url{https://doi.org/10.1017%2Fcbo9780511565670.019}

\bibitem{vanOosten2011-VANPCA-2}
van Oosten, J.: Partial combinatory algebras of functions. Notre Dame Journal
  of Formal Logic  \textbf{52}(4),  431--448 (2011).
  \doi{10.1215/00294527-1499381}

\bibitem{paulin1993inductive}
Paulin-Mohring, C.: Inductive definitions in the system {Coq} rules and
  properties. In: International Conference on Typed Lambda Calculi and
  Applications. pp. 328--345. Springer (1993). \doi{10.1007/BFb0037116}

\bibitem{paulin2015introduction}
Paulin-Mohring, C.: {Introduction to the Calculus of Inductive Constructions}
  (Jan 2015), \url{https://hal.inria.fr/hal-01094195}

\bibitem{post1944recursively}
Post, E.L.: Recursively enumerable sets of positive integers and their decision
  problems. bulletin of the American Mathematical Society  \textbf{50}(5),
  284--316 (1944). \doi{10.1090/S0002-9904-1944-08111-1}

\bibitem{post1948degrees}
Post, E.L.: Degrees of recursive unsolvability - preliminary report. In:
  Bulletin of the American Mathematical Society. vol.~54:7, pp. 641--642.
  American Mathematical Society ({AMS}) (1948)

\bibitem{richman1983church}
Richman, F.: Church's thesis without tears. The Journal of symbolic logic
  \textbf{48}(3),  797--803 (1983). \doi{10.2307/2273473}

\bibitem{swan2019church}
Swan, A., Uemura, T.: On {C}hurch's thesis in cubical assemblies. arXiv
  preprint arXiv:1905.03014  (2019), \url{https://arxiv.org/abs/1905.03014}

\bibitem{swanoracle}
Swan, A.W.: Oracle modalities. Second International Conference on Homotopy Type
  Theory (HoTT 2023)  (2023),
  \url{https://hott.github.io/HoTT-2023/abstracts/HoTT-2023_abstract_35.pdf}

\bibitem{Coq}
{The Coq Development Team}: The coq proof assistant version 8.13.2 (Jan 2021).
  \doi{10.5281/zenodo.4501022}, \url{https://doi.org/10.5281/zenodo.4501022}

\bibitem{troelstra1988constructivism}
Troelstra, A.S., van Dalen, D.: Constructivism in mathematics. vol. i. Studies
  in Logic and the Foundations of Mathematics  \textbf{26} (1988)

\bibitem{turing1939systems}
Turing, A.M.: Systems of logic based on ordinals. Proceedings of the London
  mathematical society  \textbf{2}(1),  161--228 (1939).
  \doi{10.1112/plms/s2-45.1.161}

\end{thebibliography}

\appendix
\section{Glossary of Definitions}
\label{sec:glossary}

We collect some basic notations and definitions:

\begin{itemize}

\item 
\label{def:Prop}
$\Prop$ is the (impredicative) universe of propositions.

\vspace{0.2cm}
\item
\textit{Natural numbers:}
\label{def:nat}
{ \hfill 
  $ n : \nat ::= 0 \mid \mathsf{S}\; n$
}

\vspace{0.2cm}
\item
\textit{Booleans:}
\label{def:bool}
{\hfill
$ b : \bool ::= \btrue \mid \bfalse $
}

\vspace{0.2cm}
\item
\textit{Unit type:}
{\hfill
	\label{def:unit}
	$\unit ::= \star$}

\vspace{0.2cm}
\item
\textit{Sum type:}
\label{def:sum}
{\hfill
$ \sumt X Y ::= \mathsf{inl} x \mid \mathsf{inr} y \quad (x : X, y : Y) $
}

\vspace{0.2cm}
\item
\textit{Option type:}
\label{def:option}
{\hfill $o : \option X ::= \None \mid \Some x \quad (x : X)$}

\vspace{0.2cm}
\item
\textit{Lists:}
\label{def:list}
{\hfill
  $ l : \List X ::= [\;] \mid x :: l \quad (x : X) $
}

\end{itemize}

\paragraph*{List operations}
We often rely on concatenation of of two lists $l_1 \app l_2$:
\label{def:app}
\begin{mathpar}{}
  [\,] \app l_2 := l_2 \and (x :: l_1) \app l_2 := x :: (l_1 \app l_2)
\end{mathpar}
\noindent
Also, we use an inductive predicate $\Forall_2\of{(X\to Y\to\Prop) \to \List X \to \List Y \to \Prop}$\label{def:Forall}\label{def:forall}
\begin{mathpar}
  \infer{~}{\Forall_2\,p\,[\,]\,[\,]} \and  
  \infer{p x y \and \Forall_2\,p\,l_1\,l_2}{\Forall_2\,p\,(x :: l_1)\,(y :: l_2)}
\end{mathpar}

\paragraph*{Characteristic relation}
\label{def:charrel}
The characteristic relation $\charrel p\of{X\to\bool\to\Prop}$ of a predicate $p\of{X\to\Prop}$ is introduced in \Cref{intro:charrel} as
$$\charrel p~:=~\lambda x b.
\begin{cases}
  px & \text{if } b = \btrue \\
  \neg p x & \text{if } b = \bfalse. \\
\end{cases}$$

\paragraph{Reducibility}
\label{def:redm}\label{def:redT}\label{def:redtt}
$\redm$ is many-one reducibility, introduced in \Cref{intro:redm}.
$\redtt$ is truth-table reducibility, introduced in \Cref{intro:redtt}.
$\redT$ is Turing reducibility, introduced in \Cref{intro:redT}.

\paragraph{Interrogations}
\label{def:interrogate}\label{def:subtree}
The interrogation predicate $\interrogate \sigma R \qs \ans$ is introduced in \Cref{intro:interrogate}.
It works on a tree $\sigma\of{\List A \to \sumt Q O}$.
We often also use trees taking an input, i.e.\ $\tau \of{I \to \List A \to \sumt Q O}$.
Given $\sigma$, we denote the subtree starting at path $l \of {\List A}$ with $\sigma \at l := \lambda l'.\;\sigma(l \app l')$.

\paragraph{Partial functions}
\label{def:part}\label{def:partial}\label{def:rho}
We use an abstract type of partial values over $X$, denoted as $\mathcal P X$,
with evaluation relation $\hasvalue\of{\mathcal P X \to X \to \Prop}$.
We set $\ofbox{X \pto Y} := \ofbox{X \to \mathcal P Y}$
and use
\begin{itemize}
\item  $\ret \of {X \pto X}$ with $\ret x \hasvalue x$,
\item $\bind \of { \mathcal P X \to (X \to \mathcal P Y) \to \mathcal P Y}  $ with $x \bind f \hasvalue y \leftrightarrow \exists v.\;x \hasvalue v \land f v \hasvalue y$,
\item $\mu \of {(\nat \to \mathcal P \bool) \to \mathcal P \nat} $ with $\mu f \hasvalue n \leftrightarrow f n \hasvalue \btrue \land \forall m < n.\;f m \hasvalue \bfalse$, and
\item $\partundef \of {\mathcal P X}$ with $\forall v.\;\partundef \not\hasvalue v$.
\end{itemize}

One can for instance implement $\mathcal P X$ as monotonic sequences $f:\nat \to \option X$, i.e.\ with $f n = \Some x \to \forall m \geq n.\;f m = \Some x$
and $f \hasvalue x := \exists n.\;f n = \Some x$.
For any implementation it is only crucial that the graph relation $\lambda x y.f x \hasvalue y$ for $f \of {\nat \pto \nat}$ is semi-decidable but cannot be proved decidable.
Semi-decidability induces a function $\rho\of{\mathcal{P} X \to \nat \to \option X}$, which we write as $\rho^n x$ with the properties that $x \hasvalue v \leftrightarrow \exists n.\;\rho^n x = \Some v$ and $\rho^n x = \Some v \to \forall m \geq n.\;\rho^m x = \Some v$.

\section{Extended Forms of Interrogations}
\label{sec:forms}

\subsection{Extended Interrogations with State}
\label{sec:extended}

As an auxiliary notion, before introducing the stalling interrogations, we first introduce extended interrogations with a state argument, but without stalling.
An extended tree is a function $\sigma : S \to \List A \pto \sumt {(S \times Q)} O$.
We define an inductive extended interrogation predicate $\interrogate \sigma R \qs \ans \mathbin{;} s \succ s'$ by:
\begin{mathpar}
	\infer{~}{\interrogate \sigma R {[]} {[]} \mathbin{;} s \succ s}
	
	\infer{\interrogate \sigma R \qs \ans \mathbin{;} s \succ s'' \and \sigma\;s''\;\ans \hasvalue \inl (s',q) \and R q a}
	{\interrogate \sigma R {\qs \app [q]} {\ans \app [a]} \mathbin{;} s \succ s'}
\end{mathpar}

A functional $F$
mapping $R \of{Q \to A \to \Prop}$
to a relation of type
$ \ofbox{I \to O \to \Prop}$ is computable via extended interrogations if
there are a type $S$, an element $s_0 : S$, and
a function $\tau \of{I \to S \to \List{A} \pto \sumt {(S \times Q)} O}$ such that
\[ \forall R\,i\,o.\;F R\,i\,o \leftrightarrow \exists \qs \; \ans \; s. ~ \interrogate {\tau i} R \qs \ans \mathbin{;} s_0 \succ s ~\land~ \tau\,i\,s\;\ans \hasvalue \inr o.  \]

Note that we do not pass the question history to the function here,
because if necessary it can be part of the type $S$.
\setCoqFilename{OracleComputability}
\begin{lemma}[][eOracleComputable_equiv]
	Computable functionals are computable via extended interrogations.
\end{lemma}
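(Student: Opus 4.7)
The plan is to show that a standard computable functional can be recast in the extended form by taking the trivial state type $S := \unit$, where the state plays no role. Concretely, given $F$ computed by $\tau \of{I \to \List A \pto \sumt Q O}$, I would define the extended tree $\tau' \of{I \to \unit \to \List A \pto \sumt{(\unit \times Q)} O}$ by
\[ \tau'\,i\,\star\,\ans \,:=\, \tau\,i\,\ans \bind \lambda x.\,\ret\begin{cases} \inl(\star, q) & \text{if } x = \inl q \\ \inr o & \text{if } x = \inr o \end{cases} \]
and take the initial state to be $s_0 := \star$. By the properties of $\ret$ and $\bind$, the value of $\tau'\,i\,\star\,\ans$ agrees with that of $\tau\,i\,\ans$ up to the trivial repackaging of questions with $\star$.

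The central lemma is then a correspondence between the two interrogation predicates: for all $\qs, \ans$,
\[ \interrogate{\tau i} R \qs \ans \,\longleftrightarrow\, \interrogate{\tau' i} R \qs \ans \mathbin{;} \star \succ \star. \]
I would prove both directions by induction on the given interrogation. For the forward direction, the base case is immediate, and the inductive case invokes the second rule of the extended predicate at state $\star$, using that $\tau\,i\,\ans \hasvalue \inl q$ implies $\tau'\,i\,\star\,\ans \hasvalue \inl(\star, q)$ by the $\bind$ specification. For the reverse direction, since the only possible state is $\star$, any step in the extended interrogation must be of the form $\tau'\,i\,\star\,\ans \hasvalue \inl(\star, q)$ with $R q a$, and this value can only arise from $\tau\,i\,\ans \hasvalue \inl q$, recovering the standard rule.

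Combining the correspondence with the analogous observation for outputs (i.e.\ $\tau\,i\,\ans \hasvalue \inr o$ iff $\tau'\,i\,\star\,\ans \hasvalue \inr o$), the characterisation of $F R\,i\,o$ via $\tau$ transfers literally to the characterisation via $\tau'$, establishing that $F$ is computable via extended interrogations with witness $(\unit, \star, \tau')$.

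There is no genuine obstacle here; the construction is essentially bureaucratic, reflecting that the extended form strictly generalises the standard one. The only care needed is in the use of the partial-function specifications for $\bind$ and $\ret$ (the \texttt{psimpl} tactic mentioned in \Cref{sec:coq} is presumably designed for exactly this kind of rewriting), and in making sure the state argument, while uninformative here, is threaded correctly through the inductive step so that the matching quantifier $\exists s$ in the definition of extended computability can be instantiated with $\star$.
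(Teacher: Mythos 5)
Your proposal is correct and matches the paper's own proof: the paper likewise keeps the state inert, defining $\tau'\,i\,s\,l := \tau\,i\,l \bind \lambda x.\dots$ with the state passed through unchanged (it merely says ``any inhabited type $S$'' where you concretely pick $\unit$). The explicit induction on the two interrogation predicates that you spell out is exactly the routine verification the paper leaves implicit.
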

\begin{proof}
	Let $F$ be computable by $\tau$.
	Set $S$ to be any inhabited type with element $s_0$ and
	define \[\tau'\; i\; s\; l := \tau\,i\,l \bind \lambda x.\;
	\begin{cases}
	\ret(\inl(s, q)) & \text{if } x = \inl q  \\
	\ret o & \text{if } x = \inr o.
	\end{cases}.
	\]
	Then $\tau'$ computes $F$ via extended interrogations.
	\qed
\end{proof}

\begin{lemma}[][eOracleComputable_equiv]
	Functionals computable via extended interrogations are computable.
\end{lemma}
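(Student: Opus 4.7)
The plan is to build a plain tree $\tau'$ that, on any call, replays the stateful tree $\tau$ from the initial state $s_0$ along the prefixes of the given answer list. The point is that the state evolution is completely determined by the answer list together with $s_0$, so we can recompute the required state on demand rather than carrying it in a dedicated argument.

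Concretely, given $\tau\of{I \to S \to \List A \pto \sumt{(S \times Q)}{O}}$ computing $F$ via extended interrogations with initial state $s_0$, I would first define a helper $\phi\of{I \to S \to \List A \to \List A \pto S}$ by recursion on its last argument, so that on input $(i, s, l, \ans)$ it walks $\tau$ forward starting from state $s$ with seen prefix $l$, consuming the answers in $\ans$ one by one through $\bind$ and returning the resulting state (diverging if $\tau$ produces an output prematurely, since that would contradict the answer list continuing). Then I would set $\tau'\,i\,\ans$ to first compute a state $s$ from $\phi\,i\,s_0\,[]\,\ans$, next evaluate $\tau\,i\,s\,\ans$, and finally strip the state component from the result, returning either $\inl q$ or $\inr o$ depending on which branch $\tau$ takes.

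The key correctness lemma, to be proved by induction on interrogations, is that $\interrogate{\tau' i}{R}{\qs}{\ans}$ holds if and only if there exists $s$ with $\interrogate{\tau i}{R}{\qs}{\ans} \mathbin{;} s_0 \succ s$, and in that case moreover $\phi\,i\,s_0\,[]\,\ans \hasvalue s$. Together with the corresponding equivalence between $\tau'\,i\,\ans \hasvalue \inr o$ and $\tau\,i\,s\,\ans \hasvalue \inr o$ at the final state, this transports the computability of $F$ via extended interrogations directly to plain computability in the sense of \Cref{sec:oracle}.

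The main obstacle I expect is the bookkeeping around $\phi$: one has to maintain that the seen-prefix argument stays synchronised with the list of answers already consumed, and that the recursion on $\ans$ unfolds step-by-step in lockstep with the inductive rule of the extended interrogation. A small supporting lemma saying that $\phi\,i\,s_0\,[]\,(\ans \app [a])$ factors through $\phi\,i\,s_0\,[]\,\ans$ followed by one more step of $\tau$ — essentially an analogue of \Cref{lem:noqinterrogation_app} for $\phi$ — should cleanly underpin the induction, after which both directions of the equivalence become routine unfoldings.
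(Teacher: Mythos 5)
Your proposal is correct and matches the paper's proof in essence: both exploit that the state is determined by $s_0$ and the answer list, and recover it by replaying $\tau$ along the prefixes of $\ans$ before stripping the state from the final action. The only cosmetic difference is that the paper fuses the state recomputation into a single recursive definition of $\tau'$ with accumulator arguments for the current state and consumed prefix, whereas you factor it into a separate helper $\phi$ followed by one final call to $\tau$; the correctness lemma relating plain interrogations of $\tau'$ to extended interrogations of $\tau$ is the same either way.
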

\begin{proof}
	Let $\tau \of{I \to S \to \List{A} \pto \sumt {(S \times Q)} O}$ compute $F$ via extended interrogations.
	Define $\tau' \of{S \to \List{A} \to I \to \List{A} \pto \sumt Q O}$ as
	\begin{mathpar}
		\tau'\;s\;l\;i\;[] := \tau\,i\,s\;l \bind \begin{cases}
			\ret (\inl q) & \text{if } x = \inl (e,q) \\
			\ret (\inr o) & \text{if } x = \inr o, \\
		\end{cases}
		\and
		\tau'\,s\,l\,i\,(a :: \ans) := \tau\,s\,l\,i \bind \lambda x.\;
		\begin{cases}
			\tau'\,s'\,(l \app [a])\,i\,\ans & \text{if } x = \inl (s',q) \\
			\ret (\inr o) & \text{if } x = \inr o. \\
		\end{cases}
	\end{mathpar}
	Then $\tau'\,s_0\,[]$ computes $F$.
	\qed
\end{proof}

\subsection{Stalling Interrogations}
\label{sec:stalling}
\setCoqFilename{TuringReducibility.OracleComputability}

We here give the left out proofs that stalling interrogations as described in \Cref{sec:PT} and interrogations are equivalent.

\begin{lemma}[][sOracleComputable_equiv]
	Functionals computable via extended interrogations are computable via stalling interrogations.
\end{lemma}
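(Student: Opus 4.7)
The plan is to exhibit a direct embedding of extended interrogations into stalling interrogations, since the latter are strictly more flexible: a stalling tree can do anything an extended tree can, namely ask a genuine question, plus it has the extra option of returning $\inl(s',\None)$ to merely update state. Concretely, given $F$ computable via extended interrogations with tree $\tau \of{I \to S \to \List A \pto \sumt{(S \times Q)} O}$ and initial state $s_0 \of S$, I would define a stalling tree $\tau' \of{I \to S \to \List A \pto \sumt{(S \times \option Q)} O}$ by
\[
  \tau'\,i\,s\,\ans := \tau\,i\,s\,\ans \bind \lambda x.
  \begin{cases}
    \ret(\inl(s', \Some q)) & \text{if } x = \inl(s', q) \\
    \ret(\inr o)            & \text{if } x = \inr o
  \end{cases}
\]
keeping the same state type $S$ and the same initial state $s_0$. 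The key observation is that $\tau'$ never produces an output of the shape $\inl(s',\None)$, so the stalling rule of the interrogation predicate is simply unused.

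Next, I would prove the equivalence
\[
  \interrogate{\tau\,i}{R}{\qs}{\ans} \mathbin{;} s_0 \succ s
  \;\longleftrightarrow\;
  \interrogate{\tau'\,i}{R}{\qs}{\ans} \mathbin{;} s_0 \succ s
\]
together with $\tau\,i\,s\,\ans \hasvalue \inr o \leftrightarrow \tau'\,i\,s\,\ans \hasvalue \inr o$, by induction on the interrogation in each direction. In the forward direction, the empty-list base case is immediate, and the inductive step uses the third (question-asking) rule of stalling interrogations together with the specification of $\bind$ on the branch $x = \inl(s',q)$. In the reverse direction, any use of the stalling rule would require $\tau'\,i\,s'\,\ans \hasvalue \inl(s'',\None)$, which is impossible by construction of $\tau'$; hence only the asking rule can be applied, corresponding exactly to the extended interrogation rule. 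The output-value equivalence follows from the same case analysis on the continuation of $\bind$.

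Putting this together, the chain of equivalences
\[
  F\,R\,i\,o \;\leftrightarrow\; \exists \qs\,\ans\,s.\; \interrogate{\tau\,i}{R}{\qs}{\ans} \mathbin{;} s_0 \succ s \,\land\, \tau\,i\,s\,\ans \hasvalue \inr o
  \;\leftrightarrow\; \exists \qs\,\ans\,s.\; \interrogate{\tau'\,i}{R}{\qs}{\ans} \mathbin{;} s_0 \succ s \,\land\, \tau'\,i\,s\,\ans \hasvalue \inr o
\]
witnesses computability of $F$ via stalling interrogations with tree $\tau'$. The main obstacle is not conceptual but bureaucratic: one has to be careful that the induction goes through cleanly despite $\bind$ being layered over $\tau$, which in practice is handled by the \texttt{psimpl} tactic mentioned in the mechanisation discussion, exposing the branches of the pattern match so the interrogation rules align on the nose.
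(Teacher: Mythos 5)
Your construction is exactly the one in the paper: wrap the extended tree with a $\bind$ that relabels $\inl(s',q)$ as $\inl(s',\Some q)$ and passes $\inr o$ through, keeping the same state type and initial state, and then observe that the stalling rule is never triggered. The additional detail you give on the two inductions and the impossibility of the $\None$ branch in the reverse direction is a correct elaboration of what the paper leaves implicit.
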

\begin{proof}
	Let $F$ be computable using a type $S$ and element $s_0$ by $\tau$ via extended interrogations.
	We use the same type $S$ and element $s_0$ and define $\tau'$ to never use stalling:
	\[\tau'\, i\, s\, l := \tau\,i\,s\,l \bind \lambda x.\;
	\begin{cases}
	\ret(\inl(s', \Some q)) & \text{if } x = \inl(s',q)  \\
	\ret (\inr o) & \text{if } x = \inr o.
	\end{cases}
	\]
	Then $\tau'$ computes $F$ via stalling interrogations.
	\qed
\end{proof}

\begin{lemma}[][sOracleComputable_equiv]
	Functionals computable via stalling interrogations are computable via extended interrogations.
\end{lemma}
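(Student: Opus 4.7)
The idea is to simulate every genuine question-asking step (and the final output step) of a stalling computation by a single step of an extended computation, collapsing any preceding stall transitions. Given a stalling tree $\tau$ computing $F$ with state type $S$ and initial state $s_0$, I keep the same $S$ and $s_0$ and construct an extended tree $\tau'$ such that $\tau'\,i\,s\,\ans$ repeatedly evaluates $\tau\,i\,\cdot\,\ans$, chasing stall transitions starting from state $s$, until it encounters either a real question or an output, which it then reports as a single extended step.

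To implement this through the partial-value interface, I define an auxiliary $\psi \of{I \to \List A \to S \to \nat \pto (S \times \sumt Q O)}$ by recursion on $n$: at $n = 0$, let $\psi\,i\,\ans\,s\,0$ evaluate $\tau\,i\,s\,\ans$, return $(s',\inl q)$ on $\inl(s',\Some q)$, return $(s,\inr o)$ on $\inr o$, and diverge on any stall; at $n = k+1$, let it diverge on a non-stall result and recurse as $\psi\,i\,\ans\,s'\,k$ when $\tau\,i\,s\,\ans \hasvalue \inl(s',\None)$. Then set
\[ \tau'\,i\,s\,\ans ~:=~ \mu(\lambda n.\,\psi\,i\,\ans\,s\,n \bind \lambda x.\,\ret\btrue) \bind \lambda n.\,\psi\,i\,\ans\,s\,n \bind \lambda(s',r).\,\ret(\kappa(s',r)), \]
where $\kappa(s',\inl q) := \inl(s',q)$ and $\kappa(s',\inr o) := \inr o$.

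The equivalence $F R\,i\,o \leftrightarrow \exists \qs\,\ans\,s.\,\interrogate{\tau' i} R \qs \ans \mathbin{;} s_0 \succ s \land \tau'\,i\,s\,\ans \hasvalue \inr o$ is then proved via the invariant that $\psi\,i\,\ans\,s\,n \hasvalue (s',v)$ holds iff $\tau\,i\,\cdot\,\ans$ performs exactly $n$ stall transitions from $s$ and then reaches a non-stall result matching $v$ with post-state $s'$. With this invariant, each extended step of $\tau'$ unpacks into the matching block of stalling rules followed by an ask rule, and conversely each ask (or final output) step of a stalling interrogation is condensed into a single extended step by taking the $\mu$-witness $n$ given by the length of the preceding stall segment. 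The main obstacle is the careful bookkeeping of states across the collapsed blocks: aligning the ``pre-stall'' state of each extended step with the ``post-ask'' state initiating the next step, and ensuring that trailing stalls before the final output are absorbed into the same final step rather than spuriously introducing an extra ask step.
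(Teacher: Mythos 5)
Your overall strategy is exactly the paper's: collapse each maximal block of stall transitions (together with the following genuine question or output) of the stalling tree into a single step of an extended tree by iterating $\lambda s'.\,\tau\,i\,s'\,\ans$ with an unbounded search, keeping the same state type and initial state. Your state bookkeeping is also the right one: reporting $\inl(s',q)$ with the post-ask state as the next extended state, and absorbing trailing stalls before the final output into the evaluation of $\tau'$ at the last extended state.

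However, the concrete realization of the search does not meet the specification of $\mu$ in the paper's interface, so the step ``use $\mu$ to find the witness $n$'' fails as written. The interface gives $\mu f \hasvalue n$ only if $f m \hasvalue \bfalse$ for every $m < n$, but your test function $\lambda n.\,\psi\,i\,\ans\,s\,n \bind \lambda x.\,\ret\,\btrue$ never returns $\bfalse$: for $m$ strictly below the actual number of stalls, $\psi\,i\,\ans\,s\,m$ reaches a stall at its base case and, by your definition, diverges (and for larger $m$ it diverges on the non-stall result, since you demand exactly $n$ stalls). Hence $\mu$ applied to this function has no value whenever at least one stall precedes the next genuine step or the final output, and your $\tau'$ diverges precisely in the situations it is meant to handle. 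The repair is routine but necessary: give the bounded chase an ``at most $n$ steps'' semantics that returns a definite ``not yet'' answer (e.g.\ $\ret\,\bfalse$) when the fuel is exhausted on a stall, so the test function yields $\bfalse$ below the witness, $\btrue$ at it, and diverges only if $\tau$ itself diverges during the chase; alternatively, dovetail over pairs $(n,k)$ using the step-indexed evaluator $\rho^k$ so that the argument passed to $\mu$ is total. With that change your invariant and the rest of the equivalence argument go through, and the construction coincides with the one the paper sketches (whose implementation details it omits).
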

\begin{proof}
	Take $\tau \of{I \to S \to \List{A} \pto \sumt {(S \times \option Q)} O}$ computing $F$ via stalling interrogations.
	We construct $\tau'\,i\,s\,\ans$
	to iterate the function $\lambda s'.\;\tau\,i\,s'\,\ans$ of type $S \pfun \sumt {(S \times \option Q)} O$.
	If $\ask(s'',\None)$ is returned, the iteration continues with $s''$.
	If $\ask (s, \Some q)$ is returned, $\tau'\,i\,s\ans$ returns $\ask (s, q)$.
	If $\out o$ is returned, $\tau'\,i\,s\,\ans$ returns $\out o$ as well.
	
	We omit the technical details how to implement this iteration process using unbounded search $\mu : (\nat \pfun \bool) \pfun \nat$.
	\qed
\end{proof}

\subsection{Proofs of Closure Properties}
\label{sec:proofs}

We here give the proofs that executing two computable functionals one after the other,
composing computable functionals,
and performing an unbounded search on a computable functional are all computable operations as stated in \Cref{sec:closure}.
We explain the tree constructions, which are always the core of the argument.
The verification of the trees are then tedious but relatively straightforward inductions, we refer to the Coq code for full detail.

\begin{proof}[of \Cref{coq:computable_bind}]
  Let $\tau_1$ compute $F_1$ maping relations $R \of{Q\to A \to \Prop}$ to relations of type $\ofbox{I \to O' \to \Prop}$,
  and $\tau_2$ compute
  $F_2$ mapping relations $R \of{Q\to A \to \Prop}$ to relations of type $\ofbox{(I \times O') \to O \to \Prop}$.

  To compute the functional mapping an oracle $R \of{Q\to A \to \Prop}$ to a computation $\lambda i o.\,\exists o' \of O'.\,F_1\,R\,i\,o' \land F_2\,R\,(i,o')\,o$
  of type $\ofbox{I \to O \to \Prop}$
  we construct a stalling tree with state type $\option{(O' \times \nat)}$ and starting state $\None$.
  The intuition is that the state $s$ remains $\None$ as long as $\tau_1$ asks questions,
  and once an output $o'$ is produced we save it and the number of questions that were asked until then in the state, which remains unchanged after.
  Then, $\tau_2$ can ask questions, but since $\ans$ contains also answers to questions of $\tau_1$, we drop the first $n$ before passing it to $\tau_2$.
  
  Formally, the tree takes as arguments the input $i$, state $s$ ans answer list $\ans$,
  and returns
  {
  \footnotesize
  \[
  \begin{cases}
    \ret (\inl (\None, \Some q)) & \text{if } s = \None, \tau_1\,i\,\ans \hasvalue \Some (\inl q) \\
    \ret (\inl (\Some (o', | \ans |), \None)) & \text{if } s = \None, \tau_1\,i\,\ans \hasvalue \Some (\inr o') \\
    \ret (\inl (\Some (o',n), \Some q)) & \text{if } s = \Some(o', n), \tau_2\,(i, o')\,(\ans \uparrow_n) \hasvalue \Some (\inl q) \\
    \ret (\inl (\Some (o',n), \Some q)) & \text{if } s = \Some(o', n), \tau_2\,(i, o')\,(\ans \uparrow_n) \hasvalue \Some (\inr o) 
  \end{cases}
  \]}
where $\ans \uparrow_n$ drops the first $n$ elements of $\ans$.
Note that formally, we use bind to analyse the values of $\tau_1$ and $\tau_2$, but just write a case analysis on paper. \qed
\end{proof}

\begin{proof}[of \Cref{coq:computable_comp}]
  Let $\tau_1$ compute $F_1$ mapping relations $R \of{Q\to A \to \Prop}$ to relations $\ofbox{X \to Y \to \Prop}$,
  and $\tau_1$ compute $F_2$ mapping relations $R \of{X\to Y \to \Prop}$ to relations $\ofbox{I \to O \to \Prop}$.
  We construct a stalling tree $\tau$ computing a functional mapping $R \of{Q\to A \to \Prop}$ to $\lambda i o.\;F_2\,(F_1 R)\,i\,o$
  of type $\ofbox{I \to O \to \Prop}$.

  Intuitively, we want to execute $\tau_2$.
  Whenever it asks a question $x$, we record it and execute $\tau_1\,x$ to produce an answer.
  Since the answer list $\ans$ at any point will also contain answers of the oracle produces for any earlier question $x'$ of $\tau_2$, we record furthermore how many questions were already asked to the oracle to compute $\tau_1 x$.
  
  As state type, we thus use $\List{(X\times Y)} \times \option{(X \times \nat)}$,
  where the first component remembers questions and answers for $\tau_2$,
  and the second component indicates whether we are currently executing $\tau_2$ (then it is $\None$), or $\tau_1$, when it is $\Some(x,n)$ to indicate that on answer list $\ans$ we need to run $\tau_1\,x\,(\ans\downarrow^n)$, where $\ans \downarrow^n$ contains the last $n$ elements of $\ans$.
  The initial state is $([\,], \None)$.

  We define $\tau$ to take as arguments an input $i$, a state $(t, z)$, and an answer list $\ans$ and return
  \[
    \begin{cases}
      \inr o & \text{if } x = \None, \tau_2\,i\,(\map\,\pi_2\,t) \triangleright \inr o \\
      \inl (t, \Some(x,0), \None) & \text{if } x = \None, \tau_2\,i\,(\map\,\pi_2\,t) \triangleright \inl x \\
      \inl (t, \Some (x, \succN n), \Some q) & \text{if } x = \Some(x,n), \tau_1\,x\,(\ans\uparrow^n) \triangleright \inl q \\
      \inl (t \app [(x,y)], \None, \None) & \text{if } x = \Some(x,n), \tau_1\,x\,(\ans\uparrow^n) \triangleright \inr y 
    \end{cases}
  \]

  Intuitively, when we are in the mode to execute $\tau_2$ and it returns an output, we return the output.
  If it returns a question $x$, we change mode and stall.
  When we are in the mode to execute $\tau_1$ to produce an answer for $x$, taking the last $n$ given answers into account and it asks a question $q$, we ask the question and indicate that now one more answer needs to be taken into account.
  If it returns an output $y$, we add the pair $[(x,y)]$ to the question answer list for $\tau_1$, change the mode back to execute $\tau_2$, and stall. \qed
\end{proof}

\begin{proof}[of \Cref{coq:computable_search}]
  We define a tree $\tau$ computing the functional mapping $R\of{(I \times \nat) \to\bool\to\Prop}$ to
  the following relation of type $\ofbox{I \to \nat \to \Prop}$:
  $\lambda i n.\; R\, (i,n)\, \btrue \land \forall m < n.\; R\, (i, m)\, \bfalse$.

  \[\tau\,i\,\ans :=
  \begin{cases}
    \ret (\inr i) & \text{if } \ans[i] = \btrue \\
    \ret (\inl (i, |\ans|)) & \text{if } \forall j.\, \ans[j] = \bfalse
  \end{cases}
  \]
  Note that a function $\mathsf{find}\,\ans$ computing the smallest $i$ such that $\ans$ at position $i$ is $\btrue$, and else returning $\None$ is easy to implement.

  Intuitively, we just ask all natural numbers as questions in order.
  On answer list $l$ with length $n$, this means we have asked $[0,\dots,n-1]$.
  We check whether for one of these the oracle returned $\btrue$, and else ask $n = |l|$.
  \qed
\end{proof}

\section{Relation to Bauer's Turing Reducibility}

\setCoqFilename{TuringReducibility.Bauer}

We show the equivalence of the modulus continuity as defined in \Cref{coq:cont_to_cont} with the order-theoretic characterisation used by Bauer~\cite{bauer2020Wisc}.
The latter notion is more sensible for functionals acting on functional relations, so we fix some
$$F:(Q \rightsquigarrow A)\to (I \rightsquigarrow O)$$
where $X\rightsquigarrow Y$ denotes the type of functional relations $X\to Y \to \Prop$.
To simplify proofs and notation, we assume extensionality in the form that we impose $R=R'$ for all $R,R':X \rightsquigarrow Y$ with $Rxy\leftrightarrow R'xy$ for all $x:X$ and $y:Y$.

To clarify potential confusion upfront, note that Bauer does not represent oracles on $\nat$ as (functional) relations but as pairs $(X,Y)$ of disjoint sets with $X,Y:\nat \to \Prop$, so his oracle computation operate on such pairs.
However, since such a pair $(X,Y)$ gives rise to a functional relation $R: \nat \rightsquigarrow \bool$ by setting $R\,n\,b := (X\,n \land b = \btrue )\lor (Y\,n \land b = \bfalse)$ and, conversely, $R: \nat \rightsquigarrow \bool$ induces a pair $(X,Y)$ via $X\,n := R\,n\,\btrue$ and $Y\,n := R\,n\,\bfalse$, Bauer's oracle functionals correspond to our specific case of functionals $\ofbox{(\nat\rightsquigarrow \bool)\to (\nat \rightsquigarrow \bool)}$.
He then describes the computable behaviour of an oracle functional by imposing continuity and a computational core operating on disjoint pairs $(X,Y)$ of enumerable sets that the original oracle functional factors through, which in our chosen approach correspond to the existence of computation trees.
So while the overall setup of our approach still fits to Bauer's suggestion, we now show that our notion of continuity is strictly stronger than his by showing the latter equivalent to modulus continuity.

Informally, Bauer's notion of continuity requires that $F$ preserves suprema, which given a non-empty directed set $:(Q\rightsquigarrow A)\to \Prop$ of functional relations requires that $F\,(\bigcup_{R\in S} R)=\bigcup_{R\in S}\, F\,R$, i.e.\ that the $F$ applied to the union of $S$ should be the union of $F$ applied to each $R$ in $S$.
Here directedness of $S$ means that for every $R_1,R_2\in S$ there is also $R_3\in S$ with $R_1,R_2\subseteq R_3$, which ensures that the functional relations included in $S$ are compatible so that the union of $S$ is again a functional relation.

\begin{lemma}[][modulus_continuous_to_Bauer_continuous]
	If $F$ is modulus-continuous, then it preserves suprema.
\end{lemma}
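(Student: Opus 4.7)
The plan is to split the proof into two inclusions and first establish monotonicity of $F$ as an auxiliary step. Throughout, I use that the union $\bigcup_{R \in S} R$ of a directed set of functional relations is itself functional: given two witnesses $R_1 q a$ and $R_2 q a'$ in the union, pick an upper bound $R_3 \supseteq R_1, R_2$ in $S$ by directedness, and use functionality of $R_3$ to conclude $a = a'$.

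First I would show that modulus continuity implies monotonicity with respect to inclusion of functional relations. Given $R \subseteq R'$ and $FRio$, apply modulus continuity to obtain a list $\qs$ with $\forall q \in \qs. \exists a. Rqa$ such that $F$ agrees on all $R''$ matching $R$ on $\qs$. Because $R$ is functional and $R \subseteq R'$, the relations $R$ and $R'$ coincide on every $q \in \qs$ for which $R$ is defined, hence $FR'io$.

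For the easy inclusion $\bigcup_{R \in S} FR \subseteq F(\bigcup_{R \in S} R)$, observe that any $R \in S$ satisfies $R \subseteq \bigcup_{R \in S} R$, so monotonicity gives $FRio \to F(\bigcup_{R \in S} R)io$. The non-trivial inclusion is $F(\bigcup_{R \in S} R) \subseteq \bigcup_{R \in S} FR$. Assume $F(\bigcup_{R \in S} R)io$ and apply modulus continuity to obtain a finite modulus $\qs$ such that every $q \in \qs$ has some answer $a_q$ in $\bigcup_{R \in S} R$, witnessed by some $R_q \in S$. The key step is then to collapse all the $R_q$ into a single element of $S$: by induction on $\qs$, using directedness of $S$ at each step and non-emptiness of $S$ at the base case, construct $R^* \in S$ with $R_q \subseteq R^*$ for every $q \in \qs$. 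Then $R^*$ agrees with $\bigcup_{R \in S} R$ on $\qs$: the inclusion $R^* \subseteq \bigcup_{R \in S} R$ is immediate, and the converse on the finite set $\qs$ follows because $R^*$ already provides a value $a_q$ at each $q \in \qs$, which by functionality of $\bigcup_{R \in S} R$ must be the unique value there. Modulus continuity then yields $FR^*io$, proving $R^* \in S$ witnesses the union on the right-hand side.

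The main obstacle I expect is the finite-directedness argument that produces the common upper bound $R^*$: it is elementary but has to be unfolded carefully (base case via non-emptiness of $S$, inductive step combining the running bound with the new $R_q$ by directedness), and it is exactly the place where every hypothesis on $S$ is used. The functionality argument needed to turn ``agreement on defined values'' into full agreement on $\qs$ is a small but essential point that must be discharged before the final application of modulus continuity. \qed
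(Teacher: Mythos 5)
Your proposal is correct and follows essentially the same route as the paper: monotonicity of $F$ from the modulus, the easy inclusion $\bigcup_{R\in S}FR\subseteq F(\bigcup_{R\in S}R)$ by monotonicity, and the hard inclusion by inducting on the finite modulus with directedness (plus non-emptiness) to collapse the witnesses into a single $R^*\in S$ that agrees with the union on the modulus. One tiny attribution slip: in the monotonicity step the equivalence $R\,q\,a\leftrightarrow R'\,q\,a$ on the modulus needs functionality of $R'$ (not of $R$), which is harmless here since all relations in play are functional.
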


\begin{proof}
	First, we observe that $F$ is monotone, given that from $F\,R\,i\,o$ we obtain some modulus $L:\List Q$ that directly induces $F\,R'\,i\,o$ for every $R'$ with $R\subseteq R'$.
	
	So now $S$ be directed and non-empty, we show both inclusions separately.
	First $\bigcup_{R\in S}\, F\,R \subseteq F\,(\bigcup_{R\in S} R)$ follows directly from monotonicity, since if $F\,R\,i\,o$ for some $R\in S$ we also have $F\,(\bigcup_{R\in S} R)\,i\,o$ given $R\subseteq \bigcup_{R\in S} R$.
	
	Finally assuming $F\,(\bigcup_{R\in S} R)\,i\,o$, let $L:\List Q$ be a corresponding modulus, so in particular $L\subseteq \mathsf{dom}(\bigcup_{R\in S} R)$.
	Using directedness (and since $S$ is non-empty), by induction on $L$ we can find $R_L\in S$ such that already $L\subseteq \mathsf{dom}(R_L)$.
	But then also $F\,R_L\,i\,o$ since $L$ is a modulus and $R_L$ agrees with $\bigcup_{R\in S} R)$ on $L$.
	\qed
\end{proof}

\begin{lemma}[][Bauer_continuous_to_continuous]
	If $F$ is preserves suprema, then it is modulous continuous.
\end{lemma}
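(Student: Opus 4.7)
The plan is to produce the modulus of continuity by writing $R$ as the directed supremum of its finite sub-relations, applying supremum preservation to pull the output of $F\,R$ back to one of these finite approximations, and using monotonicity to close up under oracle-extension.

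First I would extract monotonicity of $F$ as a consequence of supremum preservation. Given functional relations $R_1 \subseteq R_2$, the two-element set $S_0 := \{R_1, R_2\}$ is non-empty and directed (its two elements are bounded above by $R_2 \in S_0$), and its supremum is $R_2$. Preservation gives $F\,R_2 = F\,R_1 \cup F\,R_2$, hence $F\,R_1 \subseteq F\,R_2$. This is the analogue of the monotonicity observation made at the start of the previous proof, only now derived from Bauer's condition instead of being assumed via a modulus.

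Next, for a fixed input $R$, I would define, for each $L : \List Q$ with $\forall q \in L.\,\exists a.\,R\,q\,a$, the finite restriction $R_L := \lambda q\,a.\; q \in L \land R\,q\,a$. Each $R_L$ is functional because $R$ is, and each satisfies $R_L \subseteq R$. The family $S := \{R_L \mid L \subseteq \mathsf{dom}(R)\}$ is non-empty (take $L = [\,]$, yielding the empty relation) and directed: given $R_{L_1}, R_{L_2} \in S$, the concatenation $L_1 \app L_2$ still lies in $\mathsf{dom}(R)$ and $R_{L_1}, R_{L_2} \subseteq R_{L_1 \app L_2}$. Moreover $\bigcup S = R$, since any pair $(q,a)$ with $R\,q\,a$ already lies in $R_{[q]}$, while the converse inclusion is immediate from $R_L \subseteq R$. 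By the assumed extensionality of functional relations, this set-theoretic equality is literal.

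Finally, supremum preservation gives $F\,R = F(\bigcup S) = \bigcup_{L} F\,R_L$. So if $F\,R\,i\,o$ holds, there exists $L$ with $\forall q \in L.\,\exists a.\,R\,q\,a$ and $F\,R_L\,i\,o$. I claim this $L$ is the desired modulus: for any $R'$ agreeing with $R$ on $L$, every pair $(q,a) \in R_L$ satisfies $q \in L$ and $R\,q\,a$, hence $R'\,q\,a$, so $R_L \subseteq R'$. By monotonicity, $F\,R_L\,i\,o$ implies $F\,R'\,i\,o$. The main thing to keep an eye on is that each $R_L$ really is a functional relation and that directedness is strong enough to guarantee $\bigcup S$ is one as well, but both follow because restrictions and directed unions of functional relations of a fixed functional $R$ remain functional.
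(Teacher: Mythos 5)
Your proposal is correct and follows essentially the same route as the paper: monotonicity extracted from two-element directed sets, the directed family of finite restrictions $R_L$ with $\bigcup S = R$, supremum preservation to land in some $F\,R_L$, and monotonicity ($R_L \subseteq R'$ for any $R'$ agreeing with $R$ on $L$) to conclude that $L$ is a modulus. You spell out the directedness and union checks that the paper leaves implicit, but the argument is the same.
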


\begin{proof}
	Again, we first observe that $F$ is monotone, given that for $R\subseteq R'$ the (non-empty) set $S:=\{R,R'\}$ is directed and hence if $F\,R\,i\,o$ we obtain $F\,R'\,i\,o$ since $R'=\bigcup_{R\in S} R$.
	
	Now assuming $F\,R\,i\,o$ we want to find a corresponding modulus.
	Consider
	$$S:=\{R_L \mid L \subseteq \mathsf{dom}(R) \}$$
	where $R_L\,q\,a := q\in L \land R\,q\,a$, so $S$ contains all terminating finite subrelations of $R$.
	So by construction, we have $R=\bigcup_{R\in S} R$ and hence $F\,(\bigcup_{R\in S} R)\,i\,o$, thus since $F$ preserves suprema we obtain $L \subseteq \mathsf{dom}(R)$ such that already $F\,R_L\,i\,o$.
	The remaining part of $L$ being a modulus for $F\,R\,i\,o$ follows from monotonicity.
	\qed
\end{proof}

\end{document}